\title{Analysing Models for Volatility Clustering \\ with Subordinated Processes: VGSA and Beyond}
\author{
  Sourojyoti Barick\thanks{Interdisciplinary Statistical Research Unit, Indian Statistical Institute, Kolkata, India}
  \and
  Sudip Ratan Chandra\thanks{Independent Researcher, India}
}
\date{\today}
\newtheorem{theorem}{Theorem}[section]
\newtheorem{lemma}[theorem]{Lemma}
\theoremstyle{definition}
\newtheorem{definition}{Definition}[section]
\theoremstyle{remark}
\newcommand{\E}{\mathbb{E}}
\newcommand{\fst}[1]{{\left(#1 \right)}}
\newcommand{\secnd}[1]{{\left\{#1 \right\}}}
\newcommand{\thrd}[1]{{\left[#1 \right]}}
\newcommand{\distconv}{\xrightarrow{\mathcal{D}}}
\global\boolfalse{cbx:parens}}
\begin{document}
\maketitle

\begin{abstract}
This paper explores a comprehensive class of time-changed stochastic processes constructed by subordinating Brownian motion with Lévy processes, where the subordination is further governed by stochastic arrival mechanisms such as the Cox–Ingersoll–Ross (CIR) and Chan–Karolyi–Longstaff–Sanders (CKLS) processes. These models extend classical jump frameworks like the Variance Gamma (VG) and CGMY processes, allowing for more flexible modeling of market features such as jump clustering, heavy tails, and volatility persistence. We first revisit the theory of Lévy subordinators and establish strong consistency results for the VG process under Gamma subordination. Building on this, we prove asymptotic normality for both the VG and VGSA (VG with stochastic arrival) processes when the arrival process follows CIR or CKLS dynamics. The analysis is then extended to the more general CGMY process under stochastic arrival, for which we derive analogous consistency and limit theorems under positivity and regularity conditions on the arrival process. A simulation study accompanies the theoretical work, confirming our results through Monte Carlo experiments, with visualizations and normality testing (via Shapiro-Wilk statistics) that show approximate Gaussian behavior even for processes driven by heavy-tailed jumps. This work provides a rigorous and unified probabilistic framework for analyzing subordinated models with stochastic time changes, with applications to financial modeling and inference under uncertainty.
\end{abstract}

\noindent\textbf{Keywords:} Lévy subordinator, Variance Gamma (VG) process, CGMY process, Cox–Ingersoll–Ross (CIR) process, Chan–Karolyi–Longstaff–Sanders (CKLS) process.

\vspace{1em}
\noindent\textbf{AMS 2020 Subject Classification:} 91G60, 91G30, 62M09.

\section{Introduction}
Financial markets often exhibit price and rate movements that are not continuous but instead involve sudden jumps to new levels. Such discontinuities are evident in the pricing of options, where market data reflects these abrupt changes. Experts in the field have highlighted the limitations of pure diffusion-based models in explaining phenomena such as the pronounced smile effect observed in short-dated option prices. Consequently, significant efforts have been devoted to developing models that incorporate price jumps, with Poisson-type jump components in jump-diffusion models being a key solution to these challenges. The variance gamma (VG) process is a widely recognized Lévy process extensively employed in financial modeling. It is a pure jump process defined by high activity, consistent with the normal distribution, permitting an infinite number of jumps within any time interval. 

The VG process was first introduced in financial domain by \citet{Madan_Seneta}, where it was described as analogous to the model proposed by \citet{Paretz_1972}, as it is obtained by mixing the normal distribution on the variance parameter. The Variance Gamma (VG) distribution can be interpreted as observing data \( X \) originating from a normal distribution, where the variance parameter of the normal distribution follows a gamma prior distribution. 
The VG process is a pure jump process with finite variation and can be expressed as the difference of two gamma processes, one capturing price increases and the other accounting for price decreases. The probability density function (PDF) of the VG process (see \cite{Pearson1929ONTD,BIBBY2003211,kotz2001laplace}) involves the modified Bessel function and is governed by three key parameters, as depicted by the density function at time \( t \):  
\begin{align*}
f_t(x ; \sigma, \nu, \theta) & = \int_0^{\infty} normal\left(\theta g, \sigma^2 g\right) \times \operatorname{gamma}\left(\frac{t}{\nu}, \nu\right) d g \\
& = \int_0^{\infty} \frac{1}{\sigma \sqrt{2 \pi g}} \exp \left(-\frac{(x-\theta g)^2}{2 \sigma^2 g}\right) \frac{g^{t / \nu-1} e^{-g / \nu}}{\nu^{t / \nu} \Gamma(t / \nu)} d g.
\end{align*}  Due to its flexibility, the VG process is often well-suited for statistical modeling in financial markets. In \citet{only_seneta}, the VG distribution was fitted to financial market data in a more general asymmetric case, capturing real-world market behavior effectively. Additionally, in \citet{madan_car_chang_eric} and \citet{Madan_Frank}, the VG process was utilized to develop option pricing models, demonstrating its applicability across diverse financial market scenarios. Another notable feature of the VG process is its ability to include the lognormal density and the Black–Scholes (\citet{black_scholes}) formula as parametric special cases.

The VG process can also be characterized as a Lévy process, which plays a fundamental role in various scientific disciplines, such as turbulence studies, quantum field theory, network analysis, financial markets, and more (see \cite{Applebaum_2004, Sato, kyprianou2006introductory}). In the field of mathematical finance, Lévy processes have gained substantial attention for their ability to better capture the observed dynamics of financial markets compared to traditional models based solely on Brownian motion. The books of \citet{tankov2015financial} and \citet{schoutens2003levy} extensively explore the applications of Lévy processes in mathematical finance. 

In addition to the VG process, another widely studied Lévy process is the CGMY process, introduced in \citet{cgmy}. The CGMY process serves as a generalization of Kou's jump-diffusion model (\cite{kou_jump}) and the VG process, offering greater flexibility in modeling financial market dynamics. Furthermore, the CGMY process is a specific case of the more general Kobol process, which has been extensively analyzed in the works of \citet{kobol}. The CGMY process is particularly notable for its ability to capture both finite and infinite activity jumps, making it highly suitable for a broad range of applications in financial mathematics.

The previously discussed models often struggle to capture volatility clustering, a critical feature of financial markets where periods of high or low volatility tend to persist. This limitation is typically addressed by introducing random time changes, which allow the time scale to evolve stochastically. Volatility clustering is effectively modeled when the rate of time change is mean-reverting, as demonstrated by the Cox–Ingersoll–Ross (CIR) process (\cite{cox1985}), which ensures that extreme deviations revert to equilibrium over time. To incorporate this feature into the Variance Gamma framework, the Variance Gamma with Stochastic Arrival (VGSA) model was proposed in \citet{svlp}. This model introduces stochastic time changes into the Variance Gamma process, significantly enhancing its ability to capture the temporal dependencies and clustering patterns commonly observed in financial markets. In addition to VGSA, \citet{svlp} also introduced two other stochastic volatility Lévy processes: the NIGSV and CGMYSV models, further expanding the toolkit for modeling stochastic volatility in financial applications.

Even though it is used extensively, the explicit distributional properties of the VGSA process are not clearly understood. 
In this paper, we aim to explicitly derive the mean and variance of the Variance Gamma with Stochastic Arrival (VGSA) model, providing an explicit understanding of its distributional properties. We will also explore the asymptotic behavior of the VGSA model, which reveals important insights into its long-term characteristics and dynamics. Additionally, these derivations will serve as a foundation for examining more general cases, enabling us to study broader asymptotic properties that extend beyond the specific VGSA framework. To complement the theoretical results, we will also perform simulations to validate and illustrate the derived properties, offering a comprehensive analysis of the VGSA model's behavior under various conditions.
\subsection{Plan of the Paper}

This paper is organized as follows. We begin with a brief overview of L\'evy processes in Section~\ref{sec:prem_theo}, highlighting fundamental properties relevant to modeling jump-driven financial dynamics. This includes definitions and examples of commonly used subordinators, such as the Gamma and CGMY processes. We also introduce several stochastic arrival (SA) processes and establish their strict positivity properties.

In Section~\ref{sec:VGSA}, we focus on establishing the strong consistency of subordinators and their associated subordinated processes. We begin with the Variance Gamma (VG) process and prove its strong consistency when constructed using a deterministic Gamma subordinator, which forms the foundation for analyzing more complex time-changed models. We further derive the asymptotic normality of the VG process. Subsequently, we analyze the VGSA process, rigorously proving both its strong consistency and asymptotic normality. Additionally, we present theoretical results for the CIR and CKLS processes that underpin these constructions.

In the following section~\ref{sec:SBSA}, we examine subordinated Brownian motions of the form $X(t) = \theta S(t) + \sigma W(S(t))$, where $S(t)$ is a L\'evy subordinator and $W(t)$ is a standard Brownian motion. We derive asymptotic properties for this class of processes.
We then extend this framework by introducing stochastic arrival mechanisms governed by positive-valued processes such as the Cox–Ingersoll–Ross (CIR) and the Chan–Karolyi–Longstaff–Sanders (CKLS) processes. We denote this broader class as L\'evy processes subordinated by SA processes. For these, we establish sufficient conditions for strong consistency and asymptotic normality, extending our theoretical results to include CGMY-type jump processes.

Section~\ref{sec:simu_res} presents an extensive simulation study to support our theoretical findings. This includes numerical illustrations demonstrating strong consistency and distributional convergence for the VGSA and CGMY-SA processes under both CIR and CKLS stochastic clocks. Histograms, kernel density estimates, and normality tests are used to visualize the effects of stochastic arrival on the distributional behavior of the subordinated processes.

We conclude by highlighting the theoretical and practical implications of our results, and suggest directions for future research including parameter estimation, model calibration, and financial applications. All extended proofs are provided in Appendix for completeness.

\section{Preliminaries}\label{sec:prem_theo}

This section provides an introduction to Lévy processes, highlighting essential definitions and properties that form the basis for their use in stochastic modeling. For a detailed exposition, readers are referred to \citet{Sato, hirsa2024computational}. We consider a stochastic process $L = \{L_t\}_{t \geq 0}$, defined on a filtration $\left(\Omega, \{\mathcal{F}_t\}_{t \geq 0}, \mathcal{F}, \mathbb{P}\right)$, and adapted to the filtration $\{\mathcal{F}_t\}_{t \geq 0}$.

\begin{definition}[Lévy Process, \cite{tankov2015financial,Marshall_Thesis}]
    A Lévy process is a real-valued, adapted, càdlàg process (i.e., its sample paths are almost surely right-continuous with left limits) satisfying $L_0 = 0$ almost surely, and the following properties:
    \begin{itemize}
        \item \textbf{L1 (Independent increments):} For any $0 \leq s < t \leq T$, $L_t - L_s$ is independent of $\mathcal{F}_s$.
        \item \textbf{L2 (Stationary increments):} For any $s, t \geq 0$, the distribution of $L_{t+s} - L_t$ depends only on $s$.
        \item \textbf{L3 (Stochastic continuity):} For every $t \geq 0$ and $\epsilon > 0$, 
        \[
        \lim_{s \to t} \mathbb{P}\left(\left|L_t - L_s\right| > \epsilon\right) = 0.
        \]
    \end{itemize}
\end{definition}

\begin{definition}[Infinite Divisibility, \citet{tankov2015financial}]
    A probability distribution $G$ on $\mathbb{R}$ is called infinitely divisible if, for any integer $m \geq 2$, there exist $m$ independent and identically distributed (i.i.d.) random variables $X_1, \ldots, X_m$ such that the sum $X_1 + \cdots + X_m$ has distribution $G$.
\end{definition}

Lévy processes are closely connected to the concept of infinite divisibility. The following lemma formalizes this relationship.
\begin{lemma}[Infinite Divisibility and Lévy Processes, \cite{tankov2015financial}]
    Let $\{L_t\}_{t \geq 0}$ be a Lévy process. Then, for any $t \geq 0$, $L_t$ follows an infinitely divisible distribution. Conversely, if $G$ is an infinitely divisible distribution, there exists a Lévy process $\{L_t\}$ such that the distribution of $L_1$ is $G$.
\end{lemma}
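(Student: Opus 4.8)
The plan is to prove the two implications separately, since they are of quite different character. For the forward direction, fix $t \ge 0$ and an integer $m \ge 2$, and write
\[
L_t = \sum_{k=1}^m \left( L_{kt/m} - L_{(k-1)t/m} \right).
\]
By property L1 (independent increments) the summands are mutually independent, and by L2 (stationary increments) each has the same distribution as $L_{t/m}$. Hence $L_t$ is a sum of $m$ i.i.d.\ random variables for every $m \ge 2$, which is precisely the definition of infinite divisibility. This direction is essentially immediate.

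For the converse, the approach is via characteristic functions together with the Kolmogorov extension theorem. Given an infinitely divisible law $G$, let $\phi$ denote its characteristic function. The structural input is that $\phi$ never vanishes and admits a L\'evy--Khintchine representation $\phi(u) = e^{\psi(u)}$ with
\[
\psi(u) = i b u - \tfrac{1}{2} a u^2 + \int_{\re} \left( e^{iux} - 1 - iux\,\mathds{1}_{\{|x| \le 1\}} \right) \Pi(dx),
\]
for some $b \in \re$, $a \ge 0$, and a L\'evy measure $\Pi$. The key consequence is that for every $t \ge 0$ the function $u \mapsto e^{t\psi(u)}$ is again a characteristic function (being a limit of genuine characteristic functions, with continuity at $0$ clear), so one may define a probability measure $\mu_t$ on $\re$ with characteristic function $e^{t\psi(u)}$, and by construction $\mu_1 = G$.

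Next I would build the finite-dimensional distributions of the candidate process: for $0 = t_0 < t_1 < \dots < t_n$, declare the increments over the intervals $[t_{j-1}, t_j]$ to be independent with respective laws $\mu_{t_j - t_{j-1}}$. Since $e^{s\psi}\cdot e^{t\psi} = e^{(s+t)\psi}$, equivalently $\mu_s * \mu_t = \mu_{s+t}$, these specifications are consistent under refinement of partitions, so the Kolmogorov extension theorem yields a process $\{L_t\}$ with independent and stationary increments and $L_1 \sim G$. Stochastic continuity (L3) follows because $e^{t\psi(u)} \to 1$ as $t \to 0$ for each $u$, so $\mu_t \Rightarrow \delta_0$ and hence $L_t - L_s \to 0$ in probability as $s \to t$. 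The final step is to pass to a c\`adl\`ag modification: using stochastic continuity together with a regularization argument (restricting to rational times, showing one-sided limits along the rationals exist almost surely via a maximal inequality for processes with independent increments, and defining the modification as the right limit), one obtains a c\`adl\`ag process with the same finite-dimensional distributions, which is then the required L\'evy process.

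I expect the main obstacle to be exactly this last point --- producing the c\`adl\`ag modification --- since it is the only part that is not a formal manipulation of characteristic functions; it requires either citing a general regularization theorem for independent-increment processes or reproducing the martingale-type inequality that controls path oscillations along a countable dense set. A secondary point worth flagging is that invoking the L\'evy--Khintchine form of $\psi$ silently imports a nontrivial classical theorem; an alternative that avoids it is to construct the process directly via the L\'evy--It\^o decomposition (a Brownian motion with drift plus an independent compensated Poisson integral against $\Pi$), which manufactures a c\`adl\`ag process from the outset at the cost of more bookkeeping.
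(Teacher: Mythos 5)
The paper gives no proof of this lemma; it simply cites \citet{tankov2015financial}, Proposition~3.1. Your argument is the standard one found there: the forward direction via the telescoping decomposition of $L_t$ into $m$ i.i.d.\ increments is complete and correct, and your converse — building $\mu_t$ from $e^{t\psi}$, invoking Kolmogorov extension for the finite-dimensional laws, and then passing to a c\`adl\`ag modification — is the textbook route, with the two genuinely nontrivial inputs (the L\'evy--Khintchine representation and the path-regularization step) correctly identified and flagged rather than glossed over. Nothing to object to.
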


\begin{proof}
    See \citet{tankov2015financial}, Proposition 3.1 for the detailed proof.
\end{proof}

A fundamental result for Lévy processes is the Lévy-Khintchine formula, which provides the characteristic function of a Lévy process and links its structure to its characteristic exponent.

\begin{theorem}[Lévy-Khintchine Formula, \cite{schoutens2003levy}]\label{theorem:levy_khint}
    For a Lévy process $\{L_t\}_{0 \leq t \leq T}$, the characteristic function is given by:
    \begin{align}
        \mathbb{E}\left[\mathrm{e}^{i u L_t}\right] = \mathrm{e}^{t \psi(u)} = \exp\left[t\left(i b u - \frac{c u^2}{2} + \int_{\mathbb{R}} \left(\mathrm{e}^{i u x} - 1 - i u x \mathbf{1}_{\{|x| < 1\}}\right) v(\mathrm{d}x)\right)\right],\label{eqn:levy_khint}
    \end{align}
    where $\psi(u)$ is the characteristic exponent, \( b \in \mathbb{R} \) is the drift term, \( c \geq 0 \) represents the Gaussian component, and \( v \) is the Lévy measure, which satisfies:
    \[
    \int_{\mathbb{R} \setminus \{0\}} (1 \wedge x^2) v(\mathrm{d}x) < \infty.
    \]
\end{theorem}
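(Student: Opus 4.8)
The plan is to prove the representation in two stages: first establish the exponential form $\mathbb{E}[e^{iuL_t}] = e^{t\psi(u)}$, and then identify the characteristic exponent $\psi$ via the Lévy--Itô decomposition.

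First I would set $\phi_t(u) := \mathbb{E}[e^{iuL_t}]$ and use the independence (L1) and stationarity (L2) of increments to derive the multiplicative functional equation $\phi_{t+s}(u) = \phi_t(u)\,\phi_s(u)$ with $\phi_0(u) = 1$. Stochastic continuity (L3) gives $L_s \to L_t$ in probability as $s \to t$, and since $|e^{iuL_s}| = 1$ the bounded convergence theorem yields continuity of $t \mapsto \phi_t(u)$. To pass from this Cauchy-type equation to $\phi_t(u) = e^{t\psi(u)}$ I must first rule out vanishing: writing $\phi_t(u) = \phi_{t/n}(u)^n$ and noting $\phi_{t/n}(u) \to 1$ as $n \to \infty$, the factor $\phi_{t/n}(u)$ is nonzero for all large $n$, hence $\phi_t(u) \neq 0$ for every $t$ and $u$. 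A continuous, nonvanishing, multiplicative function $f:[0,\infty)\to\mathbb{C}\setminus\{0\}$ with $f(0)=1$ is necessarily of the form $f(t) = e^{t\psi(u)}$, where $\psi(u)$ is the continuous branch of the logarithm of $\phi_1(u)$ obtained by lifting along $[0,\infty)$; this $\psi$ is the characteristic exponent.

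The analytic core is identifying $\psi$ explicitly. Here I would invoke the Lévy--Itô decomposition: for Borel sets $A$ bounded away from the origin, the jump counting measure $N(t,A) := \#\{s \le t : \Delta L_s \in A\}$, with $\Delta L_s := L_s - L_{s-}$, is a Poisson random measure with intensity $t\,v(\mathrm{d}x)$, and one shows that $v$ satisfies $\int_{\mathbb{R}\setminus\{0\}}(1 \wedge x^2)\,v(\mathrm{d}x) < \infty$. The process then decomposes, in distribution, as an independent sum of a deterministic drift $bt$, a Gaussian component $\sqrt{c}\,B_t$, a compensated integral $\int_{|x|<1} x\,\big(N(t,\mathrm{d}x) - t\,v(\mathrm{d}x)\big)$ of small jumps, and an ordinary integral $\int_{|x|\ge 1} x\,N(t,\mathrm{d}x)$ of large jumps. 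Computing the characteristic function of each independent piece — contributing $ibu$ from the drift, $-\tfrac{cu^2}{2}$ from the Gaussian part, $\int_{|x|<1}(e^{iux} - 1 - iux)\,v(\mathrm{d}x)$ from the compensated small jumps (the $-iux$ arising precisely from the compensator), and $\int_{|x|\ge 1}(e^{iux} - 1)\,v(\mathrm{d}x)$ from the large jumps — and adding these yields $\psi(u) = ibu - \tfrac{cu^2}{2} + \int_{\mathbb{R}}\big(e^{iux} - 1 - iux\,\mathbf{1}_{\{|x|<1\}}\big)\,v(\mathrm{d}x)$, which is exactly \eqref{eqn:levy_khint}.

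I expect the main obstacle to be the construction and control of the jump measure underlying the Lévy--Itô decomposition — in particular, proving that the small jumps, although generally not absolutely summable, admit an $L^2$-convergent compensated sum, so that $\int_{\mathbb{R}\setminus\{0\}}(1 \wedge x^2)\,v(\mathrm{d}x) < \infty$ and the truncated integral in \eqref{eqn:levy_khint} converges. An alternative route that sidesteps the full decomposition is to approximate $L$ by the compound Poisson processes built from jumps of size at least $\epsilon$ (suitably compensated), establish convergence as $\epsilon \downarrow 0$, and pass to the limit in the characteristic functions; the bookkeeping for the compensation terms remains the delicate point in that approach as well.
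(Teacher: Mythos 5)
The paper does not prove this theorem; it simply cites \citet{papa_2008} (Theorem 4.9), whose argument is exactly the route you sketch: the semigroup/continuity argument giving $\phi_t(u)=e^{t\psi(u)}$ via the distinguished logarithm, followed by identification of $\psi$ through the Lévy--Itô decomposition. Your outline is correct and correctly locates the real analytic work (non-vanishing of $\phi_t$, and the $L^2$-convergence of the compensated small-jump sums that yields $\int(1\wedge x^2)\,v(\mathrm{d}x)<\infty$), so there is nothing to flag beyond the fact that, like the paper, you are deferring the Lévy--Itô decomposition itself to a standard reference rather than proving it.
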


\begin{proof}
    See \citet{papa_2008}, Theorem 4.9 for the detailed proof.
\end{proof}

Every infinitely divisible distribution corresponds to a Lévy triplet \(\fst{b, c, v\mathrm{d}x)}\).
\begin{theorem}[See \cite{statland1965local}]\label{thm:levy_strong}
    Let L be a Lévy process  with triple $(b, c, v)$. Then, \[p:=\mathrm{P}\left\{\lim \sup _{t \rightarrow \infty}\left\|L_t / t\right\|<\infty\right\}\] is zero or one. And $p=1$ if and only if $\int_{\mathbb{R}}|x|\mathbf{1}_{\secnd{|x| \geq 1}}  v(\mathrm{~d} x)<\infty$. And when $p=1$,

$$
\lim _{t \rightarrow \infty} \frac{L_t}{t}=b+\int_{\mathbb{R}}x\mathbf{1}_{|x| \geq 1}  v(\mathrm{~d} x) \quad \text { a.s. }
$$
\end{theorem}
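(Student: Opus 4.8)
The plan is to reduce the statement to the classical strong law of large numbers for i.i.d.\ sums via the Lévy--Itô decomposition, treating the ``small-jump'' and ``big-jump'' parts separately. Write $L = L' + L''$, where $L''$ is the compound Poisson process collecting the jumps of modulus $\geq 1$ (so $L''_t = \sum_{i=1}^{\cnt{t}} Y_i$, with $\{\cnt{t}\}$ a Poisson process of rate $\lambda := v(\{|x|\geq 1\})$ and $Y_i$ i.i.d.\ with law $\lambda^{-1}v(\cdot)\mathbf{1}_{\{|\cdot|\geq1\}}$), and $L'$ is an independent Lévy process with jumps bounded by $1$. A bounded-jump Lévy process has moments of every order, so $\E\|L'_1\| < \infty$; differentiating the Lévy--Khintchine exponent in \eqref{eqn:levy_khint} at $u = 0$ (equivalently, using that $L'_t - bt$ is a martingale) gives $\E[L'_1] = b$. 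Kolmogorov's SLLN applied to the i.i.d.\ unit increments $L'_n - L'_{n-1}$ yields $L'_n/n \to b$ a.s., and a maximal-inequality argument (Doob's $L^2$-inequality for the martingale part of $L'$, whose terminal second moment $c + \int_{|x|<1}x^2\,v(\mathrm{d}x)$ is finite, gives $\E[\sup_{0\le s\le1}\|L'_s\|] < \infty$, so that $\p(\sup_{0\le s\le1}\|L'_{n+s}-L'_n\| > n\varepsilon)$ is summable in $n$) upgrades this to $L'_t/t \to b$ a.s.\ along the continuous parameter. Hence $L'_t/t$ always converges, and the dichotomy is governed entirely by $L''$.

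Suppose first $\int_{\re}|x|\mathbf{1}_{\{|x|\geq1\}}\,v(\mathrm{d}x) < \infty$, i.e.\ $\E\|Y_1\| < \infty$. Then $\cnt{t}/t \to \lambda$ a.s., and the SLLN for $\{Y_i\}$ together with $\cnt{t} \to \infty$ gives $\cnt{t}^{-1}\sum_{i=1}^{\cnt{t}} Y_i \to \E[Y_1]$ a.s., so $L''_t/t \to \lambda\,\E[Y_1] = \int_{\re}x\,\mathbf{1}_{\{|x|\geq1\}}\,v(\mathrm{d}x)$ a.s. Adding the two limits, $L_t/t \to b + \int_{\re}x\,\mathbf{1}_{\{|x|\geq1\}}\,v(\mathrm{d}x)$ a.s.; in particular $\limsup_{t\to\infty}\|L_t/t\| < \infty$ almost surely, so $p = 1$.

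Now suppose $\int_{\re}|x|\mathbf{1}_{\{|x|\geq1\}}\,v(\mathrm{d}x) = \infty$, i.e.\ $\E\|Y_1\| = \infty$. Then for every $c > 0$ one has $\sum_{n\geq1}\p(\|Y_n\| > cn) = \infty$, so by the second Borel--Cantelli lemma (the $Y_n$ are independent) $\|Y_n\| > cn$ infinitely often a.s.; intersecting over $c \in \mathbb{N}$ gives $\limsup_n \|Y_n\|/n = \infty$ a.s. Let $\tau_n$ be the $n$-th jump time; since $\tau_n$ is a sum of $n$ i.i.d.\ exponentials, $\tau_n/n \to 1/\lambda$ a.s., and a jump of size $\|Y_n\| = \|L''_{\tau_n} - L''_{\tau_n-}\|$ forces $\max\!\left(\|L''_{\tau_n}\|, \|L''_{\tau_n-}\|\right) \geq \|Y_n\|/2$, so
\[
\limsup_{t\to\infty}\frac{\|L''_t\|}{t} \;\geq\; \limsup_{n\to\infty}\frac{\|Y_n\|/2}{\tau_n} \;=\; \limsup_{n\to\infty}\frac{\|Y_n\|}{2n}\cdot\frac{n}{\tau_n} \;=\; \infty \quad\text{a.s.}
\]
Since $\|L'_t\|/t$ is bounded (it converges to $b$), $\limsup_{t\to\infty}\|L_t/t\| = \infty$ a.s., so $p = 0$. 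The two cases together establish the zero--one dichotomy for $p$, its characterization via the integral condition, and the value of the limit when $p = 1$.

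The routine ingredients are the two SLLNs, Doob's maximal inequality, and the Poisson bookkeeping. The one step that genuinely needs care is the divergent case: turning ``infinitely many large jumps'' into $\limsup_{t\to\infty}\|L_t/t\| = +\infty$, rather than merely a positive value, which is exactly why one must run Borel--Cantelli at every scale $c$ and use that a big jump makes the path large either just before or just after it. I expect that to be the main obstacle.
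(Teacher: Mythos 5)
Your argument is correct, and it is worth noting that the paper itself offers no proof of this statement --- it is quoted from the literature (\cite{statland1965local}; the result is also Theorem 36.5 in Sato) --- so there is no in-paper argument to compare against. Your route is the standard self-contained one: split $L$ via the L\'evy--It\^o decomposition into the bounded-jump part $L'$ (drift $b$ plus an $L^2$ martingale) and the compound Poisson part $L''$ of big jumps, prove $L'_t/t \to b$ by Kolmogorov's SLLN along integers plus a maximal-inequality interpolation, and let the big-jump part decide the dichotomy. The finite case is routine, and you correctly identify and handle the only delicate point in the divergent case: running the second Borel--Cantelli lemma at every scale $c$ to get $\limsup_n \|Y_n\|/n = \infty$, and using $\max(\|L''_{\tau_n}\|, \|L''_{\tau_n-}\|) \ge \|Y_n\|/2$ so that a huge jump forces the path to be large just before or just after it; combined with $\tau_n/n \to 1/\lambda$ this gives $\limsup_t \|L''_t\|/t = \infty$, hence $p=0$. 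One small imprecision to repair: you conclude summability of $\p\bigl(\sup_{0\le s\le 1}\|L'_{n+s}-L'_n\| > n\varepsilon\bigr)$ from finiteness of the \emph{first} moment of the supremum, but Markov's inequality then only yields an $O(1/n)$ bound, which is not summable. You should instead use the \emph{second} moment, $\E\bigl[\sup_{0\le s\le 1}\|L'_s\|^2\bigr] < \infty$, which Doob's $L^2$ maximal inequality (the tool you already invoke) does provide since $c + \int_{|x|<1} x^2\, v(\mathrm{d}x) < \infty$; Chebyshev then gives the required $O(1/n^2)$ tail. With that one-line fix the proof is complete, including the constructive verification that $p$ takes only the values $0$ or $1$ in the two exhaustive cases, which dispenses with any appeal to an abstract zero--one law.
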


 We now focus on a specific class of Lévy processes. A subordinator is a special type of Lévy process that is non-decreasing and takes values in \([0, \infty)\).

\begin{definition}[Subordinator, \cite{papa_2008}]
    A subordinator is a Lévy process that is almost surely non-decreasing. For a Lévy process \(L_t\) to be a subordinator, its Lévy triplet must satisfy the following conditions: 
    \[
    v(-\infty, 0) = 0, \quad c = 0, \quad \int_{(0,1)} x v(\mathrm{d}x) < \infty, \quad \text{and} \quad \gamma = b - \int_{(0,1)} x v(\mathrm{d}x) > 0.
    \]
    Under these conditions, the Lévy-Khintchine formula for a subordinator simplifies to:
    \begin{align}
        \mathbb{E}\left[\mathrm{e}^{i u L_t}\right] = \exp\left[t\left(i u \gamma + \int_{(0, \infty)}\left(\mathrm{e}^{i u x} - 1\right) v(\mathrm{d}x)\right)\right]. \label{eqn:sub_levy_char} 
    \end{align}
\end{definition}

% \subsection{The Variance Gamma Process and Its Subordinator Representation}

The Variance Gamma (VG) process serves as an excellent example of a process built using a subordinator, specifically a Brownian motion evaluated at a Gamma subordinator. The VG process can be expressed as:
\begin{align}
  VG(t) = \theta G(t) + \sigma W(G(t)),\label{eqn:VG_eqn}  
\end{align}

where \( G(t) \sim \Gamma\left(\frac{t}{\nu}, \nu\right) \) is a Gamma subordinator, and \( W(t) \) is a standard Wiener process. The characteristic function of the VG process is given by:
\begin{align}
    \mathbb{E}\left(e^{i u VG(t)}\right) = \left(\frac{1}{1 - i u \theta \nu + \frac{\sigma^2 u^2 \nu}{2}}\right)^{\frac{t}{\nu}}.\label{eqn:vg_char}
\end{align}

The derivation of the characteristic function for the Variance Gamma (VG) process relies on its conditional distribution and the tower property of expectations (see \cite{hirsa2024computational}). A natural generalization of the VG process is the CGMY process, which belongs to the class of pure-jump Lévy models widely used in financial markets. The CGMY process enhances flexibility in modeling jump dynamics and is regarded as one of the most well-structured Lévy processes for financial applications. It is defined through its Lévy measure as follows:
\begin{align}
   \nu(dx) = C \cdot \frac{e^{-G|x|}}{|x|^{1+Y}} \mathbf{1}_{x<0} \, dx + C \cdot \frac{e^{-M x}}{x^{1+Y}} \mathbf{1}_{x>0} \, dx,\label{eqn:cgmy_eqn} 
\end{align}
where:
\begin{itemize}
    \item $C > 0$ is the overall activity level,
    \item $G, M > 0$ are exponential tempering parameters controlling left and right tails, respectively,
    \item $Y < 2$ governs the activity of small jumps.
\end{itemize}
The CGMY process, introduced by \citet{cgmy}, has become a cornerstone in the modeling of asset returns, particularly in the context of option pricing and quantitative finance. Structurally, it shares the same representation as the Variance Gamma (VG) process, namely \( \mathrm{CGMY}(t) = \theta S(t) + \sigma W(S(t)) \), where \( W(\cdot) \) is a standard Brownian motion and \( S(t) \) is an increasing Lévy process acting as the stochastic clock. While the VG process uses a Gamma subordinator, the CGMY model generalizes this by employing a more flexible Lévy subordinator capable of capturing a broader range of jump behaviors (see \citet{madan2006representing}).

Despite its versatility, the standard CGMY model falls short in capturing temporal market features such as trade clustering and stochastic volatility. To overcome this, the CGMY-SA framework was proposed (see \cite{svlp}), where the Lévy process is further time-changed by a stochastic arrival (SA) process—typically a positive-valued diffusion like the CIR or CKLS model. This added layer of stochasticity allows the model to account for time-varying trading intensity and volatility clustering, making it more suitable for high-frequency and empirical financial data. In this paper, we focus on such time-changed CGMY models and analyze their statistical and asymptotic properties.

Before delving into the CGMY-SA framework, we first introduce candidate models for the stochastic arrival process. One of the most widely used mean-reverting SA processes is the Cox-Ingersoll-Ross (CIR) model, which has been employed to incorporate clustering effects in pure-jump Lévy models, as discussed in \cite{svlp}. For the CIR process to serve as a valid stochastic clock in time-changed models, it must remain strictly positive to ensure that the integrated process is well-defined. Specifically, the time-change is introduced via the integrated CIR (ICIR) process, given by:
\begin{align}
    T(t) = \int_0^t y(u) \, du,\label{eqn:icir_eqn}
\end{align}
where \( y(u) \) evolves according to the CIR dynamics:
\begin{align}
    d y(u) = \kappa (\eta - y(u)) \, du + \lambda \sqrt{y(u)} \, dW_u. \label{eqn:cir_eqn}
\end{align}
Under the Feller condition \( 2\kappa\eta > \lambda^2 \), the boundary at zero becomes inaccessible, ensuring that \( y(u) \) remains strictly positive almost surely (see \cite{andersenPit, Cox1976, Karlin1981ASC}). However, to explore a more general class of stochastic arrival (SA) processes, one can consider the Chan–Karolyi–Longstaff–Sanders  (CKLS) model (see \cite{ckls1992}), which extends the CIR dynamics and is defined by the following stochastic differential equation:
\begin{align}
    d y(u) = \kappa (\eta - y(u)) \, du + \lambda y(u)^\alpha \, dW_u. \label{eqn:ckls_diff_eqn}
\end{align}
If we restrict the exponent to \( \alpha \in (0.5, 1] \), then the process remains strictly positive under suitable parameter conditions. The corresponding integrated process, which serves as the stochastic clock, is given by:
\begin{align}
    \mathcal{Y}(t) = \int_0^t y(u) \, du. \label{eqn:iergodic}
\end{align}
Thus, the CKLS process provides a flexible and theoretically sound alternative to the CIR process for modeling stochastic arrival times in subordinated Brownian motion or Lévy frameworks.

\section{Main Theoretical Results on the VGSA Framework}\label{sec:VGSA}
To understand the construction of the Variance Gamma with Stochastic Arrival (VGSA) process, it is essential to recognize that the Gamma process \( G(t) \) is an almost surely increasing Lévy process. In the standard Variance Gamma (VG) model, this Gamma process is typically parameterized as \( G(t) \sim \Gamma\left(\frac{t}{\nu}, \nu\right) \), where \( \nu \) controls the variance of the time-change. To introduce greater flexibility and accommodate features such as volatility clustering observed in financial data, one can generalize this framework by replacing the deterministic time index \( t \) with a stochastic, yet monotone, time-change process \( T(t) \), resulting in a subordinated Gamma process \( G(T(t)) \sim \Gamma\left(\frac{T(t)}{\nu}, \nu\right) \). 

For this construction to be well-defined, the time-change process \( T(t) \) must also be almost surely increasing. While this can be trivially satisfied by choosing a deterministic, strictly increasing function \( f(t) \), a more natural and empirically motivated choice is to use a stochastic process that is almost surely monotone—such as the integrated Cox–Ingersoll–Ross (ICIR) process. The ICIR process introduces a mean-reverting mechanism into the arrival rate of trading activity and ensures positivity under the Feller condition \( 2\kappa\eta > \lambda^2 \). The resulting VGSA process is then given by:
\begin{align}
    VGSA(t) = \theta G(T(t)) + \sigma W(G(T(t))), \quad \text{where} \quad G(T(t)) \sim \Gamma\left(\frac{T(t)}{\nu}, \nu\right), \label{eqn:vgsa_eqn}
\end{align}
with \( T(t) \) defined as the integrated CIR process in equations~\eqref{eqn:cir_eqn} and~\eqref{eqn:icir_eqn}.

This formulation preserves the essential property of monotonicity required for subordinating the Gamma process, while simultaneously enriching the model with stochastic features capable of capturing mean reversion and temporal clustering—two stylized facts commonly observed in financial time series. As a result, the VGSA process retains the pure-jump nature of the original Variance Gamma (VG) model while embedding a dynamic time-change structure through a stochastic arrival mechanism. This hybrid framework enables the model to better reflect the complex interplay between trading intensity and return variability. To understand how the integrated CIR (ICIR) process influences this stochastic time-change, it is essential to examine the statistical properties—particularly the mean and variance—of the ICIR process, which govern the temporal evolution and scaling behavior of the subordinated system.

Before establishing the strong consistency and asymptotic normality results for the VGSA process, we first revisit foundational properties of subordinator processes, the VG model, and the CIR/ICIR processes. These results will serve as building blocks for the subsequent asymptotic theory. Let us begin by considering a general subordinator process \( X = \{X_t\}_{t \geq 0} \), characterized by the Lévy–Khintchine representation in \eqref{eqn:sub_levy_char}. Assuming the first moment \( \mathbb{E}(X_1) \) exists, we have the following almost sure convergence:

\begin{theorem}\label{theorem:subordi_as}
    Let \( X = \{X_t\}_{t \geq 0} \) be a subordinator process with characteristic function as given in \eqref{eqn:sub_levy_char}, and suppose that \( \mathbb{E}(X_1) < \infty \). Then, the following strong law of large numbers holds:
    \begin{align}
        \lim_{t \to \infty} \frac{X_t}{t} \xrightarrow{a.s.} \mathbb{E}(X_1), \label{eqn:as_levy_sub}
    \end{align}
    where the expected value is given by:
    \begin{align*}
        \mathbb{E}(X_1) = \gamma + \int_{0}^\infty x \, v(dx),
    \end{align*}
    with \( \gamma \) denoting the drift coefficient and \( v(\cdot) \) the Lévy measure associated with the process.
\end{theorem}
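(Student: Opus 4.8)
The plan is to obtain this as a direct corollary of the strong law for Lévy processes recorded in Theorem~\ref{thm:levy_strong}, after first translating the moment hypothesis $\E(X_1) < \infty$ into the Lévy-measure tail condition required there. Recall that a subordinator $X$ is a Lévy process with triplet $(b, 0, v)$ in which $v$ charges only $(0,\infty)$, $\int_{(0,1)} x\, v(dx) < \infty$, and the drift is $\gamma = b - \int_{(0,1)} x\, v(dx) \geq 0$. So the whole argument is: (i) identify $\E(X_1)$ with $\gamma + \int_{(0,\infty)} x\, v(dx)$ and note this is finite iff the large-jump tail integral is; (ii) feed that into Theorem~\ref{thm:levy_strong}; (iii) rewrite the resulting constant in terms of $\gamma$ and $v$.

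\textbf{Step 1 (moment identity).} I would first establish
\[
\E(X_1) = \gamma + \int_{(0,\infty)} x\, v(dx) \in (0,\infty],
\]
with the right-hand side finite precisely when $\E(X_1) < \infty$. The cleanest route is via the Laplace exponent: from the subordinator form \eqref{eqn:sub_levy_char}, $\E(e^{-sX_1}) = e^{-\Phi(s)}$ with $\Phi(s) = \gamma s + \int_{(0,\infty)}(1 - e^{-sx})\, v(dx)$ for $s \geq 0$, and $\Phi$ is finite and continuous with $\Phi(0) = 0$. Writing $\tfrac{1-\E(e^{-sX_1})}{s} = \tfrac{1-e^{-\Phi(s)}}{\Phi(s)}\cdot\tfrac{\Phi(s)}{s}$ and letting $s \downarrow 0$, the first factor tends to $1$ while $\Phi(s)/s \uparrow \gamma + \int_{(0,\infty)} x\, v(dx)$ by monotone convergence (the map $s \mapsto (1-e^{-sx})/s$ is decreasing for each fixed $x > 0$); on the probability space, $(1-\E(e^{-sX_1}))/s \uparrow \E(X_1)$ by monotone convergence as well, since $X_1 \geq 0$. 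Equating the two limits gives the identity. Because $\int_{(0,1)} x\, v(dx) < \infty$ automatically, this shows $\E(X_1) < \infty \iff \int_{[1,\infty)} x\, v(dx) < \infty$.

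\textbf{Steps 2--3 (apply Theorem~\ref{thm:levy_strong} and reconcile constants).} By Step 1, $\int_{\re} \abs{x}\,\mathbf{1}_{\secnd{\abs{x} \geq 1}}\, v(dx) = \int_{[1,\infty)} x\, v(dx) < \infty$, so Theorem~\ref{thm:levy_strong} applies and gives $p = 1$ together with $\lim_{t\to\infty} X_t/t = b + \int_{\re} x\, \mathbf{1}_{\secnd{\abs{x} \geq 1}}\, v(dx)$ a.s.; here $\|X_t/t\| = X_t/t$ since $X_t \geq 0$. Substituting $b = \gamma + \int_{(0,1)} x\, v(dx)$ yields
\[
b + \int_{[1,\infty)} x\, v(dx) = \gamma + \int_{(0,1)} x\, v(dx) + \int_{[1,\infty)} x\, v(dx) = \gamma + \int_{(0,\infty)} x\, v(dx) = \E(X_1),
\]
which is exactly \eqref{eqn:as_levy_sub}.

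\textbf{Main obstacle and an alternative.} The only genuine work is Step 1 — justifying the interchange of limit and integral that produces the mean formula and the equivalence with the tail condition; the rest is bookkeeping with the Lévy triplet. If one prefers to bypass Theorem~\ref{thm:levy_strong}, there is a self-contained alternative: $X$ has stationary independent increments, so $(X_n)_{n \geq 1}$ is a random walk whose steps are i.i.d.\ copies of $X_1$, and Kolmogorov's strong law gives $X_n/n \to \E(X_1)$ a.s.; monotonicity of $t \mapsto X_t$ then sandwiches $X_t/t$ between $X_{\lfloor t \rfloor}/(\lfloor t \rfloor + 1)$ and $X_{\lfloor t \rfloor + 1}/\lfloor t \rfloor$, both of which converge to $\E(X_1)$, upgrading the limit to continuous $t$. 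I would present the Theorem~\ref{thm:levy_strong} route as the main proof and mention this elementary argument as a remark.
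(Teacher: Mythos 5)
Your proposal is correct, but your main route is genuinely different from the paper's. The paper proves this theorem from scratch with the elementary argument you relegate to a closing remark: it writes $n = \lfloor t \rfloor$, decomposes $X_t/t = (X_n/n)\cdot(n/t) + (X_t - X_n)/t$, applies Kolmogorov's strong law to the i.i.d.\ increments $X_n = \sum_{k=1}^n (X_k - X_{k-1})$, and uses monotonicity to dominate the remainder by $(X_{n+1}-X_n)/t \to 0$ — essentially your sandwich, phrased as a two-term decomposition. Your primary argument instead routes through the cited strong law for general L\'evy processes (Theorem~\ref{thm:levy_strong}), which requires you to first verify the large-jump integrability condition; you do this by proving the moment identity $\E(X_1) = \gamma + \int_{(0,\infty)} x\,v(dx)$ via the Laplace exponent, and your limit interchanges (monotone convergence in $s$ on both sides of $\bigl(1-\E(e^{-sX_1})\bigr)/s$) are sound. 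The trade-off is worth noting: the paper's proof is self-contained and elementary but never actually verifies the displayed formula $\E(X_1) = \gamma + \int_0^\infty x\,v(dx)$ that appears in the theorem statement — it only establishes $X_t/t \to \E(X_1)$ — whereas your Step~1 supplies exactly that missing identification, at the cost of leaning on an external theorem for the convergence itself. Combining your Step~1 with the paper's (or your remarked) elementary argument would give the most complete self-contained proof.
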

A direct consequence of Theorem~\ref{theorem:subordi_as} yields the following lemma, which establishes the strong consistency of the Variance Gamma (VG) process:

\begin{lemma}\label{lemma:vg_as}
    Let \( VG(t) \) be as in~\eqref{eqn:VG_eqn}. Then, the process satisfies the following almost sure convergence:
    \begin{align}
        \lim_{t \to \infty} \frac{1}{t} VG(t) \xrightarrow{a.s.} \theta. \label{eqn:as_vg}
    \end{align}
\end{lemma}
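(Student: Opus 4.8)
The plan is to decompose the normalized VG process into its drift part and its Brownian part and handle each separately, using Theorem~\ref{theorem:subordi_as} for the subordinator and the strong law of large numbers for Brownian motion for the time-changed Wiener term. Write
\begin{align*}
\frac{1}{t}VG(t) = \theta\,\frac{G(t)}{t} + \sigma\,\frac{W(G(t))}{t},
\end{align*}
where $G=\{G(t)\}_{t\ge 0}$ is the Gamma subordinator with $G(t)\sim\Gamma(t/\nu,\nu)$ and $W$ is an independent standard Brownian motion.

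First I would treat the drift term. The Gamma process is a subordinator with $\mathbb{E}(G(1)) = (t/\nu)\cdot\nu\big|_{t=1} = 1 < \infty$, so Theorem~\ref{theorem:subordi_as} applies directly and gives $G(t)/t \xrightarrow{a.s.} 1$. In particular $\theta\,G(t)/t \xrightarrow{a.s.} \theta$, and also $G(t)\to\infty$ almost surely (since the limit $1$ is strictly positive), a fact I will need for the second term.

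Next I would treat the Brownian term by writing, on the event $\{G(t)\to\infty\}$,
\begin{align*}
\frac{W(G(t))}{t} = \frac{W(G(t))}{G(t)}\cdot\frac{G(t)}{t}.
\end{align*}
By the strong law of large numbers for Brownian motion (equivalently a consequence of the law of the iterated logarithm), $W(s)/s \to 0$ almost surely as $s\to\infty$; call the corresponding full-measure event $A$. Since $W$ and $G$ are independent (or, more weakly, since $A$ has probability one regardless of $G$), on $A\cap\{G(t)\to\infty\}$ we get $W(G(t))/G(t)\to 0$, and combined with $G(t)/t\to 1$ this yields $W(G(t))/t \xrightarrow{a.s.} 0$. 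Adding the two pieces gives $\tfrac{1}{t}VG(t)\xrightarrow{a.s.}\theta$.

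The only delicate point is the composition step: one must be sure that the null set outside which $W(s)/s\to 0$ does not interact badly with the random clock $G(t)$. This is handled cleanly by independence of $W$ and $G$ (or by noting the event is almost sure on the whole space), together with the already-established fact that $G(t)\to\infty$ almost surely; everything else is routine.
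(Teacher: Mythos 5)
Your proposal is correct, and in the key step it takes a different (and in fact more careful) route than the paper. Both proofs start from the same decomposition $\tfrac{1}{t}VG(t)=\theta\,G(t)/t+\sigma\,W(G(t))/t$ and dispose of the drift term identically via Theorem~\ref{theorem:subordi_as}, which gives $G(t)/t\to 1$ a.s. The difference is in the Brownian term. The paper rewrites the process as $VG(t)=\theta G(t)+\sigma\sqrt{G(t)}\,W(t)$ and then sends $\sqrt{G(t)}/\sqrt{t}\to 1$ and $W(t)/t\to 0$ separately; but the identity $W(G(t))\overset{d}{=}\sqrt{G(t)}\,Z$ is only a distributional equality at each fixed $t$, not a pathwise identity of processes, so using it inside an almost-sure limit is delicate (and the paper's displayed factorization $\sigma\,\tfrac{\sqrt{G(t)}}{\sqrt{t}}\cdot\tfrac{W(t)}{t}$ is also off by a factor of $\sqrt{t}$ from $\sigma\sqrt{G(t)}W(t)/t$, though the conclusion is unaffected). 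You instead work pathwise with the genuine subordinated term, writing $W(G(t))/t=\bigl(W(G(t))/G(t)\bigr)\cdot\bigl(G(t)/t\bigr)$, invoking $W(s)/s\to 0$ a.s.\ together with $G(t)\to\infty$ a.s., and correctly observing that the composition is a deterministic statement on the intersection of two full-measure events, so independence is not even needed. This buys a proof of the almost-sure statement that does not pass through a fixed-$t$ distributional representation, at the cost of explicitly checking $G(t)\to\infty$ a.s., which you do. No gaps.
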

In addition to the strong consistency established above, we also obtain the following central limit-type result that characterizes the asymptotic distribution of the Variance Gamma process:

\begin{lemma}\label{lemma:vg_clt}
    Let \( VG(t) \) be as in~\eqref{eqn:VG_eqn}. Then, the following distributional convergence holds:
    \begin{align}
        \frac{1}{\sqrt{t}} \left( VG(t) - t\theta \right) \distconv \mathcal{N} \left( 0, \nu\theta^2 + \sigma^2 \right), \label{eqn:vg_asymp}
    \end{align}
    where \( \distconv\) denotes convergence in distribution.
\end{lemma}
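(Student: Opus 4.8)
The plan is to establish \eqref{eqn:vg_asymp} via convergence of characteristic functions, using the explicit form \eqref{eqn:vg_char} together with L\'evy's continuity theorem. First I would set
\[
\phi_t(u) \;:=\; \mathbb{E}\!\left[\exp\!\left(i u\,\tfrac{VG(t)-t\theta}{\sqrt t}\right)\right]
\;=\; e^{-i u \sqrt t\,\theta}\left(1 - \tfrac{i u \theta \nu}{\sqrt t} + \tfrac{\sigma^2 u^2 \nu}{2 t}\right)^{-t/\nu},
\]
where the second equality substitutes $u \mapsto u/\sqrt t$ in \eqref{eqn:vg_char} and factors out the centering term. By L\'evy's continuity theorem it then suffices to show that, for every fixed $u \in \mathbb{R}$, $\phi_t(u) \to \exp\!\left(-\tfrac{u^2}{2}(\nu\theta^2+\sigma^2)\right)$, the characteristic function of $\mathcal{N}(0,\nu\theta^2+\sigma^2)$.

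To carry this out, fix $u$ and write $w_t := 1 - i u \theta \nu / \sqrt t + \sigma^2 u^2 \nu /(2t) = 1 + z_t$ with $z_t = O(t^{-1/2})$; since $w_t \to 1$, for all large $t$ the point $w_t$ lies in the open right half-plane, the principal branch of $\log$ is holomorphic there, and $\log \phi_t(u) = -i u \sqrt t\,\theta - \tfrac t\nu \log w_t$ is unambiguously defined. I would then expand $\log(1+z_t) = z_t - \tfrac12 z_t^2 + O(|z_t|^3)$ and use $\tfrac t\nu z_t = -i u \sqrt t\,\theta + \tfrac{\sigma^2 u^2}{2}$ together with $z_t^2 = -\nu^2\theta^2 u^2/t + O(t^{-3/2})$ to obtain
\[
\log \phi_t(u) = -i u\sqrt t\,\theta + i u\sqrt t\,\theta - \tfrac{\sigma^2 u^2}{2} - \tfrac{\nu\theta^2 u^2}{2} + O(t^{-1/2}) \;\longrightarrow\; -\tfrac{u^2}{2}\bigl(\nu\theta^2 + \sigma^2\bigr),
\]
where the $i u\sqrt t\,\theta$ contributions cancel the prefactor, the quadratic term $+\tfrac t\nu\cdot\tfrac12 z_t^2$ supplies $-\tfrac{\nu\theta^2 u^2}{2}$, and the cubic remainder is $\tfrac t\nu\,O(|z_t|^3) = O(t^{-1/2})$. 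Exponentiating gives the required pointwise limit.

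The only genuinely delicate point is the bookkeeping of the expansion — one must check that $w_t$ stays bounded away from $0$ and from the negative real axis so that the complex logarithm and its Taylor series are legitimate, and that the $O(\cdot)$ terms are controlled — but all of this follows from $w_t \to 1$, and pointwise-in-$u$ convergence is exactly what L\'evy's theorem requires. An essentially equivalent alternative would be to note that $VG$ is a L\'evy process and invoke the central limit theorem for L\'evy processes with finite second moment: here $\mathbb{E}[VG(1)] = \theta$, and conditioning on $G(1)$ (so that $VG(1)\mid G(1) \sim \mathcal{N}(\theta G(1), \sigma^2 G(1))$) the law of total variance yields $\operatorname{Var}(VG(1)) = \sigma^2\,\mathbb{E}[G(1)] + \theta^2\operatorname{Var}(G(1)) = \sigma^2 + \nu\theta^2 < \infty$, reproducing the same limiting variance; in that route the variance computation is the only real content.
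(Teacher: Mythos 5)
Your argument is correct: the characteristic-function computation checks out (the $\tfrac{t}{\nu}z_t$ term cancels the centering, the $\tfrac{t}{2\nu}z_t^2$ term contributes $-\tfrac{\nu\theta^2u^2}{2}$, and the cubic remainder is $O(t^{-1/2})$), and the law-of-total-variance check $\operatorname{Var}(VG(1))=\sigma^2+\nu\theta^2$ is also right. The paper actually states this lemma without including a proof in its appendix, but the technique you use — substituting $u\mapsto u/\sqrt{t}$ into the explicit characteristic function \eqref{eqn:vg_char}, expanding the logarithm to second order, and invoking L\'evy's continuity theorem — is precisely the method the paper employs for the analogous results (Theorems~\ref{thm:VGSA_normality} and~\ref{thm:SB_asymptotic}), so your proof is essentially the intended one, and indeed somewhat more careful about the branch of the complex logarithm and the remainder terms than the paper's own treatments of those generalizations.
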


We now summarize key analytical results concerning the stochastic arrival (SA) process, focusing specifically on the Cox–Ingersoll–Ross (CIR) process and its integrated counterpart, the ICIR process.

\begin{lemma}\label{lemma:icir_mean_var}
    Let \( y(t) \) be a CIR process defined by~\eqref{eqn:cir_eqn}, and let the associated integrated process be \( T(t) \) (ICIR) as in~\eqref{eqn:icir_eqn}. Then, the mean and variance of \( T(t) \) can be expressed as:
    \begin{align}
        \mathbb{E}[T(t)] = u(t), \quad \mathrm{Var}[T(t)] = \lambda^2 w(t), \label{eqn:icir_mean_var}
    \end{align}
    where
    \begin{align}
        u(t) &= -\frac{y(0)}{\kappa} \left( e^{-\kappa t} - 1 \right) + \eta \left[ t + \frac{1}{\kappa} \left( e^{-\kappa t} - 1 \right) \right], \\
        w(t) &= \frac{y(0) - \eta}{\kappa^3} \left( -2\kappa t e^{-\kappa t} + 1 - e^{-2\kappa t} \right) 
        + \frac{\eta}{2\kappa^3} \left( 2\kappa t - 3 + 4e^{-\kappa t} - e^{-2\kappa t} \right),
        \label{eqn:u_w_t}
    \end{align}
    and \( y(0) \) is the initial value of the CIR process.
\end{lemma}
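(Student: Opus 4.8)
The plan is to handle the two assertions separately, in each case reducing a statement about the integrated process $T(t)=\int_0^t y(u)\,du$ to first- and second-order information about the CIR process $y(\cdot)$ itself.

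For the mean, I would first record that $m(u):=\mathbb{E}[y(u)]$ solves the linear ODE $m'(u)=\kappa(\eta-m(u))$ with $m(0)=y(0)$; this follows by applying It\^o's formula to $e^{\kappa u}y(u)$ and taking expectations (the stochastic integral is a true martingale since $\mathbb{E}[y(u)]$ is finite on bounded intervals), giving $m(u)=\eta+(y(0)-\eta)e^{-\kappa u}$. Because $y(\cdot)\ge 0$ almost surely, Tonelli's theorem yields $\mathbb{E}[T(t)]=\int_0^t m(u)\,du$, and an elementary integration of $\eta$ and $e^{-\kappa u}$ over $[0,t]$ produces exactly $u(t)$ as displayed.

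For the variance, I would use $\mathrm{Var}[T(t)]=\int_0^t\!\int_0^t \mathrm{Cov}(y(s),y(u))\,ds\,du = 2\int_0^t\!\int_0^u \mathrm{Cov}(y(s),y(u))\,ds\,du$, valid by Fubini since the covariance kernel is bounded on the compact square $[0,t]^2$. Two inputs are needed. First, the autocovariance identity $\mathrm{Cov}(y(s),y(u))=e^{-\kappa(u-s)}\,\mathrm{Var}(y(s))$ for $s\le u$, which follows from the mild solution $y(u)=y(s)e^{-\kappa(u-s)}+\eta(1-e^{-\kappa(u-s)})+\lambda\int_s^u e^{-\kappa(u-r)}\sqrt{y(r)}\,dW_r$ together with the fact that the It\^o integral over $[s,u]$ has zero mean and is uncorrelated with $y(s)$ (condition on $\mathcal{F}_s$). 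Second, the CIR variance function $\mathrm{Var}(y(s))=\frac{\lambda^2 y(0)}{\kappa}\bigl(e^{-\kappa s}-e^{-2\kappa s}\bigr)+\frac{\eta\lambda^2}{2\kappa}\bigl(1-e^{-\kappa s}\bigr)^2$, which I would obtain by applying It\^o to $y(u)^2$ and solving the resulting linear ODE $\frac{d}{du}\mathbb{E}[y(u)^2]=(2\kappa\eta+\lambda^2)m(u)-2\kappa\,\mathbb{E}[y(u)^2]$ with the already-known $m(u)$. Substituting both into the double integral, factoring out $\lambda^2$, and evaluating the iterated $s$- then $u$-integrals of $e^{-\kappa s}$ and $e^{-2\kappa s}$ against $e^{-\kappa(u-s)}$ should collapse to $\lambda^2 w(t)$ with $w(t)$ as stated. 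An equivalent and arguably cleaner route, which I would actually prefer to write out, bypasses the autocovariance by closing a cascade of linear ODEs for the joint moments: It\^o's formula applied to $y$, $y^2$, $yT$ and $T^2$ (with $dT=y\,dt$) gives, in that order, $\frac{d}{dt}\mathbb{E}[y]=\kappa(\eta-\mathbb{E}[y])$, $\frac{d}{dt}\mathbb{E}[y^2]=(2\kappa\eta+\lambda^2)\mathbb{E}[y]-2\kappa\mathbb{E}[y^2]$, $\frac{d}{dt}\mathbb{E}[yT]=\kappa\eta\,\mathbb{E}[T]-\kappa\mathbb{E}[yT]+\mathbb{E}[y^2]$, and $\frac{d}{dt}\mathbb{E}[T^2]=2\mathbb{E}[yT]$, each linear with forcing supplied by the previously solved quantity; then $\mathrm{Var}[T(t)]=\mathbb{E}[T(t)^2]-u(t)^2$.

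The only real obstacle is computational bookkeeping: in either route one integrates exponentials against the integrating factors $e^{\kappa u}$ and $e^{2\kappa u}$ repeatedly, and one must track the polynomial-in-$t$ contributions (the $2\kappa t$ and $2\kappa t\,e^{-\kappa t}$ terms) carefully so that the final expression matches $w(t)$ exactly; a useful consistency check is the large-$t$ asymptotics $\mathrm{Var}[T(t)]\sim \tfrac{\eta\lambda^2}{\kappa^2}t$, which both $w(t)$ and the double integral reproduce. No conceptual difficulty arises beyond the routine justification of Tonelli/Fubini and of the moment ODEs, all of which follow from nonnegativity of the CIR process and finiteness of its moments on bounded intervals.
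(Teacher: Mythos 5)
Your proposal is correct, and both of your suggested routes would reach the stated formulas, but the path is genuinely different from the paper's. The paper writes the mild solution of the CIR equation, integrates it to get the decomposition $T(t)=u(t)+\lambda I(t)$ with $I(t)=\int_0^t e^{-\kappa u}\bigl[\int_0^u e^{\kappa z}\sqrt{y(z)}\,dW(z)\bigr]du$, and then — via an integration-by-parts identity for $e^{-\kappa t}\mathcal{W}(t)$ that effects a stochastic Fubini interchange — collapses $I(t)$ into the single stochastic integral $\frac{e^{-\kappa t}}{\kappa}\int_0^t(e^{\kappa t}-e^{\kappa z})\sqrt{y(z)}\,dW(z)$. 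The mean then follows because $I$ is a centered martingale term, and the variance follows from a single application of the It\^o isometry, which has the notable advantage of requiring only the \emph{first} moment $\mathbb{E}[y(z)]=y(0)e^{-\kappa z}+\eta(1-e^{-\kappa z})$ under the integral sign: the isometry itself supplies all the second-order information. Your covariance-kernel route instead needs the full CIR variance function $\mathrm{Var}(y(s))$ (an extra, though standard, computation) but avoids the stochastic Fubini step entirely, and your moment-ODE cascade for $\mathbb{E}[y],\mathbb{E}[y^2],\mathbb{E}[yT],\mathbb{E}[T^2]$ is the most mechanical and self-contained of the three, at the cost of more bookkeeping. The representation the paper derives for $I(t)$ is not merely a computational device here — it is reused verbatim in the proof of Lemma~\ref{lemma:ICIR_asymp} to identify the quadratic variation $\langle\lambda I\rangle_t$ and invoke the martingale CLT, so the paper's route buys a second result for free, whereas your routes would require redoing that representation later. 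Your justifications of Tonelli/Fubini, the martingale property of the It\^o integrals (finiteness of CIR moments on bounded intervals), and the conditioning argument for the autocovariance are all sound; the only thing left implicit is the final exponential bookkeeping, which you correctly flag and for which your large-$t$ consistency check $\mathrm{Var}[T(t)]\sim\eta\lambda^2 t/\kappa^2$ agrees with the paper's $\lambda^2 w(t)$.
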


To further analyze the behavior of the ICIR process, we now examine its asymptotic properties, particularly focusing on the long-term growth rates of its mean and second moment.

\begin{lemma}\label{lemma:icir_as}
    Let \( T(t) = \int_0^t y(u) \, du \) denote the integrated CIR (ICIR) process, where \( y(u) \) follows the CIR dynamics~\eqref{eqn:cir_eqn}. Then, under the Feller condition \( 2\kappa\eta > \lambda^2 \), the ICIR process satisfies the following almost sure limits:
    \begin{align}
        \lim_{t \to \infty} \frac{1}{t} T(t) &\xrightarrow{a.s.} \eta, \\
        \lim_{t \to \infty} \frac{1}{t^2} T(t)^2 &\xrightarrow{a.s.} \eta^2. \label{eqn:as_icir}
    \end{align}
\end{lemma}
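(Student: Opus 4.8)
The plan is to prove the first limit directly and then obtain the second one for free by applying the continuous mapping theorem to the map $x \mapsto x^2$; so the whole problem reduces to showing $T(t)/t \xrightarrow{a.s.} \eta$. I would attack this in two stages: first an $L^2$-estimate yielding convergence in probability, and then an upgrade to almost sure convergence via a sparse subsequence together with the monotonicity of $t\mapsto T(t)$.

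For the first stage I would invoke Lemma~\ref{lemma:icir_mean_var}, which gives $\E[T(t)] = u(t)$ and $\mathrm{Var}[T(t)] = \lambda^2 w(t)$ with $u,w$ as in \eqref{eqn:u_w_t}. Inspecting these closed forms, one reads off $u(t) = \eta t + \tfrac{y(0)-\eta}{\kappa} + o(1)$, hence $u(t)/t \to \eta$, and $w(t) = \tfrac{\eta}{\kappa^2}\,t + O(1)$, hence $w(t)/t^2 \to 0$. Therefore
\[
\E\!\left[\fst{\tfrac{T(t)}{t}-\eta}^2\right] \;=\; \fst{\tfrac{u(t)}{t}-\eta}^2 + \frac{\lambda^2 w(t)}{t^2} \;\xrightarrow[t\to\infty]{}\; 0,
\]
so $T(t)/t \to \eta$ in $L^2$, and a fortiori in probability. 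Here the Feller condition $2\kappa\eta>\lambda^2$ guarantees that $y(\cdot)>0$ almost surely, so $T$ is a.s. finite and strictly increasing, which legitimises both the moment formulas above and the monotonicity used below.

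To upgrade to almost sure convergence I would restrict attention to the subsequence $t_k = k^2$. Then $\mathrm{Var}[T(t_k)/t_k] = \lambda^2 w(k^2)/k^4 = O(1/k^2)$ is summable, so by Chebyshev's inequality and the Borel--Cantelli lemma, $T(k^2)/k^2 - u(k^2)/k^2 \to 0$ almost surely; since $u(k^2)/k^2 \to \eta$, this gives $T(k^2)/k^2 \to \eta$ a.s. To fill the gaps I would use that $t\mapsto T(t)$ is nondecreasing: for $k^2 \le t \le (k+1)^2$,
\[
\frac{k^2}{(k+1)^2}\cdot\frac{T(k^2)}{k^2} \;\le\; \frac{T(t)}{t} \;\le\; \frac{(k+1)^2}{k^2}\cdot\frac{T((k+1)^2)}{(k+1)^2},
\]
and since $(k+1)^2/k^2 \to 1$, both bounding sequences converge to $\eta$ almost surely; the squeeze then delivers $\lim_{t\to\infty} T(t)/t = \eta$ a.s. Since $x\mapsto x^2$ is continuous, this immediately yields $T(t)^2/t^2 = \fst{T(t)/t}^2 \to \eta^2$ almost surely, which is the second assertion.

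The main obstacle is precisely this a.s. upgrade: a naive Borel--Cantelli estimate along the integers fails, because $\mathrm{Var}[T(n)/n]$ decays only like $1/n$ and $\sum_n 1/n$ diverges, so the detour through the sparser subsequence $\{k^2\}$ and the monotonicity sandwich is essential. An alternative, less elementary route would be to appeal directly to the ergodic theorem for the CIR diffusion: under the Feller condition $y(\cdot)$ is a positive recurrent diffusion with unique invariant law $\Gamma\!\fst{2\kappa\eta/\lambda^2,\ \lambda^2/(2\kappa)}$, whose mean is $\eta$, and the strong law for additive functionals of ergodic diffusions gives $\tfrac1t\int_0^t y(u)\,du \to \eta$ a.s. directly; squaring again produces the second limit.
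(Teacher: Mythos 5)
Your argument is correct, but it takes a genuinely different route from the paper. The paper's proof is a two-line appeal to the ergodic theorem for the CIR diffusion --- precisely the ``alternative, less elementary route'' you mention at the end --- followed by the continuous mapping theorem for the second limit. Your main argument instead builds the strong law from the explicit moment formulas of Lemma~\ref{lemma:icir_mean_var}: the bias--variance computation $\mathbb{E}\bigl[(T(t)/t-\eta)^2\bigr]=(u(t)/t-\eta)^2+\lambda^2 w(t)/t^2\to 0$ is right (indeed $u(t)=\eta t+O(1)$ and $w(t)=\tfrac{\eta}{\kappa^2}t+O(1)$), the Chebyshev--Borel--Cantelli upgrade along $t_k=k^2$ is valid because $w(k^2)/k^4=O(k^{-2})$ is summable, and the monotonicity sandwich $T(k^2)/(k+1)^2\le T(t)/t\le T((k+1)^2)/k^2$ correctly fills the gaps since $y\ge 0$ makes $T$ nondecreasing. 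What your approach buys is self-containedness: it does not require citing positive recurrence and the strong law for additive functionals of ergodic diffusions, and it makes transparent exactly which ingredients are needed (finite second moments of $T(t)$ and monotonicity); you also correctly diagnose why the naive Borel--Cantelli argument along integers fails. What the paper's route buys is brevity and uniformity --- the same ergodic-theorem invocation is reused for the CKLS clock in Theorem~\ref{thm:ckls_strong_asymp}, where closed-form moment formulas are not available, so your moment-based method would not transfer there. Both proofs are sound; yours is arguably more rigorous in the CIR case, since the paper nowhere verifies the hypotheses of the ergodic theorem it invokes.
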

\begin{proof}
    From the ergodic property of the CIR process, we have:
    \begin{align*}
        \lim_{t \to \infty} \frac{1}{t} T(t) 
        = \lim_{t \to \infty} \frac{1}{t} \int_0^t y(u)\,du 
        \xrightarrow{a.s.} \eta.
    \end{align*}
    Now, let $f(x) = x^2$, which is continuous on $\mathbb{R}_+$. Applying the continuous mapping theorem to the above almost sure convergence, we get:
    \begin{align*}
        \lim_{t \to \infty} f\left( \frac{T(t)}{t} \right) 
        = \lim_{t \to \infty} \left( \frac{T(t)}{t} \right)^2 
        \xrightarrow{a.s.} f(\eta) = \eta^2.
    \end{align*}
    Hence proved.
\end{proof}

We now examine the distributional convergence of the ICIR process, which serves as a crucial step toward understanding the asymptotic distribution of the VGSA process. This result provides insights into the fluctuations of the stochastic clock driving the time-changed Gamma process.

\begin{lemma}\label{lemma:ICIR_asymp}
    For the integrated CIR (ICIR) process \( T(t) \) ~\eqref{eqn:icir_eqn}, under the Feller condition \( 2\kappa\eta > \lambda^2 \), the following distributional convergence holds:
    \begin{align}
        \frac{1}{\sqrt{t}}\left( T(t) - \eta t \right) \distconv \mathcal{N} \left(0, \frac{\eta \lambda^2}{\kappa^2} \right), \label{eqn:icir_dist_conv}
    \end{align}
    where \( \distconv \) denotes convergence in distribution.
\end{lemma}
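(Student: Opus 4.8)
The plan is to exploit the affine structure of the CIR dynamics to obtain an exact pathwise decomposition of $T(t) - \eta t$ into a negligible boundary term and a continuous martingale, and then apply a martingale central limit theorem. Integrating the CIR equation~\eqref{eqn:cir_eqn} over $[0,t]$ gives $y(t) - y(0) = \kappa\eta t - \kappa T(t) + \lambda \int_0^t \sqrt{y(u)}\,dW_u$, which rearranges to
\begin{align*}
    T(t) - \eta t = \frac{y(0) - y(t)}{\kappa} + \frac{\lambda}{\kappa}\, M_t, \qquad M_t := \int_0^t \sqrt{y(u)}\,dW_u .
\end{align*}
Dividing by $\sqrt{t}$, it then suffices to show that $\dfrac{y(0) - y(t)}{\kappa\sqrt{t}} \to 0$ in probability and that $\dfrac{M_t}{\sqrt{t}} \distconv \mathcal{N}(0,\eta)$; Slutsky's theorem together with the scaling by $\lambda/\kappa$ delivers the claimed limit $\mathcal{N}(0,\eta\lambda^2/\kappa^2)$.

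For the boundary term, I would use that under the Feller condition the CIR process has moments bounded uniformly in time; already the explicit first moment $\mathbb{E}[y(t)] = y(0)e^{-\kappa t} + \eta(1-e^{-\kappa t})$ gives $\sup_{t\ge 0}\mathbb{E}[y(t)] < \infty$, so Markov's inequality yields $y(t)/\sqrt{t}\to 0$ in probability, and since $y(0)$ is constant the whole boundary term is $o_P(1)$.

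The core of the argument is the martingale CLT for $M_t/\sqrt{t}$. The process $M$ is a continuous local martingale with quadratic variation $\langle M\rangle_t = \int_0^t y(u)\,du = T(t)$, and Lemma~\ref{lemma:icir_as} gives $\langle M\rangle_t/t = T(t)/t \xrightarrow{a.s.} \eta$ (with $\langle M\rangle_\infty = \infty$, so the time change is globally defined). By the Dambis–Dubins–Schwarz theorem we write $M_t = B_{\langle M\rangle_t} = B_{T(t)}$ for a standard Brownian motion $B$, so $M_t/\sqrt{t} = B_{T(t)}/\sqrt{t}$. Comparing with $B_{\eta t}/\sqrt{t}\sim\mathcal{N}(0,\eta)$, the difference $\bigl(B_{T(t)} - B_{\eta t}\bigr)/\sqrt{t}$ is controlled, on the event $\{|T(t)-\eta t|\le \epsilon t\}$ (of probability tending to $1$), by $\sup_{|s-\eta t|\le\epsilon t}|B_s - B_{\eta t}|/\sqrt{t}$, which by Brownian scaling equals $\sqrt{\epsilon}$ times a tight random variable; letting $\epsilon\downarrow 0$ makes the difference $o_P(1)$, hence $M_t/\sqrt{t}\distconv\mathcal{N}(0,\eta)$. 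Equivalently, one may simply invoke the standard CLT for continuous local martingales whose normalised quadratic variation converges.

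The main obstacle I anticipate is making this martingale-CLT step fully rigorous: $T(t)$ and the Brownian motion $B$ arising from the time change are \emph{not} independent, so the passage from $B_{T(t)}/\sqrt{t}$ to $\mathcal{N}(0,\eta)$ cannot proceed by conditioning and must go through the uniform-continuity/scaling estimate above (or a ready-made martingale CLT). A secondary, routine point is verifying the uniform moment bound on $y(t)$ used to kill the boundary term, which is immediate from the closed-form CIR mean.
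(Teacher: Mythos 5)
Your proof is correct, and it takes a genuinely different (and in one respect cleaner) route than the paper. The paper starts from the explicit solution of the CIR SDE, writes $T(t) = u(t) + \lambda I(t)$ with $I(t) = \frac{e^{-\kappa t}}{\kappa}\int_0^t (e^{\kappa t}-e^{\kappa z})\sqrt{y(z)}\,dW(z)$, computes the bracket of $\lambda I$, shows $\frac{1}{t}\langle \lambda I\rangle_t \to \eta\lambda^2/\kappa^2$ a.s.\ by ergodicity, and invokes the martingale CLT plus Slutsky. You instead integrate the SDE once to get the exact identity $T(t)-\eta t = \frac{y(0)-y(t)}{\kappa} + \frac{\lambda}{\kappa}M_t$ with $M_t=\int_0^t\sqrt{y(u)}\,dW_u$. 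This buys you two things: (i) $M_t$ is an honest continuous local martingale with $\langle M\rangle_t = T(t)$, so the normalised quadratic variation converges by Lemma~\ref{lemma:icir_as} with no further computation, whereas the paper's $I(t)$ has an integrand depending on the upper limit $t$ and is not literally a martingale in $t$ (its "quadratic variation" and the appeal to the martingale CLT there require the extra reparametrisation you carry out explicitly via Dambis--Dubins--Schwarz); and (ii) the drift correction reduces to the single term $(y(0)-y(t))/(\kappa\sqrt{t})$, killed by the uniform first-moment bound, rather than the asymptotic expansion $u(t)=\eta t+o(\sqrt t)$. Your handling of the dependence between $B$ and $T(t)$ through the modulus-of-continuity/scaling estimate is exactly the right way to make the time-change step rigorous. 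Both arguments yield the same limit $\mathcal{N}\bigl(0,\eta\lambda^2/\kappa^2\bigr)$.
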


\begin{proof}
We first note that the ICIR process \( T(t) \) can be decomposed as:
\[
T(t) = u(t) + \lambda I(t),
\]
where \( u(t) = \mathbb{E}[T(t)] \) is the deterministic part (given in Equation~\eqref{eqn:u_w_t}), and \( I(t) \) is a centered diffusion term. It can be shown that \( I(t) \) satisfies the stochastic integral representation:
\[
I(t) = \frac{e^{-\kappa t}}{\kappa} \int_0^t \left(e^{\kappa t} - e^{\kappa z}\right) \sqrt{y(z)} \, dW(z).
\]
The quadratic variation of \( \lambda I(t) \) is given by:
\[
\langle \lambda I \rangle_t = \lambda^2 \cdot \frac{e^{-2\kappa t}}{\kappa^2} \int_0^t \left(e^{\kappa t} - e^{\kappa z}\right)^2 y(z) \, dz.
\]
Using the ergodic property of the CIR process, we obtain:
\[
\frac{1}{t} \langle \lambda I \rangle_t \xrightarrow{a.s.} \frac{\eta \lambda^2}{\kappa^2}.
\]
Therefore, by the Martingale Central Limit Theorem, it follows that:
\[
\frac{\lambda I(t)}{\sqrt{t}} \distconv \mathcal{N} \left(0, \frac{\eta \lambda^2}{\kappa^2} \right).
\]
To complete the proof, observe that:
\[
\frac{u(t) - \eta t}{\sqrt{t}} \to 0 \quad \text{as} \quad t \to \infty,
\]
since \( u(t) \sim \eta t + o(\sqrt{t}) \). Hence, by Slutsky’s theorem, the result in Equation~\eqref{eqn:icir_dist_conv} follows.
\end{proof}
To establish the asymptotic properties of the VGSA process, we begin by computing its first two moments, namely the mean and variance. These moments provide insight into the process's behavior and are crucial for later deriving consistency and distributional convergence results.

\begin{lemma}[Moments of the VGSA Process]\label{lemma:vgsa_moments}
     The mean and variance of \( VGSA(t) \)~\eqref{eqn:vgsa_eqn} are given by:
    \begin{align}
    \begin{aligned}
        \mathbb{E}\left[VGSA(t)\right] &= \theta\, u(t), \\
        \mathrm{Var}\left[VGSA(t)\right] &= (\theta^2 \nu + \sigma^2) u(t) + \theta^2 \lambda^2 w(t),
    \end{aligned}
    \label{eqn:vgsa_mean_var}
    \end{align}
    where \( u(t) \) and \( w(t) \) are the mean and scaled variance of the ICIR process as defined in Equation~\eqref{eqn:u_w_t}.
\end{lemma}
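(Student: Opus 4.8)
The plan is to compute both moments by iterated conditioning, peeling off the three layers of randomness in $VGSA(t) = \theta\, G(T(t)) + \sigma\, W(G(T(t)))$ one at a time: first the Brownian motion $W$, then the Gamma subordinator $G$, and finally the integrated CIR clock $T(t)$. For the first layer, note that conditionally on the value $G(T(t)) = g$, the quantity $\theta g + \sigma W(g)$ is Gaussian with mean $\theta g$ and variance $\sigma^2 g$, using $W(g) \sim \mathcal{N}(0, g)$ and the independence of $W$ from $(G, T)$. Hence
\[
\mathbb{E}\!\left[VGSA(t) \mid G(T(t))\right] = \theta\, G(T(t)), \qquad \mathrm{Var}\!\left[VGSA(t) \mid G(T(t))\right] = \sigma^2\, G(T(t)).
\]

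\textbf{Key steps.} For the second layer, conditionally on $T(t) = \tau$ we have $G(\tau) \sim \Gamma(\tau/\nu, \nu)$, so $\mathbb{E}[G(T(t)) \mid T(t)] = T(t)$ and $\mathrm{Var}[G(T(t)) \mid T(t)] = \nu\, T(t)$; the law of total variance then gives $\mathbb{E}[G(T(t))] = \mathbb{E}[T(t)]$ and $\mathrm{Var}[G(T(t))] = \nu\, \mathbb{E}[T(t)] + \mathrm{Var}[T(t)]$. Invoking Lemma~\ref{lemma:icir_mean_var}, namely $\mathbb{E}[T(t)] = u(t)$ and $\mathrm{Var}[T(t)] = \lambda^2 w(t)$, the tower property yields $\mathbb{E}[VGSA(t)] = \theta\, \mathbb{E}[G(T(t))] = \theta\, u(t)$, while the law of total variance applied to the conditional moments above gives
\[
\mathrm{Var}[VGSA(t)] = \mathbb{E}\!\left[\sigma^2 G(T(t))\right] + \mathrm{Var}\!\left[\theta G(T(t))\right] = \sigma^2 u(t) + \theta^2\big(\nu\, u(t) + \lambda^2 w(t)\big),
\]
which rearranges to the claimed $(\theta^2 \nu + \sigma^2)\, u(t) + \theta^2 \lambda^2 w(t)$.

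\textbf{Main obstacle.} The algebra is routine; the only point requiring care is the measure-theoretic setup underlying the conditioning. One must be explicit that $T(t)$ is almost surely finite, nondecreasing, and adapted so that the composition $G \circ T$ is a well-defined random time change, that $W$, $G$, and the CIR driving noise are mutually independent, and that every moment invoked is finite — which holds since $u(t), w(t) < \infty$ for each fixed $t$ and the Gamma and Gaussian laws possess all moments. Thus justifying the two conditional-expectation identities via this independence structure, rather than the final computation, is where the attention goes.
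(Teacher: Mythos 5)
Your proposal is correct and follows essentially the same route as the paper: iterated conditioning on the layers of randomness, using the Gamma moments conditional on $T(t)$ together with $\mathbb{E}[T(t)] = u(t)$ and $\mathrm{Var}[T(t)] = \lambda^2 w(t)$ from Lemma~\ref{lemma:icir_mean_var}. The only cosmetic difference is that you organize the variance via the law of total variance applied layer by layer, whereas the paper computes $\mathbb{E}[VGSA(t)^2 \mid T(t)]$ directly and subtracts the squared mean; the two computations are identical in substance.
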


\begin{proof}
    The VGSA process can be written in terms of the time-changed Gamma process as:
    \[
    VGSA(t) = \theta\, G(T(t)) + \sigma\, \sqrt{G(T(t))}\cdot \frac{W(t)}{\sqrt{t}},
    \]
    where \( G(T(t)) \sim \Gamma\left(\frac{T(t)}{\nu}, \nu\right) \) and \( T(t) \) is the ICIR process.

    Conditional on \( T(t) \), the mean and second moment of the VGSA process follow from properties of the Gamma distribution:
    \begin{align*}
        \mathbb{E}\left[VGSA(t) \mid T(t)\right] &= \theta\, \mathbb{E}\left[G(T(t)) \mid T(t)\right] = \theta\, T(t), \\
        \mathbb{E}\left[VGSA(t)^2 \mid T(t)\right] &= \theta^2\, \mathbb{E}\left[G(T(t))^2 \mid T(t)\right] + \sigma^2\, \mathbb{E}\left[G(T(t)) \mid T(t)\right] \\
        &= \theta^2\,\left(\nu\, T(t) + T(t)^2\right) + \sigma^2\, T(t).
    \end{align*}

    By taking expectations with respect to \( T(t) \) and utilizing the expressions provided in Lemma~\ref{lemma:icir_mean_var}, we derive:

    \begin{align*}
        \mathbb{E}[VGSA(t)] &= \theta\, \mathbb{E}[T(t)] = \theta\, u(t), \\
        \mathbb{E}[VGSA(t)^2] &= \theta^2\, \nu\, u(t) + \theta^2\, \mathbb{E}[T(t)^2] + \sigma^2\, u(t) \\
        &= \theta^2\, \nu\, u(t) + \theta^2\, \left(u(t)^2 + \lambda^2 w(t)\right) + \sigma^2\, u(t).
    \end{align*}

    Thus, the variance becomes:
    \begin{align*}
        \mathrm{Var}[VGSA(t)] 
        &= \mathbb{E}[VGSA(t)^2] - \left(\mathbb{E}[VGSA(t)]\right)^2 \\
        &= \theta^2 \nu\, u(t) + \sigma^2\, u(t) + \theta^2 \lambda^2 w(t) + \theta^2 u(t)^2 - \theta^2 u(t)^2 \\
        &= (\theta^2 \nu + \sigma^2) u(t) + \theta^2 \lambda^2 w(t).
    \end{align*}
    This completes the proof.
\end{proof}

We now extend this framework to analyze the key asymptotic properties of the VGSA process. Recall that the VG process can be interpreted as a Brownian motion subordinated by a Gamma process. By replacing the deterministic Gamma subordinator with a stochastic, almost surely increasing process—specifically the ICIR process—we obtain the VGSA process. Consider the CIR process~\eqref{eqn:cir_eqn} under the Feller condition \( 2\kappa\eta > \lambda^2 \), ensuring positivity. The VGSA process \eqref{eqn:vgsa_eqn} can equivalently be expressed as:
\begin{align*}
    VGSA(t) = VG(T(t)),
\end{align*}
where \( T(t) \) is the ICIR process~\eqref{eqn:icir_eqn}. The following lemma establishes the strong consistency of the VGSA process:

\begin{lemma}\label{lemma:vgsa_cir_as}
    The VGSA process satisfies the following almost sure convergence:
    \begin{align}
        \lim_{t \to \infty} \frac{1}{T(t)} VGSA(t) = \lim_{t \to \infty} \frac{1}{T(t)} VG(T(t)) \xrightarrow{a.s.} \theta. \label{eqn:VGSA_cir_as}
    \end{align}
\end{lemma}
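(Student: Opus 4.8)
The plan is to read off the result directly from the identity $VGSA(t) = VG(T(t))$ (already noted just above the statement) combined with two almost sure limits that are in hand: the strong law for the Variance Gamma process (Lemma~\ref{lemma:vg_as}) and the ergodic growth rate of the ICIR clock (Lemma~\ref{lemma:icir_as}). Conceptually the argument is nothing more than the elementary fact that if a real function $g(s)$ has a limit $L$ as $s\to\infty$ and a time change $h(t)$ satisfies $h(t)\to\infty$, then $g(h(t))\to L$; the only work is to carry this out pathwise on a full-measure set.

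First I would fix an event $\Omega_1$ with $\p(\Omega_1)=1$ on which, by Lemma~\ref{lemma:vg_as}, the map $s\mapsto VG(s,\omega)/s$ converges to $\theta$ as $s\to\infty$ for every $\omega\in\Omega_1$. Next, by Lemma~\ref{lemma:icir_as}, under the Feller condition $2\kappa\eta>\lambda^2$ we have $T(t)/t\xrightarrow{a.s.}\eta$; since $\eta>0$ this forces $T(t)\to\infty$ almost surely, and $T(\cdot)$ is in any case almost surely continuous and non-decreasing, so on a probability-one event $\Omega_2$ the trajectory $t\mapsto T(t,\omega)$ is a genuine divergent time change.

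Then, for $\omega\in\Omega_1\cap\Omega_2$ (an event of probability one), I would apply the composition fact above with $g(\cdot)=VG(\cdot,\omega)/(\cdot)$ and $h(\cdot)=T(\cdot,\omega)$ to obtain
\[
\frac{VGSA(t,\omega)}{T(t,\omega)}=\frac{VG(T(t,\omega),\omega)}{T(t,\omega)}\longrightarrow\theta\qquad(t\to\infty).
\]
Since this holds on a set of probability one, the almost sure convergence claimed in~\eqref{eqn:VGSA_cir_as} follows, the first equality there being just the definitional identity $VGSA=VG\circ T$.

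I do not expect a genuine obstacle here: the only point requiring care is the measure-theoretic bookkeeping, namely that the VG strong law holds \emph{pathwise} on a full-measure set and can therefore be evaluated along the random argument $T(t,\omega)$, and that $T(t)\to\infty$ almost surely --- which is exactly where strict positivity of $\eta$ (hence the Feller condition) enters. Note that independence of the stochastic clock from the VG driver is not needed for this step; one is simply intersecting two almost sure events. An equivalent route would phrase the conclusion through a continuous-mapping / Slutsky-type argument, but the direct pathwise composition is the most transparent.
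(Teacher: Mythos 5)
Your argument is correct, and it takes a genuinely different route from the paper's. The paper re-runs the floor-function decomposition used in the proof of Theorem~\ref{theorem:subordi_as}: it sets $n=\lfloor T(t)\rfloor$, writes $VG(T(t))/T(t)=\bigl(VG(n)/n\bigr)\cdot\bigl(n/T(t)\bigr)+\bigl(VG(T(t))-VG(n)\bigr)/T(t)$, and then controls the increment over $[n,n+1]$ separately. You instead observe that Lemma~\ref{lemma:vg_as} already delivers a \emph{continuous-time} pathwise limit $VG(s,\omega)/s\to\theta$ on a full-measure set, that $T(t)\to\infty$ a.s.\ because $T(t)/t\to\eta>0$, and that composing a convergent function with a divergent time change preserves the limit --- so the conclusion follows by intersecting two probability-one events, with no need for independence of the clock and the driver. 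Your route is shorter and avoids re-proving the interpolation control that is already embedded in Lemma~\ref{lemma:vg_as}; it also sidesteps a delicate step in the paper's version, namely the bound $\sup_{n<u\le n+1}|VG(u)-VG(n)|\le VG(n+1)-VG(n)$, which tacitly uses monotonicity that the VG process (a difference of two Gamma processes) does not possess. What the paper's decomposition buys in exchange is self-containedness when only the integer-time strong law for $VG$ is available, and an explicit display of where $T(t)/t\to\eta$ enters. The one point you should state rather than leave implicit is that the almost sure limit in Lemma~\ref{lemma:vg_as} is indeed along the continuous parameter $s\to\infty$ (it is, as stated and as proved via Theorem~\ref{theorem:subordi_as}), since the composition argument fails if one only knows the limit along integers.
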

More generally, we can establish the almost sure convergence of the VGSA process scaled by time, as follows:

\begin{lemma}\label{lemma:vgsa_as}
    The VGSA process satisfies the following strong law of large numbers-type asymptotic behavior:
    \begin{align}
        \lim_{t \to \infty} \frac{1}{t} VGSA(t) = \lim_{t \to \infty} \frac{1}{t} VG(T(t)) \xrightarrow{a.s.} \eta \cdot \theta. \label{eqn:VGSA_as}
    \end{align}
\end{lemma}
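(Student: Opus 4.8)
The plan is to exploit the multiplicative decomposition
\[
\frac{1}{t}\,VGSA(t) \;=\; \frac{1}{t}\,VG(T(t)) \;=\; \frac{T(t)}{t}\cdot\frac{VG(T(t))}{T(t)},
\]
which is valid for all $t>0$ almost surely, since under the Feller condition $2\kappa\eta>\lambda^2$ the integrand $y(u)$ in \eqref{eqn:icir_eqn} is strictly positive, hence $T(t)>0$. Both factors on the right have already been controlled: by Lemma~\ref{lemma:icir_as}, $T(t)/t \xrightarrow{a.s.} \eta$, and by Lemma~\ref{lemma:vgsa_cir_as}, $VG(T(t))/T(t) \xrightarrow{a.s.} \theta$. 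So the entire argument reduces to passing to the limit in a product of two almost surely convergent quantities.

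First I would record that $T(t)\to\infty$ almost surely, which is immediate from $T(t)/t\to\eta$ together with $\eta>0$; this is exactly what legitimizes the use of Lemma~\ref{lemma:vgsa_cir_as}, whose proof feeds the strong law $VG(s)/s\to\theta$ (Lemma~\ref{lemma:vg_as}) into the divergent random clock $s=T(t)$. Next I would take $\Omega_1$ to be the event of full probability on which $T(t)/t\to\eta$ and $\Omega_2$ the event of full probability on which $VG(T(t))/T(t)\to\theta$; since $\p(\Omega_1)=\p(\Omega_2)=1$, the intersection $\Omega_1\cap\Omega_2$ has probability one as well. On this set both factors in the displayed decomposition converge, so by continuity of multiplication their product converges to $\eta\cdot\theta$, giving $\tfrac{1}{t}VGSA(t)\xrightarrow{a.s.}\eta\theta$ as claimed. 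No moment estimates (and in particular nothing from Lemma~\ref{lemma:vgsa_moments}) are needed.

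The only genuinely delicate point is the one already absorbed into Lemma~\ref{lemma:vgsa_cir_as}: the strong law $VG(s)/s\to\theta$ holds for $\p$-almost every sample path as $s\to\infty$ \emph{through the reals}, and one must be sure it is preserved under composition with the stochastic clock. This is unproblematic because the convergence $VG(s)/s\to\theta$ holds on a single full-probability event for the whole continuum of $s$ simultaneously (the limit is taken as $s\to\infty$, not merely along a fixed countable subsequence), so evaluating along the $\omega$-dependent divergent sequence $T(t,\omega)$ stays inside that event; alternatively, since the Brownian and Gamma drivers of $VG$ are independent of the Brownian driver of the CIR clock, one may condition on the path of $T$ and conclude by Fubini. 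With that subtlety handled, the remaining step is just the algebra of limits, so I expect no substantive obstacle in this lemma beyond carefully citing the two preceding results.
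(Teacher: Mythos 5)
Your proof is correct and follows essentially the same route as the paper: factor $\tfrac{1}{t}VGSA(t)$ as $\tfrac{VG(T(t))}{T(t)}\cdot\tfrac{T(t)}{t}$, invoke Lemma~\ref{lemma:vgsa_cir_as} and Lemma~\ref{lemma:icir_as} for the two factors, and multiply the almost sure limits. The additional remarks on $T(t)\to\infty$ and the composition with the random clock are sensible but, as you note, are already absorbed into Lemma~\ref{lemma:vgsa_cir_as}.
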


\begin{proof}
    From Lemma~\ref{lemma:vgsa_cir_as}, we have the almost sure convergence:
    \[
        \frac{VGSA(t)}{T(t)} = \frac{VG(T(t))}{T(t)} \xrightarrow{a.s.} \theta.
    \]
    Additionally, from Lemma~\ref{lemma:icir_as}, the integrated CIR process satisfies:
    \[
        \frac{T(t)}{t} \xrightarrow{a.s.} \eta.
    \]
    Combining these two results and applying the continuous mapping theorem yields:
    \begin{align*}
        \frac{1}{t} VGSA(t) = \frac{VG(T(t))}{T(t)} \cdot \frac{T(t)}{t} \xrightarrow{a.s.} \theta \cdot \eta,
    \end{align*}
    which completes the proof.
\end{proof}

We now establish the central limit theorem-type result for the VGSA process, characterizing its asymptotic distribution under appropriate scaling.

\begin{theorem}\label{thm:VGSA_normality}
    Consider VGSA~\eqref{eqn:vgsa_eqn}, with the stochastic clock $T(t)$ given by the integrated CIR (ICIR) process under the Feller condition \( 2\kappa\eta > \lambda^2 \). Then, as \( t \to \infty \), the following distributional convergence holds:
    \begin{align}
        \frac{1}{\sqrt{t}} \left(VGSA(t) - \eta \,\theta\, t \right) \distconv \mathcal{N} \left(0, \;\;\left( \theta^2\, \nu + \sigma^2 \right) \eta + \frac{\eta\, \theta^2\, \lambda^2}{\kappa^2} \right), \label{eqn:VGSA_dist_conv}
    \end{align}
    where $\distconv$ denotes convergence in distribution.
\end{theorem}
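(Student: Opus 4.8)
The plan is to identify the limiting characteristic function of $t^{-1/2}\big(VGSA(t)-\eta\theta t\big)$ directly. The starting point is the representation $VGSA(t)=VG(T(t))$ and the centering decomposition
\[
\frac{VGSA(t)-\eta\theta t}{\sqrt t}=\frac{VG(T(t))-\theta\,T(t)}{\sqrt t}+\theta\,\frac{T(t)-\eta t}{\sqrt t},
\]
whose second summand is governed by Lemma~\ref{lemma:ICIR_asymp}. Because the two summands are dependent, I will not attempt to prove joint convergence of the pair; instead I will condition on the ICIR clock, using the independence (built into the model~\eqref{eqn:vgsa_eqn}) of $T(\cdot)$ from the Gamma subordinator and the Brownian motion generating $VG(\cdot)$, and then pass to the limit inside the outer expectation.

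Concretely, fix $\xi\in\re$ and set $\phi_t(\xi)=\E\!\big[\exp(i\xi(VGSA(t)-\eta\theta t)/\sqrt t)\big]$. Conditioning on the clock and using the VG characteristic function~\eqref{eqn:vg_char} frozen at $T(t)$,
\[
\phi_t(\xi)=\E\!\left[\exp\!\Big(i\xi\theta\,\tfrac{T(t)-\eta t}{\sqrt t}\Big)\,g_t\big(T(t),\xi\big)\right],\qquad
g_t(s,\xi)=e^{-i\xi\theta s/\sqrt t}\Big(1-\tfrac{i\xi\theta\nu}{\sqrt t}+\tfrac{\sigma^2\xi^2\nu}{2t}\Big)^{-s/\nu}.
\]
The crux is a second-order expansion of $\log g_t(T(t),\xi)$: expanding $\log\!\big(1-i\xi\theta\nu/\sqrt t+\sigma^2\xi^2\nu/(2t)\big)$ to order $1/t$, the $O(\sqrt t)$ phase $i\xi\theta T(t)/\sqrt t$ cancels exactly against the first-order part of $(T(t)/\nu)\log(\cdots)$, leaving
\[
\log g_t\big(T(t),\xi\big)=-\frac{T(t)}{t}\cdot\frac{(\sigma^2+\theta^2\nu)\,\xi^2}{2}+\frac{T(t)}{t}\,O\!\big(t^{-1/2}\big).
\]
Since $T(t)/t\xrightarrow{a.s.}\eta$ by Lemma~\ref{lemma:icir_as}, this gives the almost sure limit $g_t(T(t),\xi)\xrightarrow{a.s.}G(\xi):=\exp\!\big(-\tfrac12\,\eta(\sigma^2+\theta^2\nu)\,\xi^2\big)$.

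I then split $\phi_t(\xi)=G(\xi)\,\E\!\big[e^{i\xi\theta(T(t)-\eta t)/\sqrt t}\big]+\E\!\big[e^{i\xi\theta(T(t)-\eta t)/\sqrt t}\big(g_t(T(t),\xi)-G(\xi)\big)\big]$. The first term tends to $G(\xi)\exp\!\big(-\tfrac12\,\eta\theta^2\lambda^2\kappa^{-2}\,\xi^2\big)$ by Lemma~\ref{lemma:ICIR_asymp} and continuity; the second term vanishes by dominated convergence, with dominating constant $2$ (since $|g_t(s,\cdot)|\le1$ and $|G|\le1$) and the integrand tending to $0$ almost surely by the previous step. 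Hence $\phi_t(\xi)\to\exp\!\big(-\tfrac12[(\theta^2\nu+\sigma^2)\eta+\eta\theta^2\lambda^2\kappa^{-2}]\,\xi^2\big)$, and Lévy's continuity theorem yields~\eqref{eqn:VGSA_dist_conv}; as a check, the limiting variance matches $\lim_{t\to\infty}\mathrm{Var}[VGSA(t)]/t$ computed from Lemma~\ref{lemma:vgsa_moments} with $u(t)\sim\eta t$ and $w(t)\sim\eta\kappa^{-2}t$.

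The main obstacle is the cancellation in the middle step: taken alone, $i\xi\theta T(t)/\sqrt t$ diverges because $T(t)/\sqrt t\to\infty$, so one has to carry the logarithmic expansion far enough to see that its $t^{-1/2}$ coefficient cancels the phase term exactly, leaving only a remainder of order $t^{-1/2}\,(T(t)/t)$ that ergodicity kills. Conceptually, this is the fact that $\theta$ — the long-run drift rate of $VG$ from Lemma~\ref{lemma:vg_as} — is exactly the quantity subtracted inside the bracket, so the surviving fluctuation of the subordinated part is driven by the VG variance coefficient $(\sigma^2+\theta^2\nu)$ evaluated along the clock, which is also why the two variance contributions simply add in the limit. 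A secondary, routine point is the interchange of limit and expectation, which relies on $|g_t|\le1$ and hence on the clock/noise independence embedded in~\eqref{eqn:vgsa_eqn}.
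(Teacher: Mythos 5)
Your proposal is correct, and it reaches the limit by a route that differs from the paper's at the decisive step. Both arguments condition on the clock, invoke the VG characteristic function \eqref{eqn:vg_char} at the rescaled argument $u/\sqrt t$, and perform the same second-order expansion of $\log\bigl(1-iu\theta\nu/\sqrt t+\sigma^2u^2\nu/(2t)\bigr)$ in which the $O(\sqrt t)$ phase cancels against $\theta T(t)/\sqrt t$, leaving the VG variance coefficient $(\sigma^2+\theta^2\nu)$ multiplied by $T(t)/t$. The difference is in how the remaining expectation over $T(t)$ is evaluated: the paper substitutes the complex exponent $s$ into an approximate Gaussian moment generating function $\E[e^{sT(t)}]\approx\exp\bigl(s\eta t+\tfrac12 s^2\eta\lambda^2 t/\kappa^2\bigr)$, a step it does not justify, whereas you factor the integrand exactly into the clock fluctuation $e^{i\xi\theta(T(t)-\eta t)/\sqrt t}$ (handled by the already-proved Lemma~\ref{lemma:ICIR_asymp}) times a factor $g_t(T(t),\xi)$ that converges almost surely to a constant by ergodicity, and then dispose of the cross term by bounded convergence using $|g_t|\le 1$. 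Your version buys rigour: it replaces the heuristic MGF approximation with Lévy continuity plus dominated convergence, at the modest cost of having to observe that $g_t$ is itself a conditional characteristic function (hence bounded by one) and of making the independence of the clock from the VG-generating noise explicit. The variance decomposition $(\theta^2\nu+\sigma^2)\eta+\eta\theta^2\lambda^2/\kappa^2$ emerges identically in both treatments, and your cross-check against Lemma~\ref{lemma:vgsa_moments} is consistent with $u(t)\sim\eta t$ and $w(t)\sim\eta t/\kappa^2$.
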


% \begin{lemma}
%     Let the VGSA process be defined by \eqref{eqn:vgsa_eqn}, and consider the $VG(\eta t)$ process. As $t \to \infty$, the characteristic functions of both processes become identical, implying that their characteristic functions are indistinguishable in the asymptotic limit.
% \end{lemma}

\subsection{Extending the Stochastic Clock: From CIR to CKLS}\label{subsec:ckls}
As previously discussed, the Stochastic Arrival (SA) component in the VGSA framework traditionally utilizes the Cox–Ingersoll–Ross (CIR) process to capture clustering effects in market activity. While effective, the CIR process is a special case of the more general CKLS model, which introduces an additional degree of flexibility by allowing for a nonlinear dependence of the diffusion term on the state variable. Importantly, the CKLS process retains positivity of the trajectories under mild regularity conditions (e.g., $\alpha > \frac{1}{2}$), making it a suitable candidate for modeling arrival intensities.

Given these advantages, it is natural to consider an extension of the VGSA framework by substituting the CIR clock with a CKLS process. This substitution enhances the model’s ability to capture more complex empirical features, provided that the corresponding integrated CKLS process remains almost surely increasing. Such an extension generalizes the VGSA architecture without compromising its foundational structure.

To proceed, we first present key properties of the CKLS process and its integrated form, which will be instrumental in deriving the asymptotic behavior and distributional properties of the extended VGSA model.

\begin{lemma}\label{lemma:ckls_stationary}
    Under the assumptions outlined above, the CKLS process defined by \eqref{eqn:ckls_diff_eqn} converges to a stationary distribution, and the boundary at zero is unattainable for all cases with $\alpha \in (0.5,1]$. For the special case $\alpha = \frac{1}{2}$, an additional condition $2\kappa\eta > \sigma^2$ is required to ensure strict positivity. The stationary density is given by:
    \begin{align}
        f(r) = C(\alpha) r^{-2\alpha} \exp(Q(r;\alpha)), \quad 
        C(\alpha)^{-1} = \int_0^\infty r^{-2\alpha} \exp(Q(r;\alpha))\, \mathrm{d}r, \label{eqn:stationary}
    \end{align}
    where $Q(r;\alpha)$ is defined as follows:
    \begin{align*}
        \text{For } \frac{1}{2} < \alpha < 1:\quad &
        Q(r;\alpha) = \frac{2\kappa}{\lambda^2} \left( \frac{\eta r^{1 - 2\alpha}}{1 - 2\alpha} - \frac{r^{2 - 2\alpha}}{2 - 2\alpha} \right), \\
        \text{For } \alpha = \frac{1}{2}:\quad &
        Q(r;\alpha) = \frac{2\kappa}{\lambda^2} \left( \eta \ln r - r \right), \\
        \text{For } \alpha = 1:\quad &
        Q(r;\alpha) = \frac{2\kappa}{\lambda^2} \left( -\frac{\eta}{r} - \ln r \right).
    \end{align*}
\end{lemma}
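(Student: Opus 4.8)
The plan is to analyze \eqref{eqn:ckls_diff_eqn} as a regular one-dimensional time-homogeneous diffusion on $(0,\infty)$ with drift $b(y)=\kappa(\eta-y)$ and dispersion $\sigma(y)=\lambda y^{\alpha}$, reading both the boundary behaviour and the invariant law off its scale and speed densities. Since $\sigma$ is only locally $\tfrac12$-H\"older (hence non-Lipschitz at $0$ when $\alpha<1$), I would first invoke the Yamada--Watanabe pathwise-uniqueness criterion to obtain a unique strong solution on $(0,\infty)$ up to the first hitting time $\tau_0$ of the origin; the boundary analysis below then gives $\tau_0=\infty$ a.s., so the solution is global and stays in $(0,\infty)$.

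Fixing $c\in(0,\infty)$, the scale density is $s'(z)=\exp\!\big(-\int_c^z \tfrac{2b(r)}{\sigma^2(r)}\,dr\big)$ and the speed density is $m(z)=\big(\sigma^2(z)\,s'(z)\big)^{-1}$. Because $\tfrac{2b(r)}{\sigma^2(r)}=\tfrac{2\kappa}{\lambda^2}\big(\eta r^{-2\alpha}-r^{1-2\alpha}\big)$, direct integration yields $\int_c^z \tfrac{2b(r)}{\sigma^2(r)}\,dr=Q(z;\alpha)-Q(c;\alpha)$, the three displayed forms of $Q$ being nothing more than the antiderivatives of $r^{-2\alpha}$ and $r^{1-2\alpha}$: powers when $\alpha\notin\{\tfrac12,1\}$, a logarithm when $2\alpha=1$, and a logarithm together with $-1/r$ when $2\alpha=2$. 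Thus $s'(z)\propto\exp(-Q(z;\alpha))$ and $m(z)\propto z^{-2\alpha}\exp(Q(z;\alpha))$, so the normalized speed density is precisely $f(r)=C(\alpha)\,r^{-2\alpha}\exp(Q(r;\alpha))$ with $C(\alpha)^{-1}=\int_0^\infty r^{-2\alpha}\exp(Q(r;\alpha))\,dr$ (this identification of the invariant density with $m$ normalized can also be read off the stationary Fokker--Planck equation $\tfrac12(\sigma^2 p)'=bp$). It remains to check (i) $C(\alpha)^{-1}<\infty$; (ii) inaccessibility of $0$; (iii) inaccessibility of $\infty$; with (i) in hand, (ii)--(iii) make the diffusion positive recurrent and ergodic, so the stationary law is unique, equals $f(r)\,dr$, and the marginals converge to it from any start.

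Points (i)--(iii) reduce to the asymptotics of $Q(z;\alpha)$ at the two ends, using that a boundary is inaccessible as soon as the scale function diverges there and that $C(\alpha)^{-1}<\infty$ is exactly integrability of $m$ at $0$ and $\infty$. As $z\to\infty$, the term $-\tfrac{2\kappa}{\lambda^2}\tfrac{z^{2-2\alpha}}{2-2\alpha}$ (or $-\tfrac{2\kappa}{\lambda^2}\ln z$ at $\alpha=1$) drives $Q(z;\alpha)\to-\infty$, so $m$ decays faster than any power and $s'=\exp(-Q)$ blows up, giving $\int^\infty m<\infty$ and $s(\infty)=+\infty$. As $z\downarrow 0$: for $\tfrac12<\alpha<1$ the leading term $\tfrac{2\kappa}{\lambda^2}\tfrac{\eta z^{1-2\alpha}}{1-2\alpha}$ (with $1-2\alpha<0$) and for $\alpha=1$ the term $-\tfrac{2\kappa\eta}{\lambda^2 z}$ both tend to $-\infty$, so $Q(z;\alpha)\to-\infty$, making $m$ integrable at $0$ while $\int_0\exp(-Q)=\infty$, i.e. $s(0+)=-\infty$; hence the origin is inaccessible with no extra restriction. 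The borderline and most delicate case is $\alpha=\tfrac12$: there $\exp(\pm Q(z;\tfrac12))$ is, up to the harmless factor $e^{\mp 2\kappa z/\lambda^2}$, only a power of $z$ near $0$, namely $s'(z)\asymp z^{-2\kappa\eta/\lambda^2}$ and $m(z)\asymp z^{2\kappa\eta/\lambda^2-1}$, so $s(0+)=-\infty$ holds exactly when $2\kappa\eta\ge\lambda^2$, and the strict condition $2\kappa\eta>\lambda^2$ in the statement both makes $0$ inaccessible and keeps $m$ integrable there. With both boundaries natural and $\int_0^\infty m<\infty$, standard ergodic theory for scalar diffusions (e.g. Khasminskii's criterion, or the associated Dirichlet form) finishes the proof, the normalization producing $C(\alpha)$. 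I expect the boundary bookkeeping at $\alpha=\tfrac12$, and checking that $\alpha\in(\tfrac12,1]$ needs no supplementary parameter condition, to be the main point requiring care.
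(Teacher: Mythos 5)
Your proof is correct, and it is worth noting that the paper itself does not prove this lemma at all: it simply points to \citet{andersenPit}. What you have written is precisely the standard argument that such a reference carries out — compute $2b/\sigma^2 = \tfrac{2\kappa}{\lambda^2}(\eta r^{-2\alpha}-r^{1-2\alpha})$, antidifferentiate to get the three case-split forms of $Q$, identify the invariant density with the normalized speed measure $m(z)\propto z^{-2\alpha}e^{Q(z;\alpha)}$ (equivalently, solve the zero-flux stationary Fokker--Planck equation $\tfrac12(\sigma^2 p)'=bp$), and then read off inaccessibility of $0$ and $\infty$ from divergence of the scale function together with integrability of $m$. Your boundary bookkeeping is right in all cases, including the delicate $\alpha=\tfrac12$ case where $s'(z)\asymp z^{-2\kappa\eta/\lambda^2}$ near the origin forces the Feller-type condition (note the lemma's ``$2\kappa\eta>\sigma^2$'' is a typo for $2\kappa\eta>\lambda^2$ in the paper's notation, which is what you correctly use), and your observation that $\alpha\in(\tfrac12,1]$ needs no supplementary condition because $Q\to-\infty$ at $0$ is the key point. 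Two cosmetic remarks: the claim that $m$ decays ``faster than any power'' at infinity fails at $\alpha=1$, where $m(z)\sim z^{-2-2\kappa/\lambda^2}$ is only polynomially small, but this is still integrable so the conclusion stands; and invoking Yamada--Watanabe for existence/uniqueness is a nice touch that the paper glosses over entirely. In short, you have supplied a complete, self-contained proof for a statement the paper outsources.
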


\begin{proof}
    See \citet{andersenPit} for a detailed derivation.
\end{proof}
To rigorously incorporate the CKLS process into our stochastic arrival (SA) framework, it is essential to establish its long-term behavior—specifically, its ergodic properties and the asymptotic distribution of the integrated process. These results not only parallel those derived for the CIR-based VGSA model but also provide the theoretical foundation necessary for extending the asymptotic analysis to more flexible SA-driven models such as CGMY-SA. In particular, we examine both strong consistency and asymptotic normality of the integrated CKLS process, which are critical for understanding the long-run behavior and distributional limits of time-changed Lévy models under CKLS dynamics.

\begin{theorem}\label{thm:ckls_strong_asymp}
    Let $\mathcal{Y}(t)$ be the integrated CKLS process defined by equations \eqref{eqn:ckls_diff_eqn} and \eqref{eqn:iergodic}. Then, under the stationarity and positivity assumptions for $r(t)$, we have the following asymptotic results:
    \begin{align}
        \lim_{t \to \infty} \frac{1}{t} \mathcal{Y}(t) \xrightarrow{a.s.} \eta, \label{eqn:ckls_strong}
    \end{align}
    and
    \begin{align}
        \frac{1}{\sqrt{t}}\left(\mathcal{Y}(t) - \eta t\right) \distconv \mathcal{N}\left(0, \frac{\lambda^2}{\kappa^2} \mathbb{E}\left[r^{2\alpha}\right]\right), \label{eqn:ckls_asymp}
    \end{align}
    where $r$ is a random variable following the stationary distribution specified in Lemma~\ref{lemma:ckls_stationary}.
\end{theorem}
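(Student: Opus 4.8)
The plan is to mirror the strategy used for the integrated CIR process in Lemmas~\ref{lemma:icir_as} and~\ref{lemma:ICIR_asymp}, but to replace the explicit CIR moment computations (which are unavailable here) by ergodic-theoretic arguments valid for the CKLS diffusion.

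\textbf{Strong consistency.} By Lemma~\ref{lemma:ckls_stationary}, for $\alpha \in (0.5,1]$ (with the extra condition $2\kappa\eta > \lambda^2$ when $\alpha = \tfrac12$) the CKLS diffusion $y(\cdot)$ is a positive recurrent one-dimensional diffusion on $(0,\infty)$ with stationary density $f$ as in~\eqref{eqn:stationary}. Such diffusions are ergodic, so Birkhoff's ergodic theorem---applied to the stationary version and then transferred to an arbitrary fixed initial value, since a positive recurrent scalar diffusion forgets its initial law---gives $\frac{1}{t}\int_0^t y(u)\,du \xrightarrow{a.s.} \int_0^\infty r\, f(r)\,dr$. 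To identify this limit with $\eta$, integrate~\eqref{eqn:ckls_diff_eqn} and take expectations in the stationary regime, where $\mathbb{E}_\pi[y(t)]$ is constant and the stochastic integral has zero mean; this yields $\kappa(\eta - \mathbb{E}_\pi[y]) = 0$, hence $\mathbb{E}_\pi[y] = \eta$, establishing~\eqref{eqn:ckls_strong}. (One must first check $\mathbb{E}_\pi[y] < \infty$, which follows from the tail behaviour of $f$.)

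\textbf{Asymptotic normality.} Integrating~\eqref{eqn:ckls_diff_eqn} over $[0,t]$ and rearranging gives the decomposition
\[
    \mathcal{Y}(t) - \eta t = -\frac{y(t) - y(0)}{\kappa} + \frac{\lambda}{\kappa}\, M_t, \qquad M_t := \int_0^t y(u)^\alpha\, dW_u ,
\]
where $M$ is a continuous local martingale with $\langle M\rangle_t = \int_0^t y(u)^{2\alpha}\,du$. Dividing by $\sqrt{t}$, the boundary term $\frac{y(t) - y(0)}{\kappa\sqrt{t}}$ tends to $0$ in probability because $y(t)$ converges in distribution and is therefore tight while $y(0)$ is fixed. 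For the martingale term, the ergodic argument above applied to the function $r \mapsto r^{2\alpha}$ gives $\frac{1}{t}\langle M\rangle_t \xrightarrow{a.s.} \mathbb{E}_\pi[r^{2\alpha}] =: \sigma^2 \in (0,\infty)$, and the martingale central limit theorem for continuous local martingales with asymptotically deterministic quadratic variation then yields $M_t/\sqrt{t} \distconv \mathcal{N}(0,\sigma^2)$. Hence $\frac{\lambda}{\kappa}\,M_t/\sqrt{t} \distconv \mathcal{N}\!\left(0, \frac{\lambda^2}{\kappa^2}\,\mathbb{E}[r^{2\alpha}]\right)$, and combining with the vanishing boundary term via Slutsky's theorem gives~\eqref{eqn:ckls_asymp}.

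\textbf{Main obstacle.} Relative to the CIR case, the difficulty is the lack of closed-form moment formulas: the argument rests entirely on (i) rigorously establishing ergodicity of the CKLS diffusion together with a.s.\ convergence of time averages from non-stationary initial data, and (ii) verifying finiteness (and positivity) of the stationary moments $\mathbb{E}_\pi[y]$ and $\mathbb{E}_\pi[y^{2\alpha}]$, which requires checking the integrability of $r^{-2\alpha} e^{Q(r;\alpha)}$ near both $0$ and $\infty$ in each of the three regimes of Lemma~\ref{lemma:ckls_stationary}. Once these structural facts are secured, the martingale-CLT step is essentially identical to the proof of Lemma~\ref{lemma:ICIR_asymp}.
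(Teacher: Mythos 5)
Your proposal is correct and rests on the same three pillars as the paper's proof (ergodic theorem applied to $y^{2\alpha}$, the martingale CLT, and Slutsky), but it arrives there through a different and in fact cleaner decomposition. The paper first writes $y$ via the variation-of-constants formula, integrates, and applies a stochastic Fubini argument to obtain $\mathcal{Y}(t)=\mathcal{U}(t)+\lambda\mathcal{I}(t)$ with $\mathcal{I}(t)=\frac{1}{\kappa}\int_0^t\bigl(1-e^{-\kappa(t-z)}\bigr)y(z)^{\alpha}\,dW(z)$; note that this $\mathcal{I}(t)$ is not a martingale in $t$ (the integrand depends on $t$), so invoking the martingale CLT on it strictly requires first peeling off the asymptotically negligible $e^{-\kappa(t-z)}$ correction. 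Your route --- integrating the SDE \eqref{eqn:ckls_diff_eqn} directly to get $\mathcal{Y}(t)-\eta t=-\frac{y(t)-y(0)}{\kappa}+\frac{\lambda}{\kappa}M_t$ with the genuine local martingale $M_t=\int_0^t y(u)^{\alpha}\,dW_u$ --- sidesteps that issue entirely, and the boundary term is disposed of by tightness of $y(t)$ rather than by an explicit bound on the deterministic part $\mathcal{U}(t)-\eta t$. You also make explicit two points the paper leaves implicit: the identification $\mathbb{E}_\pi[y]=\eta$ (via stationarity of the drift), which is what actually delivers \eqref{eqn:ckls_strong}, and the need to verify finiteness of the stationary moments $\mathbb{E}_\pi[y]$ and $\mathbb{E}_\pi[y^{2\alpha}]$ from the tail behaviour of the density in Lemma~\ref{lemma:ckls_stationary}, which the paper simply assumes. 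Both arguments buy the same theorem; yours is marginally more economical and more rigorous at the martingale-CLT step, while the paper's mirrors its earlier CIR computation (Lemma~\ref{lemma:ICIR_asymp}) term by term.
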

In Theorem~\ref{thm:VGSA_normality}, we observed that the parameters $\kappa$ and $\lambda$ appear jointly in the variance term, leading to an identifiability issue when the CIR process is used as the stochastic clock. This challenge stems from the asymptotic properties of the integrated CIR process, as established in Lemma~\ref{lemma:ICIR_asymp}, where the limiting variance depends only on the ratio $\lambda^2/\kappa^2$. To mitigate this issue, one may consider replacing the CIR process with the more general CKLS process, defined by the stochastic differential equation, and choose
 $\alpha > 0.5$ ensures the well-posedness of the diffusion. In particular, choosing $\alpha = 1$ simplifies the analysis and yields an explicit expression for $\mathbb{E}[r^2]$, where $\kappa$ and $\lambda$ appear in a separable form. This formulation removes the identifiability ambiguity and enhances the interpretability of parameter estimates for inference and model calibration purposes.

Specifically, for $\alpha = 1$, the expectation can be written as:
\begin{align*}
    \mathbb{E}[r^2] = \frac{\int_0^\infty \exp(Q(r,1)) \, dr}{\int_0^\infty r^{-2} \exp(Q(r,1)) \, dr}.
\end{align*}
Evaluating the denominator gives:
\begin{align*}
    \int_0^\infty r^{-2} \left( \frac{1}{r} \right)^{\frac{2\kappa}{\lambda^2}} \exp\left(-\frac{2\kappa\eta}{\lambda^2} \cdot \frac{1}{r} \right) dr 
    &= \int_0^\infty z^{\frac{2\kappa}{\lambda^2}} \exp\left(-\frac{2\kappa\eta}{\lambda^2} z \right) dz \quad \left( z = \frac{1}{r} \right) \\
    &= \left( \frac{2\kappa\eta}{\lambda^2} \right)^{ -\left(\frac{2\kappa}{\lambda^2} + 1\right)} \Gamma\left( \frac{2\kappa}{\lambda^2} + 1 \right).
\end{align*}

Hence, provided that \( \frac{2\kappa}{\lambda^2} > 1 \), we ensure the finiteness of \( \mathbb{E}[r^2] \), which can be explicitly computed as:
\begin{align*}
    \mathbb{E}[r^2] &= \left( \frac{2\kappa\eta}{\lambda^2} \right)^2 \cdot \frac{\Gamma\left( \frac{2\kappa}{\lambda^2} - 1 \right)}{\Gamma\left( \frac{2\kappa}{\lambda^2} + 1 \right)} \\
    &= \left( \frac{2\kappa\eta}{\lambda^2} \right)^2 \cdot \frac{1}{\frac{2\kappa}{\lambda^2} \left( \frac{2\kappa}{\lambda^2} - 1 \right)} \\
    &= \frac{2\kappa \eta^2}{2\kappa - \lambda^2}.
\end{align*}
This expression reveals that with $\alpha = 1$, the identifiability issue between $\kappa$ and $\lambda$ can be disentangled, thereby offering a more tractable alternative to the CIR model in practice.

\section{Asymptotic Properties of Time-Changed L\'evy Processes: Beyond Gamma Subordinators}\label{sec:SBSA}

Building upon the extension of the VGSA framework through the CKLS-based stochastic arrival process discussed in sub-section \ref{subsec:ckls}, we now consider a further generalization of the jump component of the model. Specifically, instead of employing the Variance Gamma (VG) process, we focus on the CGMY process—a more flexible class of pure-jump Lévy processes that has been widely explored in the literature (see \citet{svlp}). 

An important feature of the CGMY process is that it admits a natural representation as a time-changed Brownian motion:
\begin{align}
        SB(t) = \theta S(t) + \sigma W(S(t)). \label{eqn:brow_sub}
    \end{align}
where \( W(t) \) denotes a standard Brownian motion, \( S(t) \) is a Lévy subordinator, \( \theta \) is the drift parameter, and \( \sigma \) is the diffusion coefficient. This structural form closely mirrors that of the VG process, offering a convenient framework for analysis and extending the modeling flexibility. In particular, this formulation allows the incorporation of various stochastic clocks, such as those driven by CKLS-type dynamics, thereby unifying jump modeling with stochastic arrival intensity.

Before delving into the full asymptotic behavior of the time-changed CGMY process under CKLS-type clocks, we first establish the fundamental asymptotic properties of generic time-changed Brownian motions with Lévy subordinators. These results will serve as a foundational stepping stone for analyzing the limiting behavior of more complex market microstructure models.

\begin{theorem}\label{thm:brow_sub_strong_const}
    Let \( S(t) \) be a Lévy subordinator such that \( \mathbb{E}[S(1)] < \infty \), and define the subordinated Brownian motion by the equation \eqref{eqn:brow_sub}.
    Then, the process \( SB(t) \) satisfies the following strong law of large numbers:
    \begin{align}
        \lim_{t \to \infty} \frac{SB(t)}{t} \xrightarrow{a.s.} \theta \cdot \mathbb{E}[S(1)]. \label{eqn:brow_sub_strong_cons}
    \end{align}
\end{theorem}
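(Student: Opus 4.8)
The plan is to decompose the subordinated Brownian motion $SB(t) = \theta S(t) + \sigma W(S(t))$ into its drift part and its martingale-like part, handle each separately, and recombine. First I would write
\[
\frac{SB(t)}{t} = \theta \cdot \frac{S(t)}{t} + \sigma \cdot \frac{W(S(t))}{t}.
\]
For the first term, since $S(t)$ is a Lévy subordinator with $\mathbb{E}[S(1)] < \infty$, Theorem~\ref{theorem:subordi_as} applies directly and gives $S(t)/t \xrightarrow{a.s.} \mathbb{E}[S(1)]$. So the drift term converges almost surely to $\theta\,\mathbb{E}[S(1)]$, which is exactly the claimed limit; everything then reduces to showing the Brownian term is asymptotically negligible, i.e. $W(S(t))/t \xrightarrow{a.s.} 0$.

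To control $W(S(t))/t$, the key idea is to exploit the strong law for $S(t)$ together with the law of the iterated logarithm (or simply the known growth rate) for standard Brownian motion. Since $S(t)/t \to \mathbb{E}[S(1)] =: m \in (0,\infty)$ almost surely, for any $\epsilon > 0$ there is (almost surely) a finite random time $t_0$ such that $S(t) \leq (m+\epsilon)t$ for all $t \geq t_0$; also $S(t) \to \infty$ a.s. since $m > 0$ (the degenerate case $m = 0$ forces $S \equiv 0$ and is trivial). Using $\sup_{0 \le s \le u} |W(s)| / u \to 0$ a.s. as $u \to \infty$ (a consequence of the Brownian LIL, or of $|W(u)|/u \to 0$ together with monotonicity of the running max up to a $\log$ correction), I would bound
\[
\frac{|W(S(t))|}{t} \le \frac{\sup_{0 \le s \le (m+\epsilon)t} |W(s)|}{(m+\epsilon)t} \cdot (m+\epsilon) \xrightarrow{a.s.} 0,
\]
for $t \ge t_0$. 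Combining with the drift term via the obvious additivity of almost sure limits yields $SB(t)/t \xrightarrow{a.s.} \theta m = \theta\,\mathbb{E}[S(1)]$.

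The main obstacle is the rigorous justification that $W(S(t))/t \to 0$ almost surely, because $W$ and $S$ may be dependent in general and $S$ is only monotone, not continuous. The clean way around this is to first establish the deterministic-sounding fact $\sup_{0 \le s \le u}|W(s)|/u \to 0$ a.s. (which holds for \emph{any} standard Brownian motion regardless of what $S$ is), and then substitute the random argument $u = (m+\epsilon)t$, using that $S(t)$ eventually lies below this level by the strong law for the subordinator. This sidesteps any independence assumption between $W$ and $S$. A secondary minor point is handling $\mathbb{E}[S(1)] = 0$; there $\gamma = 0$ and $v \equiv 0$, so $S \equiv 0$ and the statement is immediate, so we may assume $m > 0$ throughout. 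I would also note that the same argument structure will be reused for the CGMY-SA generalizations later, so it is worth stating the Brownian-negligibility lemma cleanly.
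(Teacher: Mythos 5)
Your proposal is correct, and it follows the paper's overall decomposition $SB(t)/t = \theta\, S(t)/t + \sigma\, W(S(t))/t$ with the same treatment of the drift term via the strong law for subordinators. Where you genuinely differ is in the key step showing $W(S(t))/t \to 0$ almost surely. The paper argues via the scaling identity $W(S(t)) \overset{d}{=} \sqrt{S(t)}\,W(1)$ and then asserts that $\sqrt{S(t)} = O(t^{1/2})$ a.s.\ forces the ratio to vanish; strictly speaking this is a gap, since an equality in distribution at each fixed $t$ does not by itself yield almost-sure convergence of the time-indexed family, and it also implicitly leans on independence of $W$ and $S$. Your route --- first establishing the pathwise fact $\sup_{0 \le s \le u} |W(s)|/u \to 0$ a.s.\ (from the law of the iterated logarithm) and then substituting the random level $u = (m+\epsilon)t$, which eventually dominates $S(t)$ by the subordinator SLLN --- is a genuinely pathwise argument that closes this gap, needs no independence between $W$ and $S$, and handles the degenerate case $\mathbb{E}[S(1)] = 0$ explicitly (though under the paper's definition of a subordinator, $\gamma > 0$ already rules that case out). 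In short, your version proves the same statement by a more careful mechanism for the Brownian negligibility, and as you note it is the form of the lemma that is actually reusable in the later SA-subordinated generalizations.
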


\begin{proof}
    Since \( S(t) \) is a Lévy subordinator with finite mean \( \mathbb{E}[S(1)] < \infty \), it follows from the strong law of large numbers for Lévy processes (see Theorem~\ref{thm:levy_strong}) that
    \[
        \lim_{t \to \infty} \frac{S(t)}{t} \xrightarrow{\text{a.s.}} \mathbb{E}[S(1)].
    \]

    Now consider the diffusion component \( W(S(t)) \). Using the scaling property of Brownian motion, we can write:
    \[
        W(S(t)) \overset{d}{=} \sqrt{S(t)} \cdot W(1).
    \]
    Therefore,
    \[
        \frac{W(S(t))}{t} \overset{d}{=} \frac{W(1) \cdot \sqrt{S(t)}}{t}.
    \]
    Since \( \sqrt{S(t)} = O(t^{1/2}) \) almost surely, it follows that
    \[
        \frac{W(S(t))}{t} \xrightarrow{\text{a.s.}} 0 \quad \text{as } t \to \infty.
    \]

    Combining both components, the subordinated process is given by:
    \[
        SB(t) = \theta S(t) + \sigma W(S(t)).
    \]
    Dividing through by \( t \), we have:
    \[
        \frac{SB(t)}{t} = \theta \cdot \frac{S(t)}{t} + \sigma \cdot \frac{W(S(t))}{t}.
    \]
    Using the limits established above, we obtain:
    \[
        \lim_{t \to \infty} \frac{SB(t)}{t} \xrightarrow{\text{a.s.}} \theta \cdot \mathbb{E}[S(1)] + 0 = \theta \cdot \mathbb{E}[S(1)],
    \]
    which establishes the result.
\end{proof}

To rigorously establish the asymptotic normality of subordinated Brownian motion processes, we impose a regularity condition on the Lévy measure \( v(dx) \) associated with the subordinator process \( S(t) \). Specifically, we require that the Lévy density exhibits sufficiently fast decay at infinity to ensure well-behaved moments:

\begin{description}
    \item[\textbf{Assumption A1:}] The Lévy measure \( v(dx) \) exhibits exponential decay, i.e.,
    \[
    \int_0^\infty e^x \, v(dx) < \infty.
    \]
\end{description}

This condition ensures that the second moment of the subordinator \( S(t) \) exists, which is crucial for proving the central limit behavior of the subordinated process.

\begin{theorem}\label{thm:SB_asymptotic}
    Let \( S(t) \) be a Lévy subordinator and let \( W(\cdot) \) be a standard Brownian motion independent of \( S(t) \). Define the subordinated Brownian motion as in equation \eqref{eqn:brow_sub} and 
    assume that~\textbf{A1} holds. Then, the process satisfies the following distributional convergence:
    \begin{align}
        \frac{1}{\sqrt{t}} \left( SB(t) - \theta \mathbb{E}[S(1)] t \right) \distconv \mathcal{N}\left(0, \sigma^2_{\mathrm{SB}}\right), \label{eqn:conv_subor}
    \end{align}
    where the asymptotic variance is given by
    \[
    \sigma^2_{\mathrm{SB}} = \sigma^2 \mathbb{E}[S(1)] + \theta^2 \int_{(0,\infty)} x^2 \, v(dx).
    \]
\end{theorem}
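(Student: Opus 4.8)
The plan is to establish \eqref{eqn:conv_subor} by proving pointwise convergence of characteristic functions and then invoking Lévy's continuity theorem. First I would write
\[
SB(t) - \theta\,\E[S(1)]\,t \;=\; \theta\bigl(S(t) - \E[S(1)]\,t\bigr) \;+\; \sigma\,W(S(t)),
\]
and condition on $\sigF^{S} := \sigma\bigl(S(u) : u \ge 0\bigr)$. Since $W \indep S$ and, conditionally on $\sigF^{S}$, the variable $W(S(t))$ is centered Gaussian with variance $S(t)$, the $\sigF^{S}$-conditional characteristic function of $\sigma W(S(t))/\sqrt t$ is $\exp\!\bigl(-\tfrac12 u^2\sigma^2 S(t)/t\bigr)$. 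Taking iterated expectations, the characteristic function $\phi_t(u)$ of $\bigl(SB(t)-\theta\E[S(1)]t\bigr)/\sqrt t$ becomes
\[
\phi_t(u) \;=\; \E\!\left[\exp\!\left(i u\theta\,\frac{S(t)-\E[S(1)]\,t}{\sqrt t} \;-\; \frac{u^2\sigma^2}{2}\cdot\frac{S(t)}{t}\right)\right].
\]
This representation is the heart of the argument: it collapses the two dependent pieces of $SB(t)$ into a single expectation driven only by the subordinator $S$ (equivalently one could note $W(S(t))/\sqrt{S(t)}$ is standard normal and independent of $S$, but this conditional form is cleaner for the joint limit).

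Next I would freeze the second exponential at its almost sure limit. By Theorem~\ref{theorem:subordi_as} (equivalently Theorem~\ref{thm:levy_strong}), $S(t)/t \to \E[S(1)]$ almost surely, so $\exp\!\bigl(-\tfrac12 u^2\sigma^2 S(t)/t\bigr)\to \exp\!\bigl(-\tfrac12 u^2\sigma^2\E[S(1)]\bigr)$ a.s. Splitting
\[
\phi_t(u) \;=\; e^{-\frac12 u^2\sigma^2\E[S(1)]}\,\E\!\left[e^{\,iu\theta(S(t)-\E[S(1)]t)/\sqrt t}\right] \;+\; R_t(u),
\]
with $R_t(u) = \E\!\bigl[e^{\,iu\theta(S(t)-\E[S(1)]t)/\sqrt t}\bigl(e^{-\frac12 u^2\sigma^2 S(t)/t} - e^{-\frac12 u^2\sigma^2\E[S(1)]}\bigr)\bigr]$, one gets $\abs{R_t(u)} \le \E\,\abs{e^{-\frac12 u^2\sigma^2 S(t)/t} - e^{-\frac12 u^2\sigma^2\E[S(1)]}}$; the integrand lies in $[0,1]$ and tends to $0$ a.s., so dominated convergence gives $R_t(u)\to 0$.

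It then remains to prove the ordinary central limit theorem for the subordinator itself, namely $\bigl(S(t)-\E[S(1)]t\bigr)/\sqrt t \distconv \mathcal N\bigl(0,\int_{(0,\infty)} x^2 v(dx)\bigr)$. Decomposing $S(t)$ along integer times into a sum of i.i.d.\ increments $S(k)-S(k-1)$ plus the remainder $S(t)-S(\lfloor t\rfloor)$, and noting that monotonicity forces $0\le S(t)-S(\lfloor t\rfloor)\le S(\lfloor t\rfloor+1)-S(\lfloor t\rfloor)\overset{d}{=}S(1)$, so the remainder is $o_P(\sqrt t)$, the Lindeberg–Lévy CLT applies provided $\var(S(1))<\infty$. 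This is precisely where Assumption~\textbf{A1} enters: since $x^2\le c\,e^{x}$ for $x\ge 1$, $\int_{[1,\infty)} x^2 v(dx)\le c\int_0^\infty e^{x} v(dx)<\infty$, while $\int_{(0,1)} x^2 v(dx)\le\int_{(0,1)} x\,v(dx)<\infty$ holds automatically for a subordinator, and (using $c=0$ and differentiating the Lévy–Khintchine exponent twice) $\var(S(1))=\int_{(0,\infty)} x^2 v(dx)$. Combining the two limits, $\phi_t(u)\to\exp\!\bigl(-\tfrac12 u^2(\sigma^2\E[S(1)]+\theta^2\int_{(0,\infty)} x^2 v(dx))\bigr)$, the characteristic function of $\mathcal N(0,\sigma^2_{\mathrm{SB}})$, and Lévy's continuity theorem finishes the proof. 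The main obstacle is the dependence between $S(t)$ and $W(S(t))$, which rules out simply adding two independent Gaussian limits; the conditioning step circumvents it, after which everything reduces to the elementary CLT for the Lévy process $S$ plus a bounded-convergence estimate, with the CLT at non-integer times and the moment bound from \textbf{A1} being routine.
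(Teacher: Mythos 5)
Your proposal is correct, and it reaches the stated limit by a route whose core mechanism differs from the paper's. Both arguments start identically: condition on the subordinator so that the characteristic function of $SB(t)$ collapses to an expectation over $S$ alone. From there the paper stays entirely inside the L\'evy--Khintchine representation, substituting $u\mapsto u/\sqrt t$ into the exponent $u'\gamma+\int(e^{u'x}-1)\,v(dx)$ with $u'=iu\theta/\sqrt t-u^2\sigma^2/(2t)$ and reading the Gaussian limit off a second-order Taylor expansion, with Assumption \textbf{A1} invoked to tame the remainder $R(1/t)$. You instead split the limit into two separately controlled pieces: the Brownian contribution is frozen at its almost-sure limit via $S(t)/t\to\mathbb{E}[S(1)]$ plus a bounded-convergence estimate on $R_t(u)$, and the subordinator's own fluctuations are handled by the classical Lindeberg--L\'evy CLT along integer times (with the monotone remainder $S(t)-S(\lfloor t\rfloor)$ dominated by a copy of $S(1)$, hence $o_P(\sqrt t)$), using \textbf{A1} only to guarantee $\int_{(0,\infty)}x^2\,v(dx)<\infty$ and identifying this integral with $\var(S(1))$. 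What your version buys is rigor at the point where the paper is loosest --- the uniform control of the Taylor remainder of $\int(e^{u'x}-1)v(dx)$ for complex $u'$ is never actually carried out there, whereas your dominated-convergence step and the i.i.d.\ CLT are airtight; what the paper's version buys is a single self-contained computation that also exhibits the exact finite-$t$ characteristic function and transfers directly to the time-changed setting of Theorem~\ref{thm:SBSA_asymp}. The one cosmetic slip in your write-up is calling the integrand of $R_t(u)$ an element of $[0,1]$ when it is the absolute difference of two such numbers, which changes nothing.
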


In \citet{svlp}, the authors introduce the framework of Stochastic Volatility Lévy Processes (SVLP), which captures market features such as jumps and volatility clustering through time-changed Lévy models. Rather than adopting the full generality of the SVLP structure, we focus on a specific and tractable subclass: the \emph{Subordinated Brownian motion with Stochastic Arrival} (SBSA) process, where the subordinated Brownian motion (SB) is defined by equation~\eqref{eqn:brow_sub}.

In this setup, we consider a time-changed construction in which the stochastic arrival is given by the integrated CKLS process. This approach retains the flexibility of the CKLS model—allowing for nonlinearity in the diffusion term—while preserving the monotonicity necessary for valid time changes. Concretely, we define a composite process where a Lévy subordinator is first evaluated at the integrated CKLS time, and the resulting process then serves as the input to a Brownian motion. This hierarchical structure, characteristic of SBSA models, is capable of capturing both temporal clustering in arrival times and jump behavior—features that are empirically observed in high-frequency financial data. The theorem below characterizes the asymptotic properties of models of this type.

\begin{theorem}\label{thm:SBSA_asymp}
Let the CKLS process be defined by the stochastic differential equation \eqref{eqn:ckls_diff_eqn}
where \( \alpha \in [0.5, 1] \). If \( \alpha = 0.5 \), we assume the Feller condition \( 2\kappa\eta > \lambda^2 \) to ensure positivity. Let the integrated CKLS process be given by equation \eqref{eqn:iergodic}
which is almost surely strictly increasing in \( t \). Now, let \( S(t) \) be a Lévy subordinator satisfying Assumption~\textbf{A1}, and let \( SB(t) \) be the subordinated Brownian motion defined in \eqref{eqn:brow_sub}. Define the time-changed process by:
\begin{align}
    SBSA(t) = \theta S(\mathcal{Y}(t)) + \sigma W(S(\mathcal{Y}(t))). \label{eqn:gen_vgtype}
\end{align}
Then, the process \( SBSA(t) \) satisfies the following asymptotic normality:
\begin{align}
    \frac{1}{\sqrt{t}} \left( SBSA(t) - \theta\,\eta\,\mathbb{E}[S(1)]\, t \right) \distconv \mathcal{N}(0, \sigma_1^2), \label{eqn:sbsa_all}
\end{align}
where the asymptotic variance \( \sigma_1^2 \) is given by
\[
\sigma_1^2 = \sigma_{SB}^2 \, \eta + \frac{\theta^2 \lambda^2}{\kappa^2} \mathbb{E}[r^{2\alpha}]\, \mathbb{E}[S(1)]^2,
\]
where $\sigma^2_{SB}$ defined from previous theorem \ref{thm:SB_asymptotic} and \( r \) denotes the stationary distribution of the CKLS process.
\end{theorem}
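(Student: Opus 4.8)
The plan is to reduce the statement to the two asymptotic theorems already in hand: Theorem~\ref{thm:SB_asymptotic} for the subordinated Brownian motion and Theorem~\ref{thm:ckls_strong_asymp} for the integrated CKLS clock, exploiting the identity $SBSA(t) = SB(\mathcal{Y}(t))$. Set $M(s) := SB(s) - \theta\,\mathbb{E}[S(1)]\,s$. Since $SB$ is itself a Lévy process (a Brownian motion with drift time-changed by a Lévy subordinator), and since Assumption~\textbf{A1} forces $\int_{(0,\infty)} x^2\,v(dx) < \infty$ and hence $\mathrm{Var}(SB(1)) = \sigma^2\mathbb{E}[S(1)] + \theta^2\int_{(0,\infty)}x^2\,v(dx) = \sigma_{SB}^2 < \infty$, the process $M$ is a centered Lévy process with $\mathrm{Var}(M(1)) = \sigma_{SB}^2$. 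I would then decompose
\[
SBSA(t) - \theta\,\eta\,\mathbb{E}[S(1)]\,t \;=\; \underbrace{M\big(\mathcal{Y}(t)\big)}_{=:\,R_1(t)} \;+\; \underbrace{\theta\,\mathbb{E}[S(1)]\big(\mathcal{Y}(t) - \eta t\big)}_{=:\,R_2(t)}.
\]
The term $R_2(t)/\sqrt t$ converges in distribution to $\mathcal{N}\!\big(0,\ \theta^2\mathbb{E}[S(1)]^2\,\lambda^2\mathbb{E}[r^{2\alpha}]/\kappa^2\big)$ directly from the CLT half of Theorem~\ref{thm:ckls_strong_asymp}.

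For $R_1(t)/\sqrt t$ I would condition on the $\sigma$-algebra $\mathcal{F}^{\mathcal{Y}}$ generated by the entire CKLS path, which is independent of the pair $(S,W)$ driving $SB$. Given $\mathcal{F}^{\mathcal{Y}}$, the number $\mathcal{Y}(t)$ is deterministic and $M$ is still a centered Lévy process independent of it, so with $\psi_M$ its characteristic exponent,
\[
\mathbb{E}\!\left[\,e^{\,iu R_1(t)/\sqrt t}\;\middle|\;\mathcal{F}^{\mathcal{Y}}\right] \;=\; \exp\!\big(\mathcal{Y}(t)\,\psi_M(u/\sqrt t)\big).
\]
Because $\psi_M(0) = \psi_M'(0) = 0$ and $\psi_M''(0) = -\sigma_{SB}^2$, a second-order Taylor expansion gives $\psi_M(u/\sqrt t) = -\tfrac{\sigma_{SB}^2 u^2}{2t} + o(1/t)$; combining this with $\mathcal{Y}(t)/t \xrightarrow{a.s.} \eta$ from Theorem~\ref{thm:ckls_strong_asymp} yields $\mathcal{Y}(t)\,\psi_M(u/\sqrt t) \xrightarrow{a.s.} -\tfrac{\sigma_{SB}^2 u^2\eta}{2}$, so the conditional characteristic function of $R_1(t)/\sqrt t$ tends almost surely to the constant $e^{-\sigma_{SB}^2 u^2\eta/2}$.

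To assemble the two pieces I would write the full characteristic function by conditioning once more, using that $R_2(t)$ is $\mathcal{F}^{\mathcal{Y}}$-measurable:
\[
\mathbb{E}\!\left[\,e^{\,iu\big(R_1(t)+R_2(t)\big)/\sqrt t}\right] \;=\; \mathbb{E}\!\left[\,e^{\,iu R_2(t)/\sqrt t}\,\exp\!\big(\mathcal{Y}(t)\,\psi_M(u/\sqrt t)\big)\right].
\]
Here $\exp(\mathcal{Y}(t)\psi_M(u/\sqrt t))$ converges a.s.\ to $e^{-\sigma_{SB}^2 u^2\eta/2}$ and is bounded by $1$ in modulus, while $e^{iuR_2(t)/\sqrt t}$ is bounded by $1$ and converges in distribution; by Slutsky's theorem the product converges in distribution, and by bounded convergence the expectation converges to $e^{-\sigma_{SB}^2 u^2\eta/2}\cdot e^{-\frac12 u^2\theta^2\mathbb{E}[S(1)]^2\lambda^2\mathbb{E}[r^{2\alpha}]/\kappa^2} = e^{-\frac12 u^2\sigma_1^2}$, which by Lévy's continuity theorem gives the claim. (Equivalently, one can run $R_1$ through an Anscombe-type random-time-change CLT built on the functional CLT for $M$ together with $\mathcal{Y}(t)/t\to\eta$, and obtain asymptotic independence of $R_1$ and $R_2$ via stable convergence; the characteristic-function route avoids Skorokhod-space machinery.)

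The step I expect to be the genuine obstacle is the rigorous handling of this ``mixed'' limit: $R_1(t)/\sqrt t$ converges only conditionally on $\mathcal{F}^{\mathcal{Y}}$ (to a limit law that, fortunately, does not depend on the clock path), whereas $R_2(t)/\sqrt t$ converges only in distribution, and one must justify that combining an almost-sure limit of a conditional characteristic function with a distributional limit of an $\mathcal{F}^{\mathcal{Y}}$-measurable factor is legitimate — this is exactly where boundedness of the integrands and a careful Slutsky/bounded-convergence argument (or, abstractly, stable convergence of $R_1(t)/\sqrt t$ relative to $\mathcal{F}^{\mathcal{Y}}$) is essential, and where the independence of $(S,W)$ from the CKLS noise is used. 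The ancillary facts — that \textbf{A1} implies $\int x^2 v(dx) < \infty$ and hence $\sigma_{SB}^2 < \infty$, that $SB$ and therefore $M$ is genuinely a Lévy process so the conditional characteristic function is \emph{exactly} $\exp(\mathcal{Y}(t)\psi_M(\cdot))$, and that $\psi_M$ is twice differentiable at the origin with $\psi_M''(0) = -\sigma_{SB}^2$ — are routine and can be quoted from (the proof of) Theorem~\ref{thm:SB_asymptotic}.
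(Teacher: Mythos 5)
Your proposal is correct, and it reaches the same variance decomposition as the paper (a ``within-clock'' contribution $\eta\,\sigma_{SB}^2$ plus a ``clock-fluctuation'' contribution $\theta^2\mathbb{E}[S(1)]^2\lambda^2\mathbb{E}[r^{2\alpha}]/\kappa^2$) by a technically different and in fact more careful route. The paper stays entirely inside one characteristic-function computation: it conditions on $\mathcal{Y}(t)$ to get $\mathbb{E}[e^{iu\,SBSA(t)}]=\mathbb{E}[e^{\mathcal{S}\mathcal{Y}(t)}]$ with $\mathcal{S}$ the (complex) Laplace exponent of $SB$ evaluated at $u/\sqrt t$, then substitutes the approximate Gaussian form $\mathbb{E}[e^{\mathcal{S}\mathcal{Y}(t)}]\approx\exp\bigl(\mathcal{S}\eta t+\tfrac12\mathcal{S}^2\tfrac{\lambda^2}{\kappa^2}\mathbb{E}[r^{2\alpha}]t\bigr)$ suggested by Theorem~\ref{thm:ckls_strong_asymp}, and Taylor-expands $\mathcal{S}$ and $\mathcal{S}^2$. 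That substitution is the formal/heuristic step of the paper's argument, since a CLT for $\mathcal{Y}(t)$ does not by itself license convergence of its moment generating function at a $t$-dependent complex argument. Your additive decomposition $SBSA(t)-\theta\eta\mathbb{E}[S(1)]t=M(\mathcal{Y}(t))+\theta\mathbb{E}[S(1)](\mathcal{Y}(t)-\eta t)$ replaces that step by two separately justified limits: the conditional characteristic function $\exp(\mathcal{Y}(t)\psi_M(u/\sqrt t))$ of the first piece converges a.s.\ to the constant $e^{-\eta\sigma_{SB}^2u^2/2}$ (using only $\mathcal{Y}(t)/t\to\eta$ a.s.\ and $t\,\psi_M(u/\sqrt t)\to-\sigma_{SB}^2u^2/2$), the second piece is handled by the distributional half of Theorem~\ref{thm:ckls_strong_asymp}, and the gluing estimate $\bigl|\mathbb{E}[X_tY_t]-c\,\mathbb{E}[X_t]\bigr|\le\mathbb{E}|Y_t-c|\to 0$ with both factors bounded by $1$ is exactly the right way to combine an a.s.\ limit of an $\mathcal{F}^{\mathcal{Y}}$-conditional characteristic function with a distributional limit of an $\mathcal{F}^{\mathcal{Y}}$-measurable factor. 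What your route buys is rigor (no unproved MGF convergence) and a transparent explanation of why the two variance contributions simply add (asymptotic independence via stable convergence); what the paper's route buys is a shorter, purely computational derivation that also exhibits the exact pre-limit characteristic exponent. The ancillary facts you list (A1 $\Rightarrow\int x^2v(dx)<\infty$, $SB$ a Lévy process with $\psi_M''(0)=-\sigma_{SB}^2$, independence of $(S,W)$ from the CKLS noise) are indeed available from Theorem~\ref{thm:SB_asymptotic} and the model assumptions, so no genuine gap remains.
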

\textit{\bf Remark:} The process \( SBSA(t) \) can be interpreted as being generated by the composition (or convolution) \( W \circ S \circ \mathcal{Y}(t) \), representing a nested stochastic mechanism that integrates diffusion, jump and stochastic arrival components.

\section{Simulation Results}\label{sec:simu_res}

In this section, we present a simulation-based investigation to compare the long-term behavior of the Variance Gamma (VG) process and its time-changed counterpart, the Variance Gamma with Stochastic Arrival (VGSA) process. Specifically, we examine the scaled processes $\mathrm{VG}(t)/t$ and $\mathrm{VGSA}(t)/t$ across a range of time points and parameter configurations. The goal is to understand the influence of the CIR-based stochastic clock on the scaling behavior of the VG process.

Furthermore, we assess the asymptotic normality of these time-scaled processes and extend our simulation study to include more general time-changed structures, such as the Subordinated Brownian–Stochastic Arrival (SBSA) process, as introduced in Section~\ref{sec:SBSA}.

\subsection{Simulation related to VGSA}
We simulate both $\mathrm{VG}(t)/t$ and $\mathrm{VGSA}(t)/t$ over a grid of time values $t \in \{1, 2, \dots, 200\}$ for various parameter settings. This allows us to visually and quantitatively assess their convergence properties and to explore how the stochastic arrival mechanism driven by the CIR subordinator impacts the trajectory of the scaled process.

The following parameter combinations are used (see figure \ref{fig:VGSA_and_VG_wrt_t}):
\begin{figure}[!ht]
    \centering
    \includegraphics[width=0.8\linewidth]{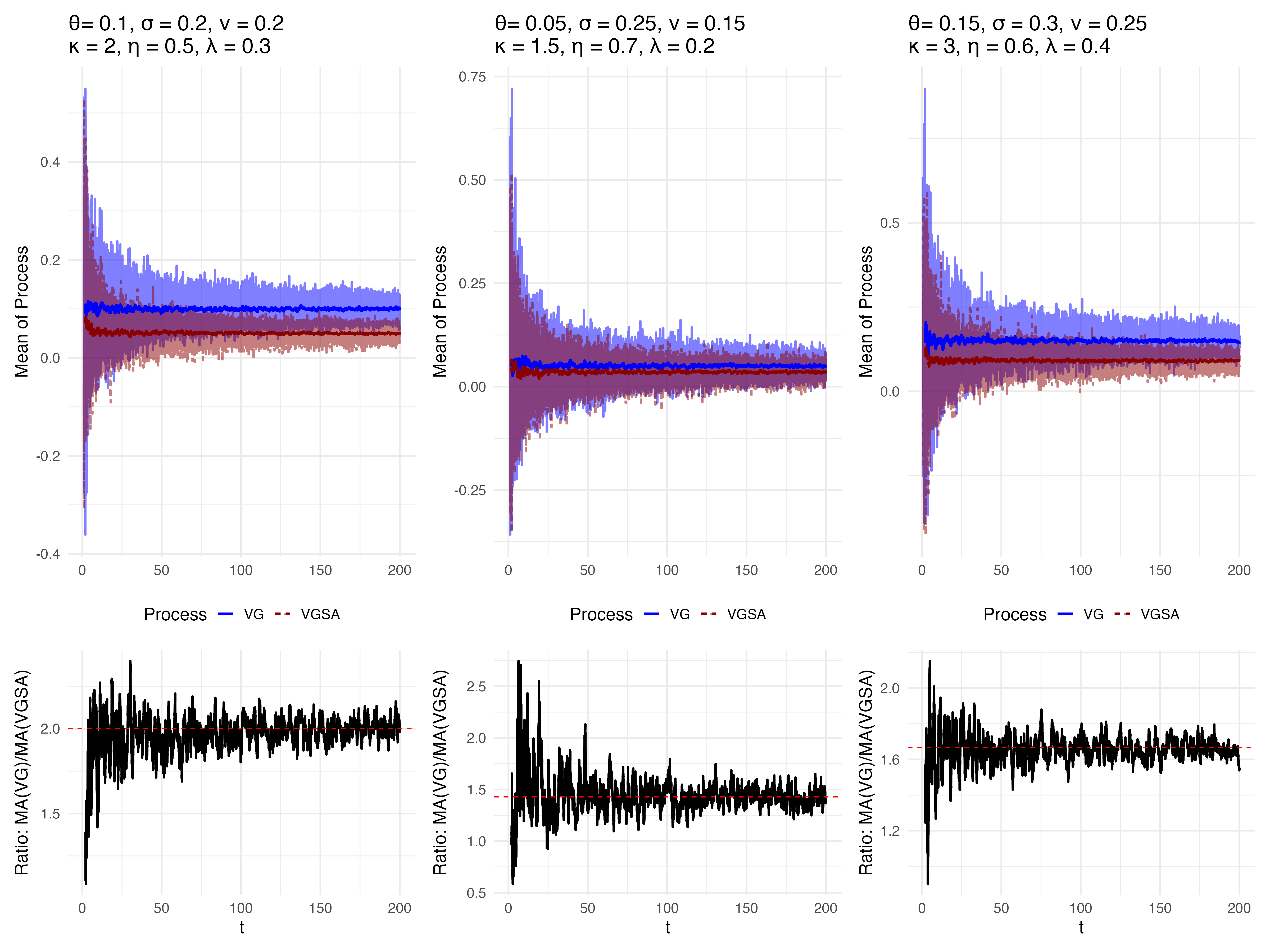}
    \caption{Empirical means of $\mathrm{VG}(t)/t$ and $\mathrm{VGSA}(t)/t$ plotted over $t \in \{1, 2, \ldots, 200\}$ for three different parameter sets. The VGSA process incorporates a stochastic time change via a CIR subordinator, which introduces additional nonlinearity and randomness. Each facet corresponds to a distinct set of parameters $(\theta, \sigma, \nu, \kappa, \eta, \lambda)$. Bottom row: Ratio of moving average (MA) of VG to VGSA sample paths, illustrating the relative volatility reduction introduced by the stochastic arrival (SA) mechanism. Each column corresponds to a different parameter configuration for $(\theta, \sigma, \nu, \kappa, \eta, \lambda)$. This is consistent with the results \ref{lemma:vg_as} and \ref{lemma:vgsa_as}.
}
    \label{fig:VGSA_and_VG_wrt_t}
\end{figure}

\begin{itemize}
    \item \textbf{Set P1:} $\theta = 0.1$, $\sigma = 0.2$, $\nu = 0.2$, $\kappa = 2.0$, $\eta = 0.5$, $\lambda = 0.3$
    \item \textbf{Set P2:} $\theta = 0.05$, $\sigma = 0.25$, $\nu = 0.15$, $\kappa = 1.5$, $\eta = 0.7$, $\lambda = 0.2$
    \item \textbf{Set P3:} $\theta = 0.15$, $\sigma = 0.3$, $\nu = 0.25$, $\kappa = 3.0$, $\eta = 0.6$, $\lambda = 0.4$
\end{itemize}

Each simulation is parallelized using the \texttt{parallel} package in \texttt{R} to improve computational efficiency. For every parameter configuration, we compute the empirical mean of both $\mathrm{VG}(t)/t$ and $\mathrm{VGSA}(t)/t$ over a range of time points. The results are visualized using line plots, faceted by parameter sets. Each plot illustrates the temporal evolution of the scaled processes, as discussed in Section~\ref{sec:VGSA}, particularly in Lemmas~\ref{lemma:vg_as} and~\ref{lemma:vgsa_as}. This facilitates a clear comparison of how the CIR-based stochastic clock influences the long-term behavior of the VG process under time change.

The VG process exhibits linear scaling in $t$, and hence $\mathrm{VG}(t)/t$ is expected to converge to a constant. In contrast, the VGSA process tends to exhibit lower variance due to the mean-reverting nature of the CIR subordinator, which regulates the variability of the stochastic clock. However, as $t$ becomes large, both $\mathrm{VG}(t)/t$ and $\mathrm{VGSA}(t)/t$ tend to stabilize, exhibiting approximately constant scaling with respect to $t$.

Next we confirm the normality of the VGSA process, which corresponds to theorem \ref{thm:VGSA_normality}. The simulation results (figure \ref{fig:VGSA_normality}) for the subordinated process $VGSA(T)$ across six different parameter configurations suggest that the resulting distributions are indeed approximately normal. This observation is supported by the Shapiro-Wilk test, where all p-values are well above the typical significance level of $0.05$, indicating no strong evidence against normality. Visually, the histograms and kernel density estimates further reinforce this, showing bell-shaped and symmetric patterns across all scenarios. Although minor deviations such as slight skewness or tail heaviness are observed in some sets—particularly where the VG drift parameter $\theta$ is non-zero or the CIR volatility is relatively high—the overall structure remains close to Gaussian. Hence, we conclude that under these settings, the subordinated VG process behaves more or less like a normally distributed random variable.

% Now in this section we will try to simulate the corresponding  results and see that normality validates or not. For that reason, at first we will simulate for VG process and look for strong consistency.

\begin{figure}[!ht]
    \centering
    \includegraphics[width=0.8\linewidth]{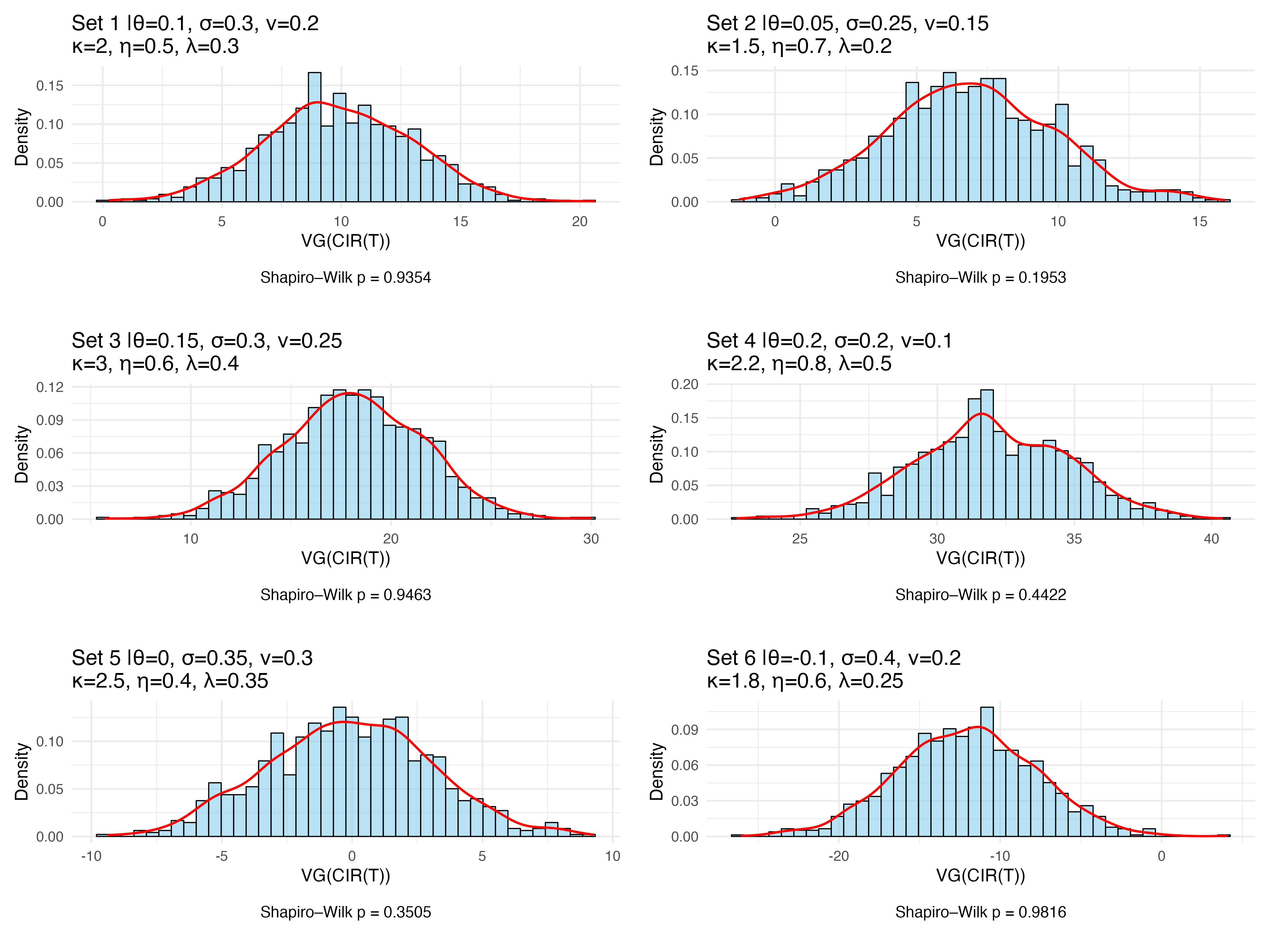}
    \caption{Histograms and kernel density estimates of the subordinated process $\mathrm{VG}(\mathrm{CIR}(T))$ under six different parameter settings. Each panel displays the empirical distribution based on 1000 Monte Carlo simulations, along with the corresponding Shapiro–Wilk $p$-value for normality testing. The results suggest that the distributions are approximately normal in all cases, consistent with Theorem~\ref{thm:VGSA_normality}.}

    \label{fig:VGSA_normality}
\end{figure}

\subsection{Simulation of SBSA Models}

We now shift our focus to the SBSA framework with more general Lévy processes. As discussed in Section~\ref{sec:SBSA}, the CGMY model \citep{cgmy} serves as a natural extension of the Variance Gamma (VG) process. The flexibility makes CGMY particularly well-suited for capturing complex empirical features of asset returns.

In this simulation study, we consider two extensions: the CGMY process time-changed by a CIR subordinator, denoted as $\mathrm{CGMY}(\mathrm{CIR}(t))$, and the CGMY process time-changed by a CKLS subordinator, denoted as $\mathrm{CGMY}(\mathrm{CKLS}(t))$. These models help us evaluate how different stochastic clocks influence the behavior of CGMY-type processes, particularly in the presence of time-varying market activity and volatility clustering.

\begin{figure}[!ht]
    \centering
    \includegraphics[width=0.8\linewidth]{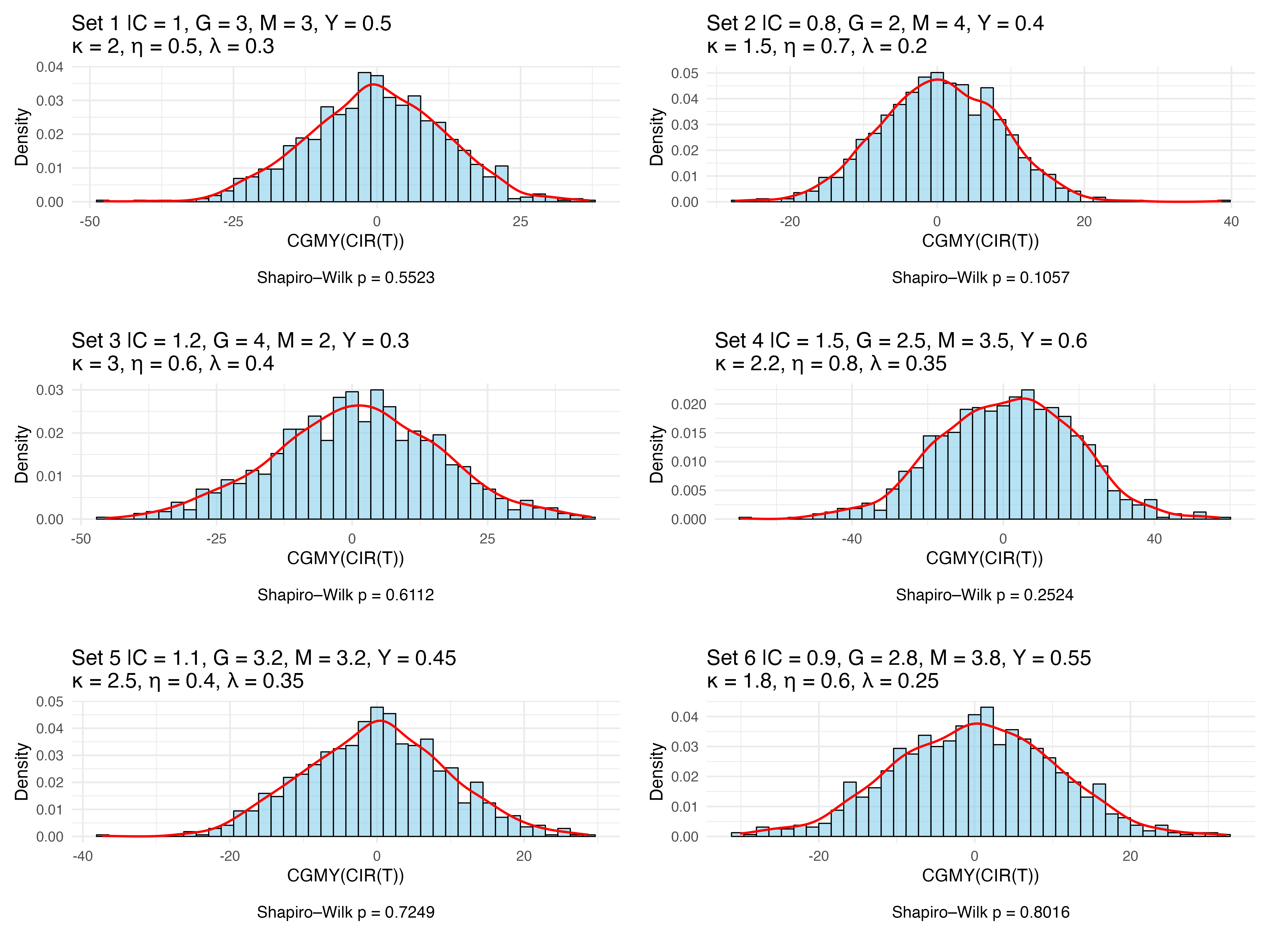}
    \caption{Histograms and kernel density estimates of the subordinated process $\mathrm{CGMY}(\mathrm{CIR}(T))$ under six different parameter settings. Each panel represents 1000 Monte Carlo samples, with the corresponding Shapiro-Wilk p-value displayed to assess normality (see theorem \ref{thm:SBSA_asymp}). Overall, the distributions appear approximately normal, though in some cases, slight deviations due to skewness or tail behavior are noticeable, reflecting the flexible jump structure of the CGMY model.}
    \label{fig:CGMYSA_normality}
\end{figure}

Figure~\ref{fig:CGMYSA_normality} presents the empirical distributions of the subordinated process $\mathrm{CGMY}(\mathrm{CIR}(T))$ across six different parameter configurations. Each subplot shows a histogram along with a kernel density estimate based on 1000 Monte Carlo simulations. The Shapiro–Wilk p-value is also reported for each configuration to assess normality. 

Across all parameter sets, the distributions exhibit bell-shaped and symmetric profiles, indicating that the subordinated process behaves approximately normally. The Shapiro–Wilk p-values range from 0.1953 to 0.9816, none of which fall below the conventional 5\% significance level. This supports the hypothesis that, although the CGMY process is inherently a pure-jump Lévy process, the CIR-based stochastic time change introduces sufficient smoothing to produce an approximately Gaussian outcome in finite samples.
Furthermore, Assumption~\textbf{A1} ensures that the tail of the Lévy measure decays exponentially, i.e., as $e^{-x}$, contributing to the asymptotic normality observed in these simulation results.

Among the six configurations, Sets 1, 3 and 6 exhibit particularly strong agreement with normality, as evidenced by their high Shapiro–Wilk p-values (0.9463 and 0.9816, respectively). In contrast, Set 2 yields a comparatively lower p-value (0.1953), which may be attributed to more pronounced jump activity or asymmetry in the Lévy measure arising from the specific choice of $Y$, $G$, and $M$ parameters. Overall, the simulation results suggest that subordinating CGMY processes with CIR-type stochastic clocks can effectively mitigate the heavy-tailed characteristics, resulting in distributions that closely approximate normality across a broad range of parameter settings.

\begin{figure}[!ht]
    \centering
    \begin{subfigure}[t]{0.7\linewidth}
        \centering
        \includegraphics[width=\linewidth]{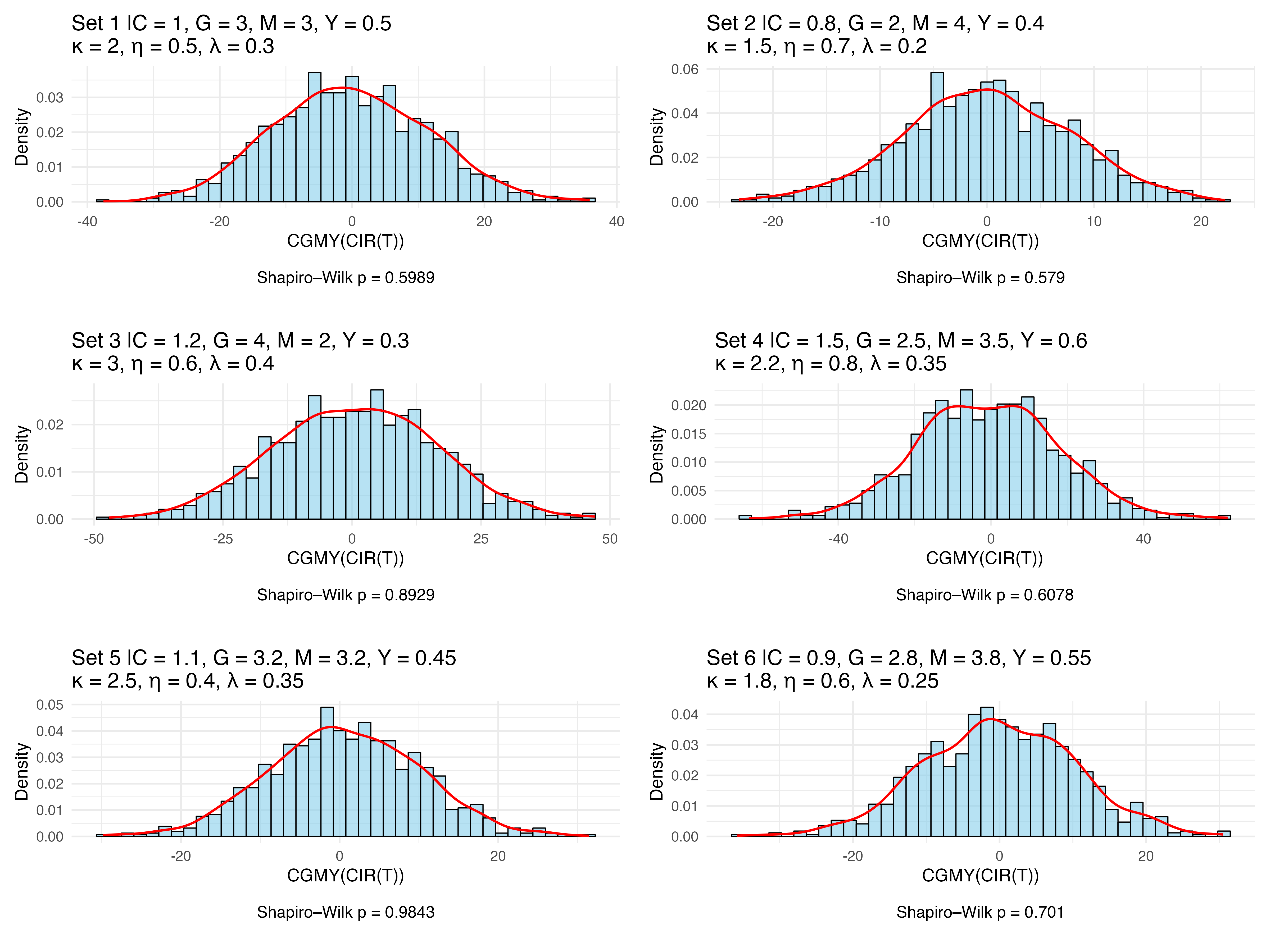}
        \caption{Subordinated process $\mathrm{CGMY}(\mathrm{CKLS}(T))$ with $\alpha = 0.6$.}
        \label{fig:cgmy_ckls_0_6}
    \end{subfigure}
    
    \vspace{1em} % vertical space between subfigures
    
    \begin{subfigure}[t]{0.7\linewidth}
        \centering
        \includegraphics[width=\linewidth]{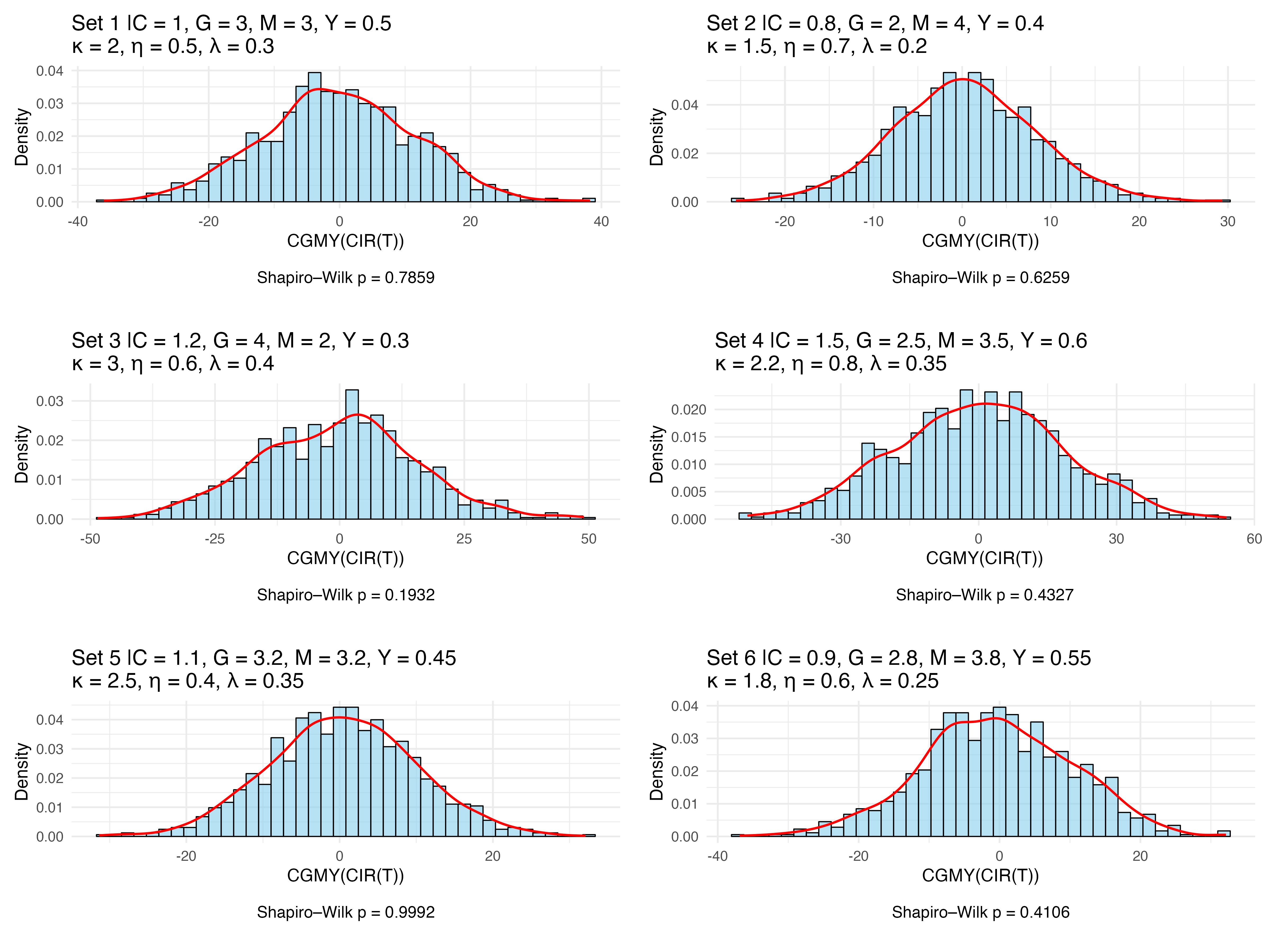}
        \caption{Subordinated process $\mathrm{CGMY}(\mathrm{CKLS}(T))$ with $\alpha = 1$.}
        \label{fig:cgmy_ckls_1}
    \end{subfigure}

    \caption{Histograms and kernel density estimates of the subordinated process $\mathrm{CGMY}(\mathrm{CKLS}(T))$ under six different parameter settings, for two values of the CKLS elasticity parameter, $\alpha \in {0.6, 1}$. Each panel is based on 1000 Monte Carlo simulations and includes the Shapiro–Wilk $p$-value for normality testing. The results indicate that the CKLS-based time change regularizes the CGMY process, yielding distributions that are approximately normal across various configurations, consistent with Theorem~\ref{thm:SBSA_asymp}.}
    \label{fig:cgmy_ckls_alpha}
\end{figure}

Similar to the earlier CIR-based analysis, we now explore the distributional properties of the subordinated CGMY process when the stochastic clock is governed by a CKLS process. Figures~\ref{fig:cgmy_ckls_0_6} and~\ref{fig:cgmy_ckls_1} illustrate the empirical behavior of $\mathrm{CGMY}(\mathrm{CKLS}(T))$ for $\alpha = 0.6$ and $\alpha = 1$, respectively, across six different parameter configurations. Each subplot includes a histogram with a kernel density estimate based on 1000 Monte Carlo simulations, accompanied by the corresponding Shapiro–Wilk $p$-value to assess normality.

For \textbf{$\alpha = 0.6$} (Figure~\ref{fig:cgmy_ckls_0_6}), the distributions exhibit smooth, symmetric, and unimodal shapes. Most $p$-values exceed the 5\% significance level, indicating no strong evidence against normality. In particular, Sets 3, 5, and 6 show strong agreement with Gaussian behavior, with Shapiro–Wilk $p$-values of 0.8929, 0.9843, and 0.701, respectively. Sets 1, 2, and 4 also maintain bell-shaped profiles with moderate $p$-values (0.5989, 0.579, and 0.6078), suggesting that the CKLS subordinator, even with $\alpha = 0.6$, effectively smooths the heavy-tailed structure of the original CGMY process.

Turning to \textbf{$\alpha = 1$} (Figure~\ref{fig:cgmy_ckls_1}), the normality approximation becomes even more evident. Set 5 achieves a near-perfect Gaussian fit with $p = 0.9992$, while Sets 1 and 3 also show favorable results ($p = 0.7859$ and $p = 0.1932$, respectively). Although Sets 2, 4, and 6 yield somewhat lower $p$-values (0.6259, 0.4327, and 0.4106), their empirical densities still resemble bell curves, suggesting only mild deviation from normality. These results underscore that increasing $\alpha$ strengthens the state-dependence in the CKLS diffusion component, yet does not disrupt the asymptotic regularization effect induced by the time change.

Taken together, the results demonstrate that CKLS-based subordination—regardless of the specific value of $\alpha$—is effective in smoothing and regularizing the distributional characteristics of the CGMY process. The emergence of Gaussian-like distributions across all parameter sets supports the theoretical expectation that ergodic stochastic clocks, when employed as time-change mechanisms, enhance the distributional stability of the subordinated process. These findings highlight the practical relevance of CKLS-driven stochastic arrival models for capturing empirical features of financial return data while preserving analytical tractability.

\section{Concluding Remarks}\label{sec:Fut_plan}

 In this paper, we developed a comprehensive framework for analyzing subordinated Brownian motion models where the subordination mechanism is either a classical Lévy subordinator or one governed by a stochastic arrival (SA) process, such as the CIR or CKLS diffusions. We began by reviewing the theory of Lévy processes and proved strong consistency for the Variance Gamma (VG) process under deterministic Gamma subordination. We then extended the analysis to time-changed Brownian motions, establishing both strong consistency and asymptotic normality for the VGSA process, where the time-change is driven by a stochastic process with positive support.

To generalize beyond the Gamma case, we introduced the CGMY process subordinated by SA processes, covering a wider class of pure-jump models. For both VGSA and CGMY-SA models, we rigorously derived sufficient conditions under which the subordinated processes remain consistent and converge in distribution to a normal limit. Special care was taken to ensure that the stochastic clocks (CIR or CKLS) satisfy positivity and ergodicity, which are crucial for the theoretical results to hold.
In continuation, we also defined and explored a new sub-class of the SVLP framework (\citet{svlp}), christened the \emph{Subordinated Brownian motion with Stochastic Arrival} (SBSA) process. We argue that this class incorporates three important features of market microstructure data - diffusion, jump and stochastic arrival - together, thus creating a very flexible model which also remains amenable to inference.   

Finally, we supported our theoretical findings with extensive simulation studies. Using Monte Carlo simulations, we visualized the distributional properties of VGSA and CGMY-SA under different parameter settings and SA types. The results, confirmed through histograms, kernel density estimates, and Shapiro-Wilk tests, consistently show that the time-changed processes exhibit approximately normal behavior across a wide range of configurations. This demonstrates that such subordinated models, despite their jump-driven structure, can be amenable to classical statistical inference when paired with appropriate stochastic arrival mechanisms.

These findings lay the groundwork for future exploration into parameter estimation, model calibration, and financial applications such as option pricing or volatility forecasting using subordinated Lévy models with realistic market features.

\section*{Appendix}\label{sec:app}

\begin{proof}[Proof of Theorem~\ref{theorem:subordi_as}]
    Let $X_t$ be a subordinator, i.e., an almost surely non-decreasing Lévy process. Define $\lfloor t \rfloor = n$ as the greatest integer less than or equal to $t$. Then we can decompose:
    \begin{align*}
        \frac{X_t}{t} = \frac{X_n}{n} \cdot \frac{n}{t} + \frac{X_t - X_n}{t}.
    \end{align*}

    The first term, $\frac{X_n}{n}$, is the average of $n$ i.i.d. increments of the Lévy process. By the strong law of large numbers,
    \[
        \frac{X_n}{n} \xrightarrow{a.s.} \mathbb{E}(X_1), \quad \text{as } n \to \infty,
    \]
    and since $\frac{n}{t} \to 1$ as $t \to \infty$, their product also converges almost surely to $\mathbb{E}(X_1)$.

    Now we analyze the second term. Since $X_t$ is non-decreasing,
    \begin{align*}
        0 \leq \frac{X_t - X_n}{t} \leq \sup_{n < u \leq n+1} \frac{X_u - X_n}{t} \leq \frac{X_{n+1} - X_n}{t}.
    \end{align*}

    Because the increments of $X_t$ are stationary and have finite mean, $X_{n+1} - X_n$ is independent of $t$ and identically distributed as $X_1$. Therefore,
    \[
        \frac{X_{n+1} - X_n}{t} \xrightarrow{a.s.} 0 \quad \text{as } t \to \infty.
    \]
    
    Combining both terms, we conclude that
    \[
        \frac{X_t}{t} \xrightarrow{a.s.} \mathbb{E}(X_1), \quad \text{as } t \to \infty.
    \]
    This completes the proof.
\end{proof}

\begin{proof}[Proof of Lemma~\ref{lemma:vg_as}]
    The Variance Gamma (VG) process can be represented as a Brownian motion with drift subordinated by a Gamma process:
    \[
        VG(t) = \theta G(t) + \sigma \sqrt{G(t)} \, W(t),
    \]
    where $G(t)$ is a Gamma subordinator with
    \[
        G(t) \sim \Gamma\left(\frac{t}{\nu}, \nu\right), \quad \text{so that } \mathbb{E}[G(t)] = t,
    \]
    and $W(t)$ is a standard Brownian motion independent of $G(t)$.

    Dividing both sides by $t$, we get:
    \[
        \frac{VG(t)}{t} = \theta \cdot \frac{G(t)}{t} + \sigma \cdot \frac{\sqrt{G(t)}}{\sqrt{t}} \cdot \frac{W(t)}{t}.
    \]

    From Theorem~\ref{theorem:subordi_as}, we have:
    \[
        \frac{G(t)}{t} \xrightarrow{a.s.} \mathbb{E}[G(1)] = 1, \quad \frac{\sqrt{G(t)}}{\sqrt{t}} \xrightarrow{a.s.} \sqrt{\mathbb{E}[G(1)]} = 1, \quad \frac{W(t)}{t} \xrightarrow{a.s.} 0.
    \]

    Therefore,
    \[
        \frac{VG(t)}{t} \xrightarrow{a.s.} \theta.
    \]

    This proves the almost sure convergence of the scaled VG process.
\end{proof}

\begin{proof}[Proof of Lemma~\ref{lemma:icir_mean_var}]
    The CIR process defined by equation~\eqref{eqn:cir_eqn} has the solution:
    \begin{align*}
        y(t) &= y(0) e^{-\kappa t} + \eta(1 - e^{-\kappa t}) + \lambda e^{-\kappa t} \int_0^t e^{\kappa u} \sqrt{y(u)} \, dW(u).
    \end{align*}
    Thus, the integrated CIR process is given by
    \begin{align}
        T(t) &= \int_0^t y(u) \, du \nonumber \\
        &= \int_0^t \left[ y(0) e^{-\kappa u} + \eta(1 - e^{-\kappa u}) + \lambda e^{-\kappa u} \int_0^u e^{\kappa z} \sqrt{y(z)} \, dW(z) \right] du \nonumber \\
        &= \int_0^t \left[ y(0) e^{-\kappa u} + \eta(1 - e^{-\kappa u}) \right] du + \lambda \int_0^t e^{-\kappa u} \left[ \int_0^u e^{\kappa z} \sqrt{y(z)} \, dW(z) \right] du \nonumber \\
        &= -\frac{y(0)}{\kappa}(e^{-\kappa t} - 1) + \eta \left( t + \frac{1}{\kappa}(e^{-\kappa t} - 1) \right) + \lambda \int_0^t e^{-\kappa u} \left[ \int_0^u e^{\kappa z} \sqrt{y(z)} \, dW(z) \right] du \nonumber \\
        &= u(t) + \lambda I(t), \label{eqn:decom_icir}
    \end{align}
    where
    \begin{align*}
        u(t) &= -\frac{y(0)}{\kappa}(e^{-\kappa t} - 1) + \eta \left( t + \frac{1}{\kappa}(e^{-\kappa t} - 1) \right), \\
        I(t) &= \int_0^t e^{-\kappa u} \left[ \int_0^u e^{\kappa z} \sqrt{y(z)} \, dW(z) \right] du.
    \end{align*}

    We now analyze $I(t)$. Consider a pure diffusion process $\mathcal{W}(t)$; then:
    \begin{align*}
        d\left(e^{-\kappa t} \mathcal{W}(t)\right) &= e^{-\kappa t} \, d\mathcal{W}(t) - \kappa \mathcal{W}(t) e^{-\kappa t} \, dt.
    \end{align*}
    Integrating both sides and rearranging, we obtain:
    \begin{align*}
        \int_0^t \mathcal{W}(u) e^{-\kappa u} \, du &= \frac{1}{\kappa} \left( \int_0^t e^{-\kappa u} \, d\mathcal{W}(u) - e^{-\kappa t} \mathcal{W}(t) + \mathcal{W}(0) \right) \\
        &= \frac{1}{\kappa} \int_0^t (e^{-\kappa u} - e^{-\kappa t}) \, d\mathcal{W}(u).
    \end{align*}

    Substituting $\mathcal{W}(t) = \int_0^t e^{\kappa u} \sqrt{y(u)} \, dW(u)$, we get:
    \begin{align*}
        I(t) &= \frac{1}{\kappa} \int_0^t (e^{-\kappa u} - e^{-\kappa t}) e^{\kappa u} \sqrt{y(u)} \, dW(u) \\
        &= \frac{e^{-\kappa t}}{\kappa} \int_0^t \left( e^{\kappa t} - e^{\kappa u} \right) \sqrt{y(u)} \, dW(u).
    \end{align*}

    Thus, $I(t)$ is a pure diffusion process and satisfies $\mathbb{E}[I(t)] = 0$. Therefore, the expectation of $T(t)$ is:
    \begin{align*}
        \mathbb{E}[T(t)] &= \mathbb{E}[u(t) + \lambda I(t)] = u(t),
    \end{align*}
    which depends explicitly on the initial value $y(0)$.

    Next, we compute $\mathbb{E}[I(t)^2]$ using Itô isometry:
    \begin{align*}
        \mathbb{E}[I(t)^2] &= \frac{1}{\kappa^2} e^{-2\kappa t} \mathbb{E} \left[ \int_0^t (e^{\kappa t} - e^{\kappa z})^2 y(z) \, dz \right] \\
        &= \frac{1}{\kappa^2} e^{-2\kappa t} \int_0^t (e^{\kappa t} - e^{\kappa z})^2 \mathbb{E}[y(z)] \, dz \\
        &= \frac{1}{\kappa^2} e^{-2\kappa t} \int_0^t (e^{\kappa t} - e^{\kappa z})^2 \left[ y(0) e^{-\kappa z} + \eta (1 - e^{-\kappa z}) \right] dz.
    \end{align*}
    Splitting and simplifying:
    \begin{align*}
        \mathbb{E}[I(t)^2] &= \frac{y(0)}{\kappa^2} e^{-2\kappa t} \int_0^t (e^{\kappa t} - e^{\kappa z})^2 e^{-\kappa z} \, dz \\
        &\quad + \frac{\eta}{\kappa^2} e^{-2\kappa t} \int_0^t (e^{\kappa t} - e^{\kappa z})^2 (1 - e^{-\kappa z}) \, dz \\
        &=: w(t).
    \end{align*}

    Therefore, the second moment of $T(t)$ is:
    \begin{align*}
        \mathbb{E}[T(t)^2] &= u(t)^2 + \lambda^2 \mathbb{E}[I(t)^2],
    \end{align*}
    and since $\mathbb{E}[u(t) I(t)] = 0$, the variance simplifies to:
    \begin{align*}
        \operatorname{Var}(T(t)) &= \mathbb{E}[T(t)^2] - \mathbb{E}[T(t)]^2 = \lambda^2 w(t).
    \end{align*}
    This completes the proof.
\end{proof}

% \begin{proof}[Proof of Lemma~\ref{lemma:vgsa_cir_as}]
%     In a similar way as previous, we will take greatest integer function of $T(t)$ i.e. $n = \lfloor {T(t)} \rfloor$ then we can write \begin{align*}
%         \frac{VGSA(t)}{T(t)}&= \frac{VG(T(t))}{T(t)}\\
%         &=\frac{VG(n)-VG(n)+VG(T(t))}{T(t)}\\
%         &= \frac{VG(n)}{n}\cdot \frac{n}{T(t)} + \frac{VG(T(t))-VG(n)}{T(t)}
%     \end{align*}
%     Now using above we see from previous result that \[\frac{VG(n)}{n}\xrightarrow[]{a.s.}\theta\]
%     also, we get $\frac{n}{T(t)}\to1$ as $t\to\infty$, which can be further argued by following,
%     \begin{align*}
%        \frac{n-1}{T(t)} \le \frac{n}{T(t)}\le \frac{T(t)}{T(t)}
%     \end{align*}
%     As, $T(t)\to\infty$, almost surely, hence, $\frac{n}{T(t)}\xrightarrow{a.s.}1$. Hence, the only remaining part to show that, the second part goes to 0 as $t \to \infty$. Using similar argument as previous lemma,
%     \begin{align*}
%         \frac{VG(T(t)-VG(n)}{T(t)}&\le \thrd{\underset{n<u\le n+1}{\sup} \frac{VG(u)-VG(n)}{ \,t}}\,\frac{ \,t}{T(t)}\\
%         &= \underbrace{\frac{VG(n+1)-VG(n)}{\,t}}_{\xrightarrow{a.s.} 0} \,\,\,\underbrace{\frac{t}{T(t)}}_{\xrightarrow{a.s.}{\eta^{-1}}}\\
%         &\xrightarrow{a.s.} 0
%     \end{align*}
%     Hence Proved.
% \end{proof}
\begin{proof}[Proof of Lemma~\ref{lemma:vgsa_cir_as}]
    Following a similar approach as in the previous lemma, let us consider the greatest integer function of the stochastic time change, denoted by $n = \lfloor T(t) \rfloor$. Then we can write:
    \begin{align*}
        \frac{\mathrm{VGSA}(t)}{T(t)} &= \frac{\mathrm{VG}(T(t))}{T(t)} \\
        &= \frac{\mathrm{VG}(n) - \mathrm{VG}(n) + \mathrm{VG}(T(t))}{T(t)} \\
        &= \frac{\mathrm{VG}(n)}{n} \cdot \frac{n}{T(t)} + \frac{\mathrm{VG}(T(t)) - \mathrm{VG}(n)}{T(t)}.
    \end{align*}
    From the earlier result on the almost sure behavior of the VG process, we know that
    \[
        \frac{\mathrm{VG}(n)}{n} \xrightarrow{a.s.} \theta,
    \]
    and also $\frac{n}{T(t)} \to 1$ almost surely as $t \to \infty$. This follows from the inequality:
    \[
        \frac{n - 1}{T(t)} \leq \frac{n}{T(t)} \leq \frac{T(t)}{T(t)} = 1,
    \]
    and the fact that $T(t) \to \infty$ almost surely (since it is the integral of a positive ergodic process). Hence,
    \[
        \frac{n}{T(t)} \xrightarrow{a.s.} 1.
    \]

    It remains to show that the second term vanishes almost surely as $t \to \infty$. As in the earlier lemma, we write:
    \begin{align*}
        \left| \frac{\mathrm{VG}(T(t)) - \mathrm{VG}(n)}{T(t)} \right|
        &\leq \sup_{n < u \leq n+1} \frac{|\mathrm{VG}(u) - \mathrm{VG}(n)|}{t} \cdot \frac{t}{T(t)} \\
        &= \underbrace{\frac{\mathrm{VG}(n+1) - \mathrm{VG}(n)}{t}}_{\xrightarrow{a.s.} 0} \cdot \underbrace{\frac{t}{T(t)}}_{\xrightarrow{a.s.} \eta^{-1}} \\
        &\xrightarrow{a.s.} 0.
    \end{align*}

    Therefore,
    \[
        \frac{\mathrm{VGSA}(t)}{T(t)} \xrightarrow{a.s.} \theta.
    \]
    This completes the proof.
\end{proof}

\begin{proof}[Proof of Theorem~\ref{thm:VGSA_normality}]
    We prove the result using the characteristic function of the VGSA process. Note that:
    \begin{align*}
        \log \mathbb{E}\left[e^{iu\, \mathrm{VGSA}(t)} \,\middle|\, T(t)\right]
        &= -\frac{T(t)}{\nu} \log\left(1 - iu\theta\nu + \frac{\sigma^2 u^2 \nu}{2} \right)
        = T(t) s,
    \end{align*}
    where
    \[
        s = -\frac{1}{\nu} \log\left(1 - iu\theta\nu + \frac{\sigma^2 u^2 \nu}{2} \right).
    \]
    Since $T(t)/t \xrightarrow{a.s.} \eta$ as $t \to \infty$, it follows that:
    \[
        \frac{1}{t} \log \mathbb{E}\left[e^{iu\, \mathrm{VGSA}(t)} \,\middle|\, T(t)\right] \xrightarrow{a.s.} -\frac{\eta}{\nu} \log\left(1 - iu\theta\nu + \frac{\sigma^2 u^2 \nu}{2} \right).
    \]

    Next, consider the moment generating function of $T(t)$:
    \[
        \mathbb{E}\left[e^{s T(t)}\right] \approx \exp\left(s \eta t + \frac{1}{2} s^2 \frac{\eta \lambda^2}{\kappa^2} t \right),
    \]
    using the moment-generating approximation of the CIR time change.

    Now, substitute $u \mapsto u/\sqrt{t}$ in the characteristic exponent and expand $\log(1 + x)$ using a second-order Taylor expansion:
    \begin{align*}
        s &= -\frac{1}{\nu} \log\left(1 - \frac{iu\theta\nu}{\sqrt{t}} + \frac{\sigma^2 u^2 \nu}{2t} \right) \\
        &\approx \frac{1}{\nu} \left( \frac{iu\theta\nu}{\sqrt{t}} - \frac{\sigma^2 u^2 \nu}{2t} + \frac{(iu\theta\nu)^2}{2t} + o\left(\frac{1}{t}\right) \right) \\
        &= \frac{iu\theta}{\sqrt{t}} - \frac{u^2}{2t} \left( \sigma^2 + \nu \theta^2 \right) + o\left(\frac{1}{t} \right).
    \end{align*}

    Similarly, expanding $s^2$ gives:
    \[
        s^2 \approx -\frac{u^2 \theta^2}{t} + o\left(\frac{1}{t} \right).
    \]

    Substituting back into the moment-generating expression, we obtain:
    \begin{align*}
        \mathbb{E}\left[e^{i \frac{u}{\sqrt{t}} \mathrm{VGSA}(t)}\right]
        &\approx \exp\left\{ t \left( \frac{iu\theta \eta}{\sqrt{t}} - \frac{u^2}{2t} \left( \sigma^2 \eta + \nu \theta^2 \eta \right) \right)
        + \frac{1}{2} t \cdot \left( -\frac{u^2 \theta^2 \eta \lambda^2}{\kappa^2 t} \right)
        + o\left(\frac{1}{t} \right) \right\} \\
        &= \exp\left\{ iu \sqrt{t}\, \eta \theta - \frac{u^2}{2} \left( \sigma^2 \eta + \nu \theta^2 \eta + \frac{\lambda^2 \theta^2 \eta}{\kappa^2} \right)
        + o\left( \frac{1}{t} \right) \right\}.
    \end{align*}

    Therefore, the characteristic function of the centered and scaled process,
    \[
        \frac{1}{\sqrt{t}} \left( \mathrm{VGSA}(t) - \eta \theta t \right),
    \]
    converges pointwise to that of a normal distribution with mean $0$ and variance
    \[
        \sigma^2 \eta + \nu \theta^2 \eta + \frac{\lambda^2 \theta^2 \eta}{\kappa^2}.
    \]
    Hence, we have shown convergence in distribution to a Gaussian limit.
\end{proof}

\begin{proof}[Proof of Theorem~\ref{thm:ckls_strong_asymp}]
    The integrated CKLS process defined in equation~\eqref{eqn:iergodic} satisfies the following:
    \begin{align*}
        \mathcal{Y}(t) &= \int_0^t \left[ y(0) e^{-\kappa u} + \eta (1 - e^{-\kappa u}) + \lambda e^{-\kappa u} \int_0^u e^{\kappa z} y(z)^\alpha \, dW(z) \right] du \\
        &= \frac{y(0)}{\kappa}(1 - e^{-\kappa t}) + \eta \left(t + \frac{1}{\kappa}(e^{-\kappa t} - 1) \right) + \lambda \int_0^t \int_0^u e^{-\kappa(u - z)} y(z)^\alpha \, dW(z) \, du \\
        &=: \mathcal{U}(t) + \lambda \mathcal{I}(t).
    \end{align*}

    To study the asymptotic behavior, we focus on the second term:
    \[
        \mathcal{I}(t) = \int_0^t \left( \frac{1}{\kappa} e^{\kappa z} (1 - e^{-\kappa(t - z)}) \right) y(z)^\alpha \, dW(z).
    \]

    Applying Itô isometry, we obtain:
    \begin{align*}
        \mathbb{E}[\mathcal{I}(t)^2] &= \frac{1}{\kappa^2} \int_0^t \left( 1 - e^{-\kappa(t - z)} \right)^2 y(z)^{2\alpha} \, dz \\
        &= \frac{1}{\kappa^2} \int_0^t y(z)^{2\alpha} \, dz - \frac{2}{\kappa^2} \int_0^t e^{-\kappa(t - z)} y(z)^{2\alpha} \, dz + \frac{1}{\kappa^2} \int_0^t e^{-2\kappa(t - z)} y(z)^{2\alpha} \, dz \\
        &=: \mathcal{I}_1(t) - \mathcal{I}_2(t) + \mathcal{I}_3(t).
    \end{align*}

    From the ergodic property of the CKLS process, if $\mathbb{E}[r^{2\alpha}]$ exists (where $r$ follows the stationary distribution described in Lemma~\ref{lemma:ckls_stationary}), then:
    \[
        \frac{1}{t} \mathcal{I}_1(t) \xrightarrow{a.s.} \frac{1}{\kappa^2} \mathbb{E}[r^{2\alpha}].
    \]

    For $\mathcal{I}_2(t)$, using the ergodic theorem and boundedness of the exponential decay:
    \begin{align*}
        \mathcal{I}_2(t) &= \frac{2}{\kappa^2} \int_0^t e^{-\kappa(t - z)} y(z)^{2\alpha} \, dz \\
        &= \frac{2}{\kappa^2} \left( \int_0^t e^{-\kappa(t - z)} \left( y(z)^{2\alpha} - \mathbb{E}[r^{2\alpha}] \right) dz + \mathbb{E}[r^{2\alpha}] \int_0^t e^{-\kappa(t - z)} dz \right) \\
        &\le \frac{2}{\kappa^2} \left( \int_0^t \left| y(z)^{2\alpha} - \mathbb{E}[r^{2\alpha}] \right| dz + \mathbb{E}[r^{2\alpha}] \right),
    \end{align*}
    implying
    \[
        \frac{1}{t} \mathcal{I}_2(t) \xrightarrow{a.s.} 0.
    \]

    A similar way applies to $\mathcal{I}_3(t)$:
    \[
        \frac{1}{t} \mathcal{I}_3(t) \xrightarrow{a.s.} 0.
    \]

    Combining the limits of all three terms:
    \[
        \frac{1}{t} \mathcal{I}(t)^2 \xrightarrow{a.s.} \frac{1}{\kappa^2} \mathbb{E}[r^{2\alpha}].
    \]

    Therefore, by the Martingale Central Limit Theorem (CLT), we obtain:
    \[
        \frac{1}{\sqrt{t}} \mathcal{I}(t) \distconv \mathcal{N}\left(0, \frac{1}{\kappa^2} \mathbb{E}[r^{2\alpha}] \right).
    \]

    Since $\mathcal{Y}(t) = \mathcal{U}(t) + \lambda \mathcal{I}(t)$, we analyze the centered and scaled version:
    \[
        \frac{1}{\sqrt{t}} \left( \mathcal{Y}(t) - \eta t \right) = \frac{\lambda}{\sqrt{t}} \mathcal{I}(t) + \frac{1}{\sqrt{t}} \left( \mathcal{U}(t) - \eta t \right).
    \]
    The first term converges in distribution by the Martingale CLT, and the second term converges almost surely to 0. Hence, by Slutsky's theorem:
    \[
        \frac{1}{\sqrt{t}} \left( \mathcal{Y}(t) - \eta t \right) \distconv \mathcal{N}\left(0, \frac{\lambda^2}{\kappa^2} \mathbb{E}[r^{2\alpha}] \right).
    \]
\end{proof}

\begin{proof}[Proof of Theorem~\ref{thm:SB_asymptotic}]
    We begin by analyzing the characteristic function of the $SB(t)$ process. By definition,
    \begin{align*}
        \mathbb{E}\left[e^{iu \, SB(t)}\right] 
        &= \mathbb{E}\left[e^{iu\theta S(t) - \frac{1}{2} u^2 \sigma^2 S(t)}\right] \\
        &= \mathbb{E}\left[e^{\left(iu\theta - \frac{1}{2} u^2 \sigma^2\right) S(t)}\right].
    \end{align*}
    Let $u' := iu\theta - \frac{1}{2} u^2 \sigma^2$. Then, using the Lévy–Khintchine formula for the Laplace exponent of the subordinator $S(t)$, we obtain:
    \begin{align*}
        \log \mathbb{E}\left[e^{u' S(t)}\right] = t \left( u' \gamma + \int_0^\infty \left(e^{u' x} - 1\right) v(dx) \right),
    \end{align*}
    where $\gamma$ is the drift coefficient and $v$ is the Lévy measure of the subordinator.

    Hence, the characteristic function of $SB(t)$ is:
    \begin{align*}
        \mathbb{E}\left[e^{iu \, SB(t)}\right] 
        = \exp\left\{ t \left( u' \gamma + \int_0^\infty \left( e^{u' x} - 1 \right) v(dx) \right) \right\}.
    \end{align*}

    To analyze the asymptotic behavior, substitute $u$ with $u/\sqrt{t}$. Then:
    \begin{align*}
        u' &= i \frac{u}{\sqrt{t}} \theta - \frac{1}{2} \frac{u^2}{t} \sigma^2.
    \end{align*}
    Applying the second-order Taylor expansion of $e^{u'x}$ around $u'=0$, we get:
    \begin{align*}
        e^{u'x} - 1 
        = u' x + \frac{1}{2} u'^2 x^2 + R\left(\tfrac{1}{t}\right),
    \end{align*}
    where $R\left(\tfrac{1}{t}\right)$ denotes a remainder term such that $R\left(\tfrac{1}{t}\right) \to 0$ as $t \to \infty$, under assumption \textbf{A1}.

    Substituting into the expression for the log-characteristic function:
    \begin{align*}
        \log \mathbb{E}\left[e^{iu \, SB(t)}\right] 
        &= t \left( u' \gamma + \int_0^\infty \left(u' x + \frac{1}{2} u'^2 x^2 + R\left(\tfrac{1}{t}\right) \right) v(dx) \right) \\
        &= u' t \left( \gamma + \int_0^\infty x v(dx) \right) + \frac{1}{2} u'^2 t \int_0^\infty x^2 v(dx) + t \cdot R\left(\tfrac{1}{t}\right).
    \end{align*}

    Expanding the terms:
    \begin{align*}
        \log \mathbb{E}\left[e^{iu \, SB(t)}\right]
        &= i u \sqrt{t} \theta \left( \gamma + \int_0^\infty x v(dx) \right)
        - \frac{1}{2} u^2 \left( \sigma^2 \gamma + \sigma^2 \int_0^\infty x v(dx) + \theta^2 \int_0^\infty x^2 v(dx) \right) \\
        &\quad + R\left(\tfrac{1}{t}\right).
    \end{align*}

    Noting that $\mathbb{E}[S(1)] = \gamma + \int_0^\infty x v(dx)$, we define the limiting variance:
    \[
        \sigma_{sb}^2 := \sigma^2 \mathbb{E}[S(1)] + \theta^2 \int_0^\infty x^2 v(dx).
    \]

    Therefore, we conclude:
    \[
        \log \mathbb{E}\left[e^{iu \, SB(t)}\right] 
        \approx i u \sqrt{t} \theta \mathbb{E}[S(1)] - \frac{1}{2} u^2 \sigma_{SB}^2 + o(1),
    \]
    which implies the following convergence in distribution:
    \[
        \frac{SB(t) - \theta \mathbb{E}[S(1)] t}{\sqrt{t}} \distconv \mathcal{N}(0, \sigma_{SB}^2).
    \]
\end{proof}

\begin{proof}[Proof of Theorem~\ref{thm:SBSA_asymp}]
    For the model $SBSA(t)$, the characteristic function is given by
    \begin{align*}
        \mathbb{E}\left[\exp\left(iu\, SBSA(t)\right)\right] 
        = \mathbb{E}\left[\exp\left(\mathcal{Y}(t) \cdot \left( u'\gamma + \int_0^\infty \left(e^{u'x} - 1\right) v(dx) \right)\right)\right],
    \end{align*}
    where $u' := iu\theta - \frac{1}{2} u^2 \sigma^2$.

    Replacing $u$ by $\frac{u}{\sqrt{t}}$ and using the asymptotic expansion of $\mathcal{Y}(t)$ from Theorem~\ref{thm:ckls_strong_asymp}, we obtain
    \begin{align*}
        \mathbb{E}\left[\exp\left(iu\, SBSA(t)\right)\right] 
        \approx \exp\left\{ \mathcal{S}\,\eta\, t + \frac{1}{2} \mathcal{S}^2 \cdot \frac{\lambda^2}{\kappa^2} \cdot \mathbb{E}(r^{2\alpha}) \, t \right\},
    \end{align*}
    where
    \[
        \mathcal{S} := u'\gamma + \int_0^\infty \left(e^{u'x} - 1\right) v(dx), \quad 
        \text{and} \quad u' = i\frac{u}{\sqrt{t}}\theta - \frac{1}{2} \frac{u^2}{t} \sigma^2.
    \]

    We analyze the exponent term by term. First,
    \begin{align*}
        \mathcal{S}\,\eta\,t 
        &\approx \eta \left[ iu\sqrt{t}\theta\, \mathbb{E}[S(1)] 
        - \frac{1}{2} u^2 \left( \sigma^2\, \mathbb{E}[S(1)] + \theta^2 \int_0^\infty x^2 v(dx) \right) \right] 
        + o\left(\tfrac{1}{t}\right).
    \end{align*}

    For the second-order term:
    \begin{align*}
        \frac{1}{2} \mathbb{E}[r^{2\alpha}]\, \frac{\lambda^2}{\kappa^2}\, t\, \mathcal{S}^2 
        &= \frac{1}{2} \mathbb{E}[r^{2\alpha}]\, \frac{\lambda^2}{\kappa^2}\, t 
        \left[ -\frac{u^2}{t} \theta^2 \left( \gamma + \int_0^\infty x v(dx) \right)^2 + \mathcal{R}\left(\tfrac{1}{t}\right) \right] \\
        &= -\frac{u^2}{2} \cdot \frac{\lambda^2}{\kappa^2} \cdot \theta^2\, \mathbb{E}[r^{2\alpha}]\, \mathbb{E}[S(1)]^2 + o(1),
    \end{align*}
    where $t\mathcal{R}(1/t) := o(1)$.

    Combining both parts, we obtain the expansion:
    \begin{align*}
        \log \mathbb{E}\left[\exp\left(iu\, SBSA(t)\right)\right]
        &= iu\sqrt{t}\, \eta \theta\, \mathbb{E}[S(1)] \\
        &\quad - \frac{u^2}{2} \left( \eta \sigma^2\, \mathbb{E}[S(1)] 
        + \eta \theta^2 \int_0^\infty x^2 v(dx) 
        + \theta^2 \frac{\lambda^2}{\kappa^2} \mathbb{E}[r^{2\alpha}]\, \mathbb{E}[S(1)]^2 \right) + o(1).
    \end{align*}

    Define:
    \begin{align*}
        \sigma^2_{SB} &= \sigma^2\, \mathbb{E}[S(1)] + \theta^2 \int_0^\infty x^2 v(dx), \\
        \sigma^2_1 &= \eta\, \sigma^2_{SB} + \theta^2\, \frac{\lambda^2}{\kappa^2}\, \mathbb{E}[r^{2\alpha}]\, \mathbb{E}[S(1)]^2.
    \end{align*}

    Therefore, we conclude:
    \begin{align*}
        \frac{SBSA(t) - t \eta \theta\, \mathbb{E}[S(1)]}{\sqrt{t}} 
        \distconv \mathcal{N}(0, \sigma^2_1).
    \end{align*}
\end{proof}

% \nocite{*}
\printbibliography

\end{document}

\section{Appendix}
\newpage

Trade Volume VGSA Process

\[V_t = \theta G(T(t)) + \sigma W(G(T(t)))\]

Where, \[G(T(t))\sim \Gamma(T(t), \nu)\]
Also $T(t)$ is integrated CIR process, defined by 
\[T(t) = \int_0^ty(u)du\]
where $y(u)$ is CIR process defined by \begin{align*}
    d y(t) = \kappa(\eta -y(t))+\lambda \sqrt{y(t)}dW_t
\end{align*}

At first we try to find expectation and variance of CIR process. Solution to the CIR process is given by  
\begin{align*}
    y(t) & =y(0) e^{-k t}+\eta\left(1-e^{-k t}\right)+\lambda e^{-k t} \int_0^{t} e^{ku} \sqrt{y(u)}dW(u)
\end{align*}

Hence $T(t)$ can be written as\begin{align*}
    T(t) &= \int_0^t y(u) du \\&=-\frac{y(0)}{k} (e^{-k t}-1)+\eta\left(t+\frac{1}{k}(e^{-k t}-1)\right)+\lambda \int_0^t  \int_0^{u} e^{-k( u-z)}\sqrt{y(z)}dW(z)du\\
    &= u(t) +\lambda I(t)
\end{align*}

Before going into that we will restructure the 2nd part,
\begin{align*}
    I(t) &=   \int_0^t  \int_0^{u} e^{-k( u-z)}\sqrt{y(z)}dW(z)\,du\\
    &=  \underset{0\le z\le u \le t}{\int\int} e^{-k( u-z)}\sqrt{y(z)}dW(z)\,du\\
    & =  \int_0^t  \int_z^{t}  e^{-k( u-z)}\sqrt{y(z)}du\,dW(z)\\
    & =  \int_0^t \thrd{\frac{1}{k} e^{kz}\fst{e^{-kz}-e^{-kt}}}\sqrt{y(z)}dW(z)\\
&=\frac{e^{-kt}}{k}  \int_0^t {\fst{e^{kt}-e^{zt}}}\sqrt{y(z)}dW(z)
\end{align*}

We will first try to derive the mean and variance of integrated CIR process. As $I(t)$ is diffusion part hence $\E(I(t)) = 0$. Hence \[\E\fst{T(t)} = -\frac{y(0)}{k} (e^{-k t}-1)+\eta\left(t+\frac{1}{k}(e^{-k t}-1)\right) = u(t)\]

Now we will try to calculate the $I^2(t)$ part, explicitly. By ito-isometry we can write following,
\begin{align*}
    \E\fst{I(t)^2} &= \frac{1}{k^2} e^{-2 k t}\E\thrd{ \int_0^t\left[e^{k t}-e^{k z}\right]^2 {y(z)} d z }\\
    &= \frac{1}{k^2} e^{-2 k t} \int_0^t\left[e^{k t}-e^{k z}\right]^2 \E\fst{y(z)} d z \\
    &= \frac{1}{k^2} e^{-2 k t} \int_0^t\left[e^{k t}-e^{k z}\right]^2\thrd{y(0) e^{-k z}+\eta\left(1-e^{-k z}\right)} d z \\
    &= \frac{y(0)}{k^2}e^{-2kt}\int_0^t \left[e^{k t}-e^{k z}\right]^2e^{-kz}dz + \frac{\eta}{k^2}e^{-2kt} \int_0^t \left[e^{k t}-e^{k z}\right]^2dz - \frac{\eta}{k^2}e^{-2kt}\int_0^t \left[e^{k t}-e^{k z}\right]^2e^{-kz}dz\\
    &= \frac{y(0)}{k^2} \thrd{\frac{-2 k t e^{-k t}+1-e^{-2 k t}}{k}} + \frac{\eta}{k^2}\thrd{\frac{(2 k t-3)+4 e^{-k t}-e^{-2 k t}}{2 k}} -\frac{\eta}{k^2}\thrd{\frac{-2 k t e^{-k t}+1-e^{-2 k t}}{k}}\\
    & = \frac{y(0)-\eta}{k^2} \thrd{\frac{-2 k t e^{-k t}+1-e^{-2 k t}}{k}}+ \frac{\eta}{k^2}\thrd{\frac{(2 k t-3)+4 e^{-k t}-e^{-2 k t}}{2 k}} = w(t)\;\;(Say)
\end{align*}
Hence $\E(T(t)^2) = u(t)^2+ \lambda^2 \E \thrd{{I(t)}^2}$, the cross-product term will be 0. 
Suppose  $V_t$ is defined as earlier then the value, to calculate mean and variance of VGSA process, \begin{align*}
    \E\fst{V_t} &= \E\thrd{\E\thrd{V_t|G(T(t))} } = \E\thrd{\theta G(T(t))}\\ 
    \E\fst{V_t^2} & = \E\thrd{\E\thrd{V_t^2|G(T(t)) }} = \E\thrd{\theta^2 G(T(t))^2+\sigma^2 G(T(t))}
\end{align*}
Hence we will find the value of $\E\thrd{G(T(t))}$ and $\E\thrd{G(T(t))^2}$. 
\begin{align*}
    \E\thrd{G(T(t))|T(t)} = \frac{T(t)}{\nu}\\
    \E\thrd{G(T(t))^2|T(t)} = \frac{T(t)}{\nu^2}+ \frac{T(t)^2}{\nu^2} 
\end{align*}

Hence using expectation and variance of integrated CIR process we can find the following,\begin{align*}
    \E\thrd{G(T(t))} &= \frac{\E (T(t))}{\nu} = \frac{u(t)}{\nu}\\
    \E\thrd{G(T(t))^2}&= \frac{\E\thrd{T(t)}}{\nu^2}+ \frac{\E \thrd{T(t)^2}}{\nu^2} \\
    & = \frac{u(t)}{\nu^2}+\frac{w(t)}{\nu^2}
\end{align*}
\begin{align*}
    \theta^2\nu Y(t)+\theta^2Y(t)^2+\sigma^2Y(t)\\
    \theta^2\nu u(t)+ \theta^2(u(t)^2+\lambda^2 w(t))+\sigma^2u(t)-\theta^2u(t)^2
\end{align*}

\textbf{Theorem} For every Lévy process $L=\left(L(t)\right)_{0 \leq t \leq T}$, we have that
(4.7)

$$
\begin{aligned}
\mathbb{E}\left[\mathrm{e}^{i u L(t)}\right] & =\mathrm{e}^{t \psi(u)} \\
& =\exp \left[t\left(i b u-\frac{u^2 c}{2}+\int_{\mathbb{R}}\left(\mathrm{e}^{i u x}-1-i u x 1_{\{|x|<1\}}\right) v\mathrm{d} x)\right)\right]
\end{aligned}
$$

where $\psi(u)$ is the characteristic exponent of $L(1)$, a random variable with an infinitely divisible distribution.

\textbf{Subordinator:} A subordinator is an almost surely increasing Lévy process. For a Lévy process $L$ to be a subordinator, its Lévy triplet must satisfy: $v-\infty, 0) = 0$, $c = 0$, $\int_{(0,1)} x v\mathrm{d}x) < \infty$, and $\gamma = b - \int_{(0,1)} x v\mathrm{d}x) > 0$.

The Lévy-Khintchine formula for a subordinator is:
$$
\mathbb{E}\left[\mathrm{e}^{i u L(t)}\right] = \exp\left[t\left(i u \gamma + \int_{(0, \infty)}\left(\mathrm{e}^{i u x} - 1\right) v\mathrm{d}x)\right)\right].
$$

For example, in the case of the Variance Gamma (VG) process, the formula becomes:
\begin{align*}
\mathbb{E}\left[\mathrm{e}^{i u L(t)}\right] &= \exp\left[t\left(i u \gamma + \int_{-\infty}^\infty \left(\mathrm{e}^{i u x} - 1\right) vx) \mathrm{d}x\right)\right],
\end{align*}
where the Lévy density is:
\begin{align*}
vx) =  \frac{r}{2x} \mathrm{e}^{-\lambda_-x} \mathds{1}_{x>0}+ \frac{r}{2|x|} \mathrm{e}^{-\lambda_+|x|} \mathds{1}_{x<0},
\end{align*}
with $\lambda_\pm := \frac{\sqrt{\theta^2 + \sigma^2} \pm \theta}{\sigma^2}$.

This VG process is a pure jump process characterized by infinite jumps in any interval of time, capturing high activity consistent with the normal distribution. Another example of this type process is given by CGMY process with Lévy density, \begin{align*}
    \begin{aligned}
&vx)=C \frac{e^{-G x}}{x^{1+Y}} \mathds{1}_{x>0}+\frac{e^{-M|x|}}{|x|^{1+Y}} \mathds{1}_{x<0}
\end{aligned}
\end{align*}
Here, the constants satisfy $C > 0$, $G \geq 0$, $M \geq 0$, and $Y < 2$.  
To incorporate clustering, the Variance Gamma (VG) process was extended to the VGSA process. The VGSA process modifies the VG process, a homogeneous Lévy process, by introducing stochastic volatility. This is achieved by evaluating the VG process at a continuous time change driven by the integral of a Cox–Ingersoll–Ross (CIR) process, which represents the instantaneous time change. The CIR model is defined as:  
\begin{align*}
    dy(t) = \eta(a - y(t))dt + \sigma \sqrt{y(t)}dW_t,
\end{align*}
subject to the non-negativity constraint $2a\eta \geq \sigma^2$. Additionally, the process is strictly positive if $2a\eta > \sigma^2$, with $a > 0$ and $\eta > 0$.

The CIR process exhibits two notable properties under $2a\eta > \sigma^2, a>0, \eta>0$ condition: it is strictly positive, which ensures that the process $Y(t) = \int_0^t y(u)du$ is strictly increasing, and the CIR process is ergodic. Therefore, by incorporating the clustering characteristic into the VG process via the CIR process, it becomes essential to study its behavior as $t$ becomes large. 

Suppose $\mathcal{Y}(t)$ is a stochastic process with $\mathbb{P}\left(\mathcal{Y}(t) > 0 \; \forall \, t > 0\right) = 1$, i.e., almost surely $\mathcal{Y}(t) > 0$ for all $t > 0$. Furthermore, the process is ergodic, implying that $\mathcal{Y}(t)$ has a limiting stationary distribution, denoted by $g(dy)$. If $\mathbb{E}(y) = \int_{\mathbb{R^+}} y\, g(dy)$ exists, then:  
\begin{align*}
    \mathcal{T}(t) = \frac{1}{t} \int_0^t \mathcal{Y}(u) \, du \to \mathbb{E}(y)
\end{align*}
as $t \to \infty$.  

Now, consider a time-changed subordinator process $L(\mathcal{T}(t))$, where $\mathcal{T}(t)$ is defined above. Then:  
\begin{align*}
    \mathbb{E}\left[\mathrm{e}^{i u L(\mathcal{T}(t))} \mid \mathcal{T}(t)\right] &= \exp\left[\mathcal{T}(t)\left(i u \gamma + \int_{(0, \infty)}\left(\mathrm{e}^{i u x} - 1\right) v\mathrm{d}x)\right)\right], \\
    \frac{1}{t} \log \mathbb{E}\left[\mathrm{e}^{i u L(\mathcal{T}(t))} \mid \mathcal{T}(t)\right] &= \frac{\mathcal{T}(t)}{t} \left(i u \gamma + \int_{(0, \infty)}\left(\mathrm{e}^{i u x} - 1\right) v\mathrm{d}x)\right) \\
    &\xrightarrow{t \to \infty} \mathbb{E}(y) \left(i u \gamma + \int_{(0, \infty)}\left(\mathrm{e}^{i u x} - 1\right) v\mathrm{d}x)\right) \;\; \text{a.s.}
\end{align*}

Hence, asymptotically the process will behave as:
\begin{align*}
    \mathbb{E}\left[\mathrm{e}^{i u L(\mathcal{T}(t))}\right] \sim \exp\left[t \mathbb{E}(y)\left(i u \gamma + \int_{(0, \infty)}\left(\mathrm{e}^{i u x} - 1\right) v\mathrm{d}x)\right)\right]
\end{align*}
as $t \to \infty$.
For example, if we look at the VGSA process,

Some facts about VG process. Usually VG process is generated by $X_t = \theta Gamma(t)+\sigma W(Gamma(t))$, where $Gamma(t) \sim \gamma(t/\nu,\nu)$ distribution. Under this we can calculate the characteristic function of VG process as

\begin{align*}
    \E\fst{e^{iuX_t}} = \fst{1-i\theta u\nu+\frac{1}{2}\sigma^2u^2\nu}^{-\frac{t}{\nu}}
\end{align*}
\begin{lemma}
    \begin{align*}
        \frac{X_t}{t}\to \theta \;\;a.s.
    \end{align*}
\end{lemma}
\begin{lemma}\begin{align*}
    \frac{1}{\sqrt{t}}\fst{X_t-t\theta} \distconv N(0, \nu\theta^2+\sigma^2)
\end{align*}
\end{lemma}
Important thing is to note that, the gamma process is strictly increasing process. So suppose we take $Gamma(t)\sim \gamma(f(t)/\nu,\nu)$ and $f(t)/t\to c$(constant), then also, we get slightly modified versions of the above lemmas,
\begin{lemma}
    \begin{align*}
    X_t/t \to c\theta\;\; a.s.
\end{align*}
\end{lemma}
\begin{lemma}
    \begin{align*}
        \frac{1}{\sqrt{t}}\fst{X_t-tc\theta} \distconv N(0, c^2(\nu\theta^2+\sigma^2))
    \end{align*}
\end{lemma}
Now consider VGSA process($Z_t$) where the function $f$ is replaced by ICIR process denoted by $Y_t$. Also we know that $Y_t/t$ converges almost surely. So it is easily seen that \begin{align*}
    \log\E\fst{e^{iuZ_t}|Y_t}  = -\frac{Y_t}{\nu}\log\fst{1-i\theta u\nu+\frac{1}{2}\sigma^2u^2\nu}
\end{align*}
We will show that \begin{align*}
    \frac{1}{t}\log\E\fst{e^{iuZ_t}|Y_t}\to -\frac{\eta}{\nu}log\fst{1-i\theta u\nu+\frac{1}{2}\sigma^2u^2\nu}\;\;a.s.
\end{align*}
where $\eta = \lim_{t\to\infty}\frac{1}{t}\E(Y_t)$.
Hence in large sample $VGSA$ process behaves like VG process.

\nocite{*}
\printbibliography

\end{document}